\def\be{\mathbf e}
\def\CC{{\mathbb C}}
\def\LL{{\mathbb L}}
\def\RR{{\mathbb R}}
\def\ZZ{{\mathbb Z}}
\def\QQ{{\mathbb Q}}
\def\PP{{\mathbb P}}
\def\be{{\mathbf e}}
\def\bvarepsilon{{\pmb{ \varepsilon}}}
\def\ee{{\mathrm e}}
\def\ii{{\mathfrak i}}
\def\cA{{\mathcal A}}
\def\cB{{\mathcal B}}
\def\cD{{\mathcal D}}
\def\cE{{\mathcal E}}
\def\cF{{\mathcal F}}
\def\cG{{\mathcal G}}
\def\cI{{\mathcal I}}
\def\cK{{\mathcal K}}
\def\cL{{\mathcal L}}
\def\cM{{\mathcal M}}
\def\cU{{\mathcal U}}
\def\cV{{\mathcal V}}
\def\cW{{\mathcal W}}
\def\cX{{\mathcal X}}
\def\cY{{\mathcal Y}}
\def\fS{{\mathfrak S}}
\def\fb{{\mathfrak b}}
\def\fc{{\mathfrak c}}
\def\fe{{\mathfrak e}}
\def\fs{{\mathfrak s}}
\def\ft{{\mathfrak t}}
\def\ri{{\mathrm i}}
\def\rr{{\mathrm r}}
\def\tH{{\widetilde H}}
\def\tK{{\widetilde K}}
\def\tX{{\widetilde X}}
\def\tS{{\widetilde S}}
\def\tX{{\widetilde X}}
\def\tv{{\widetilde v}}
\def\tgamma{{\widetilde \gamma}}
\def\tvarepsilon{{\widetilde \varepsilon}}
\def\tcI{{\widetilde{\mathcal I}}}
\def\tcV{{\widetilde{\mathcal V}}}
\def\tSS{{{\widetilde{\mathbb S}}}}
\def\hc{{\widehat c}}
\def\hB{{\widehat B}}
\def\hJ{{\widehat J}}
\def\hX{{\widehat X}}
\def\hfc{{\widehat{\mathfrak c}}}
\def\hfe{{\widehat{\mathfrak e}}}
\def\hfs{{\widehat{\mathfrak s}}}
\def\hcA{{\widehat {\mathcal A}}}
\def\hcB{{\widehat {\mathcal B}}}
\def\hcE{{\widehat {\mathcal E}}}
\def\hcM{{\widehat {\mathcal M}}}
\def\hcV{{\widehat {\mathcal V}}}
\def\hkappa{{\widehat \kappa}}
\def\hvarepsilon{{\widehat \varepsilon}}
\def\hnu{{\widehat \nu}}
\def\hvarpi{{\widehat \varpi}}
\def\hGamma{{\widehat \Gamma}}
\def\Mat{\mathrm{Mat}}
\def\GL{\mathrm{GL}}
\def\im{\mathfrak{Im}}
\def\re{\mathfrak{Re}}
\def\even{\mathrm{even}}
\def\odd{\mathrm{odd}}
\def\cn{{\mathrm {cn}}}
\def\sn{{\mathrm {sn}}}
\def\dn{{\mathrm {dn}}}
\def\al{{\mathrm {al}}}
\def\qed{\hbox{\vrule height6pt width3pt depth0pt}}
\def\nuI#1{{\nu_{#1}}}
\def\trp{{\, {}^t\negthinspace}}
\def\book#1{\rm{#1}, }
\def\paper#1{\textit{#1}, }
\def\jour#1{\rm{#1}, }
\def\yr#1{({\rm{#1}) }}
\def\vol#1{\textbf{#1}}
\def\pages#1{\rm{#1}}
\def\publaddr#1{\rm{#1}, }
\def\publ#1{\rm{#1}, }
\def\by#1{{\rm{#1}, }}
\newtheorem{theorem}{Theorem}[section]
\newtheorem{definition}[theorem]{Definition}
\newtheorem{proposition}[theorem]{Proposition}
\newtheorem{corollary}[theorem]{Corollary}
\newtheorem{remark}[theorem]{Remark}
\newtheorem{lemma}[theorem]{Lemma}
\newtheorem{assumption}[theorem]{Assumption}
\newtheorem{condition}[theorem]{Condition}
\def\book#1{\rm{#1}, }
\def\paper#1{\textit{#1}, }
\def\jour#1{\rm{#1}, }
\def\yr#1{({\rm{#1}) }}
\def\vol#1{\textbf{#1}}
\def\pages#1{\rm{#1}}
\def\publaddr#1{\rm{#1}, }
\def\publ#1{\rm{#1}, }
\def\by#1{{\rm{#1}, }}
\begin{document}


\title[Closed real hyperelliptic plane curves of FGMKdV equation of genus $g$]{Closed real plane curves of hyperelliptic solutions of focusing gauged modified KdV equation of genus $g$}

\author{Shigeki Matsutani}
%

\date{\today}

\begin{abstract}
The real part of the focusing modified Korteweg-de Vries (MKdV) equation defined over the complex field $\CC$ is reduced to the focusing gauged MKdV (FGMKdV) equation.
In this paper, we construct the real hyperelliptic solutions of FGMKdV equation in terms of data of the hyperelliptic curves of genus $g$ and demonstrate the closed hyperelliptic plane curves of genus $g=5$ whose curvature obeys the FGMKdV equation by extending the previous results of genus three (Matsutani, {\it{J. Geom. Phys}} \vol{215} (2025) 105540).
These are a generalization of Euler's elasticae. 
\end{abstract}




\maketitle
{\bf{Keywords:}}
{modified KdV equation, gauged modified KdV equation, real hyperelliptic solutions, hyperelliptic curves, focusing MKdV equation}



\section{Introduction}\label{sec:1}

In 1744, Euler drew typical real plane curves $Z: [0,1] \to \CC$ whose curvature $k$ obeys the focusing static modified Korteweg-de Vries (FSMKdV) equation \cite{Euler44,PG, M25E},
\begin{equation}
a\partial_{s}k
           +\frac{3}{2}k^2 \partial_s k
+\partial_{s}^3 k=0, \quad
a\partial_{s}\phi
           +\frac{1}{2}(\partial_s \phi)^3
+\partial_{s}^3 \phi=0,
\label{4eq:SMKdV_k}
\end{equation} 
where $\partial_s := \partial/\partial s$, $s$ is the arclength and $\phi := \log \partial_s Z/\ii$ is the tangential angle of the curve i.e., the curvature $k = \partial_s \phi$.
The symbol $\ii$ represents the imaginary unit.
Recently, it was discovered that Euler's derivation (\ref{4eq:SMKdV_k}) is profound and timeless \cite{Euler44,M25E}.
We note that $|\partial_s Z|=1$.
Euler's elasticae, as solutions of (\ref{4eq:SMKdV_k}), are well-written in terms of the theory of the elliptic function and $Z$ is expressed by the Weierstrass zeta function as in \cite{Mat10}.

Euler classified the shapes of $Z$ as the solutions of the minimal problem of the energy $E[Z]:=$ $\displaystyle{\int k^2 ds}$, so-called the Euler-Bernoulli energy functional under the isometric (non-stretching) condition.
They are the ground states of $E[Z]$.
Euler also concluded that there are only two cases, the circle and the figure-eight as closed elasticae, as in Figure \ref{fg:shape1744_24} (a) for the solution of (\ref{4eq:SMKdV_k}) \cite{Euler44}.

This paper provides a generalization of Euler's figure-eight of elliptic curve of genus one to that of hyperelliptic curve of genus $g$ by extending the previous paper on genus three \cite{Ma24d} as illustrated in Figure \ref{fg:shape1744_24} (b)--(d).
We demonstrate the case of the genus five as the focusing gauged modified Korteweg-de Vries (FGMKdV) flow as in Figures \ref{fg:shapeA} --  \ref{fg:shapeD} in Sections 5 and 6.

\bigskip

The modified Korteweg-de Vries (MKdV) equation is given by
\begin{equation}
\partial_t q \pm 6q^2 \partial_s q +\partial_s^3 q =0,
\label{eq:MKdV1}
\end{equation}
where $\partial_t := \partial/\partial t$ and $\partial_s := \partial/\partial s$ for the real axes $t$ and $s$.
The \lq\lq$+$\rq\rq case in (\ref{eq:MKdV1}) is referred to as the focusing MKdV (FMKdV) equation and the \lq\lq$-$\rq\rq case is referred to as the defocusing MKdV equation due to \cite{ZakharovShabat}. 
The FMKdV equation appeared as an integrable system in geometry.
By investigating an integrable system, Konno, Ichikawa and Wadati \cite{KIW, KIW2}, and Ishimori \cite{Ishimori, Ishimori2} found plane curves that a half of their curvature $k/2$, ($k=\partial_s \phi$) obeys the FMKdV equation (\ref{eq:MKdV1}), i.e., 
\begin{equation}
\partial_t \phi
           +\frac{1}{2}(\partial_s \phi)^3
+\partial_{s}^3 \phi=0,
\label{4eq:MKdV1phi}
\end{equation}
where $\phi$ is the tangential angle of the plane curve, which is known as the loop soliton.
In this paper, we also call (\ref{4eq:MKdV1phi}) the focusing modified KdV (FMKdV) equation, although we referred to (\ref{4eq:MKdV1phi}) as the focusing modified pre-KdV equation in \cite{Ma24b}.
They showed that (\ref{4eq:MKdV1phi}) can be regarded as a generalization of Euler's elastica \cite{KIW, KIW2, Ishimori, Ishimori2}.

Independently, Goldstein and Petrich showed that the isometric deformation of a real curve on a plane is connected with the recursion relations of the FMKdV hierarchy \cite{GoldsteinPetrich1}.
After \cite{GoldsteinPetrich1}, the MKdV flow has been studied by several authors \cite{CKPP, Langer, LangerPerline, MP16, P}, which provides quite interesting geometrical properties.
Following them, Previato and the author of this paper found that the Goldstein-Petrich scheme and the FMKdV equation \cite{GoldsteinPetrich1} play an essential role in the isometric flows of the plane curves and in the statistical mechanics of the elasticae \cite{P, Mat97, MP16}.
The orbits in the equi-energy sets (of excited states) $\{Z \ |\ E[Z]=E_0\}$ of the elasticae rather than the ground state are described by the solutions of (\ref{4eq:MKdV1phi}) \cite{Mat97, MP16}.

As in the study of Vologodskii and Cozzarelli \cite{VC}, the excited states of $E[Z]$ of elasticae play the crucial roles in the shapes of the supercoiled DNA due to the thermal effect.
The paper \cite{M24a} showed that finding the hyperelliptic solutions of the FMKdV equation of genus three based on the previous results \cite{Ma24b} is critical to reproduce the shapes of the supercoiled DNA in observed in laboratories.
It provides a fascinating relationship between modern mathematics and life sciences.
Thus, it is crucial to find the real hyperelliptic solutions of the FMKdV equation of the higher genus (see Figure \ref{fg:shape_vsDNA}).

\bigskip

For a hyperelliptic curve $X$ of genus $g$ given by $y^2 -(-1)^g (x-b_0)(x-b_1)\cdots(x-b_{2g})=0$ for $b_i \in \CC$, due to Baker \cite{Baker97, Baker03, BuMi2, BEL97b, M25}, we find the hyperelliptic solutions of the KdV equation as $\wp_{gg}(u)=x_1+ \cdots +x_g$ for $((x_1,y_1), \ldots, (x_g, y_g)) \in S^g X$ ($g$-th symmetric product of $X$) as a function of $u \in \CC^g$ through the Abel-Jacobi map $v: S^g X \to J_X$ for the Jacobi variety $J_X$, $u=v((x_1,y_1), \ldots, (x_g, y_g))$.
With the help of the Miura map, it is not difficult to find the hyperelliptic solutions of the FMKdV equation over $\CC$ \cite{Mat02c}, i.e.,
\begin{equation}
\partial_{u_{g-1}}\psi
-\frac{1}{2}(\lambda_{2g}+3b_0) \partial_{u_g}\psi
+\frac{1}{8}
\left(
\partial_{u_g} \psi\right)^3
+\frac{1}{4}\partial_{u_g}^3 \psi = 0,
\label{1eq:fprMKdV}
\end{equation}
where  $\psi:=\log( (b_0-x_1)\cdots(b_0-x_g))/\ii$.

Let $a$-th component $u_a$ of $u \in \CC^g$ be decomposed to its real and imaginary parts, $u_a = u_{a\,\rr} + \ii u_{a\, \ri}$, $\partial_{u_a}=\frac{1}{2}(\partial_{u_a\,\rr}-\ii \partial_{u_a\,\ri})$, $(a=1, \ldots, g)$, and let $\psi=\psi_\rr + \ii \psi_\ri$ of the real valued functions  $\psi_\rr$ and $\psi_\ri$.
The Cauchy-Riemann relations mean $\partial_{u_a\,\rr} \psi_{\rr}=\partial_{u_a\,\ri} \psi_{\ri}$ and $\partial_{u_a\,\rr} \psi_{\ri}=-\partial_{u_a\,\ri} \psi_{\rr}$, and thus $\partial_{u_a}\psi = \partial_{u_a\,\rr} \psi_{\rr}-\ii \partial_{u_a\,\ri} \psi_{\rr}$ or $\partial_{u_a}\psi = \partial_{u_a\,\rr} (\psi_{\rr}+\ii \psi_{\ri})$ $(a=1,\ldots,g)$. 
Since (\ref{1eq:fprMKdV}) contains the cubic term $(\partial_{u_g} \psi)^3=(\partial_{u_g\,\rr} \psi)^3$, it generates the term $-3(\partial_{u_g\, \rr} \psi_\ri)^2  \partial_{u_g\, \rr} \psi_\rr$, which behaves like a gauge potential.
Thus we encounter coupled differential relations from the FMKdV equation over $\CC$ (\ref{1eq:fprMKdV}) as
\begin{equation}
\begin{split}
&(\partial_{u_{g-1}\, \rr}-
A^+(u)\partial_{u_g\, \rr})\psi_\rr
           +\frac{1}{8}
\left(\partial_{u_g\, \rr} \psi_\rr\right)^3
+\frac{1}{4}\partial_{u_g\, \rr}^3 \psi_\rr=0,\\
-&(\partial_{u_{g-1}\, \ri}-
A^-(u)\partial_{u_g\, \ri})\psi_\rr
           +\frac{1}{8}
\left(\partial_{u_g\, \ri} \psi_\rr\right)^3
+\frac{1}{4}\partial_{u_g\, \ri}^3 \psi_\rr=0,
\label{1eq:gaugedMKdV2}
\end{split}
\end{equation}
where $A^+(u)=(\lambda_{2g}+3b_0+\frac{3}{4}(\partial_{u_{g}\, \rr}\psi_\ri)^2)/2$ and $A^-(u)=(\lambda_{2g}+3b_0-\frac{3}{4}(\partial_{u_{g}\, \rr}\psi_\rr)^2)/2$.
We refer to (\ref{1eq:gaugedMKdV2}) as the focusing gauged MKdV (FGMKdV) equations. 
Here we take both cases $(u_{g\,\rr},u_{g-1\,\rr}) \in \RR^2$ and $\ii(u_{g\,\ri},u_{g-1\,\ri})$ in $\ii\RR^2$ in (\ref{1eq:gaugedMKdV2}).

\cite{MP22} gave that to obtain the real solution of the FMKdV equation (\ref{4eq:MKdV1phi}) is to find the situation that the following conditions are satisfied for the solutions of (\ref{1eq:gaugedMKdV2}):
\begin{enumerate}

\item[CI] $\prod_{i=1}^g |x_i - b_0|=$ a constant $(> 0)$,

\item[CII] $d u_{g\,\ri}=d u_{g-1\, \ri}=0$ or $d u_{g\,\rr}=d u_{g-1\, \rr}=0$, and

\item[CIII] $A^\pm(u)$ is a real constant:
if $A^\pm(u)=$ constant, (\ref{1eq:gaugedMKdV2}) is reduced to (\ref{4eq:MKdV1phi}), i.e., $\psi_\rr=\phi$. 
\end{enumerate}

\bigskip
\begin{figure}
\begin{center}

\includegraphics[height=0.3\hsize, angle=90]{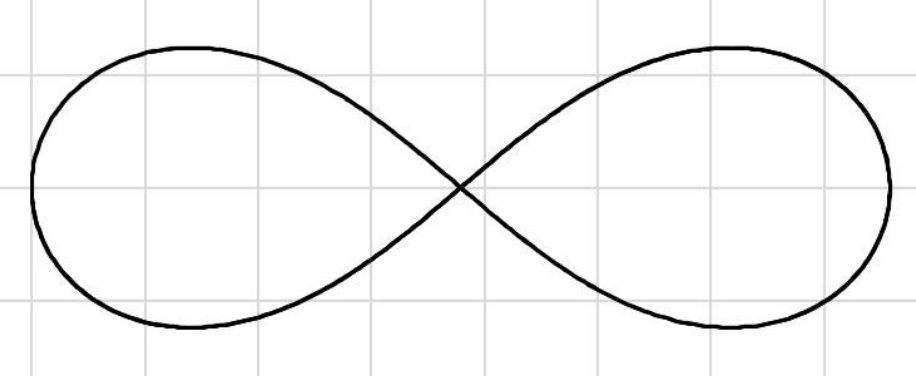}
\hskip 0.05\hsize
\includegraphics[width=0.5\hsize]{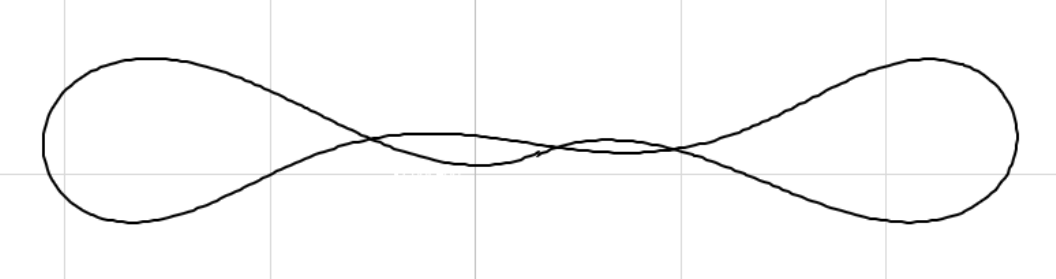}

(a)
\hskip 0.45\hsize
(b)
\hskip 0.1\hsize\ 

\smallskip
\smallskip
\smallskip

\includegraphics[width=0.35\hsize,]{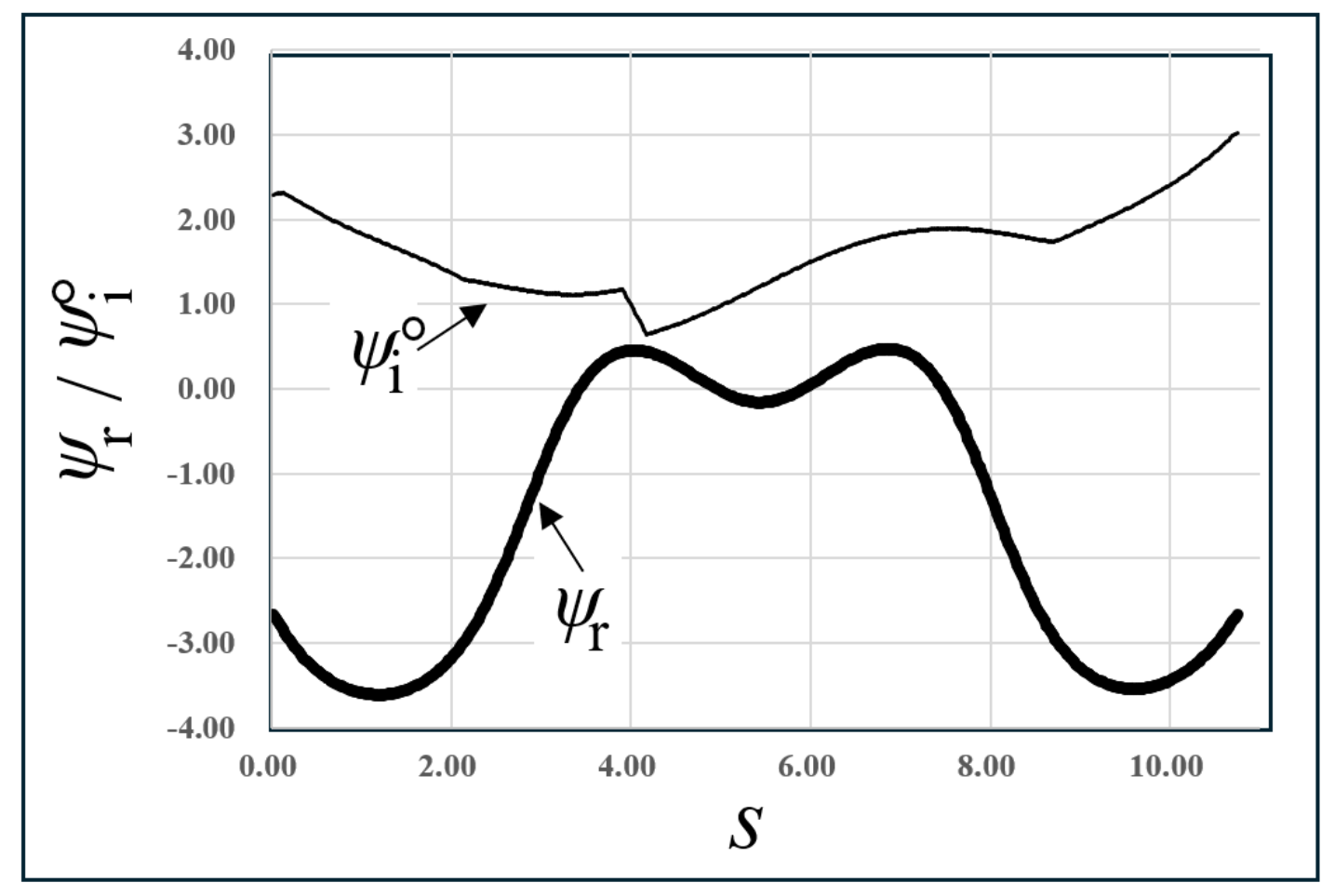}
\hskip 0.05\hsize
\includegraphics[width=0.35\hsize,]{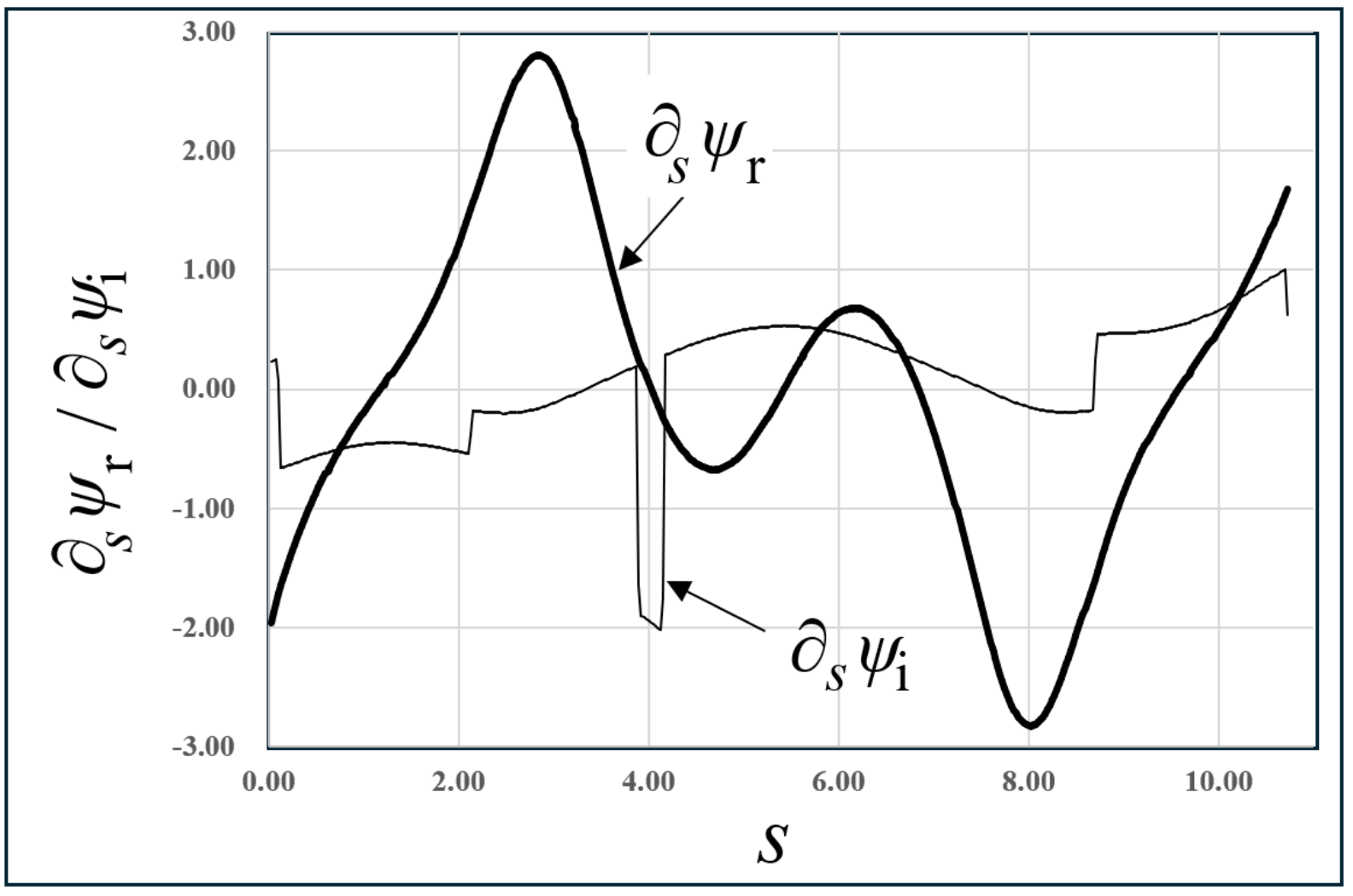}

(c)
\hskip 0.35\hsize
(d)

\smallskip
\smallskip
\smallskip

\end{center}

\caption{
Euler's figure-eight and previous result in \cite{Ma24d}: (a): Euler's figure-eight \cite{Euler44}, (b) $Z(s)$, (c): $\psi_\rr$ and $\psi_\ri^\circ$. (d): $\partial_s\psi_\rr$ and $\partial_s\psi_\ri$.
}\label{fg:shape1744_24}
\end{figure}

\bigskip

However, it is quite difficult to find the real plane $\{(u_g, u_{g-1})\}$ in the Jacobi variety $J_X$ which corresponds to the preimage $((x_1, y_1), \ldots, (x_g, y_g)) \in S^g X$ of $v$ with the unit circle valued $\ee^{\ii \psi}=(b_0-x_1)\cdots(b_0-x_g)\in \mathrm{U}(1):=\{\zeta\in \CC \ |\ |\zeta|=1\}$.
\bigskip

In the previous papers \cite{Ma24d}, we showed the real hyperelliptic solutions of the FGMKdV equation for the case of the genus three only by considering the conditions CI and CII.

The aim of this paper is to extend the results for genus three in \cite{Ma24d} to general genus g in a straightforward way: we will also consider only the conditions CI and CII following \cite{Ma24d}.
We will show that the extension is naturally achieved by investigating the angle expressions of the hyperelliptic integrals in \cite{Ma24b} as in Section 3, and a modification of the elementary symmetric polynomials defined in Definition \ref{def:varepsilon}: we refer to the modification of the elementary symmetric polynomials as shifted elementary symmetric polynomials in this paper.
Since the computations of the polynomials are slightly complicated, we first demonstrates them of the genus five case in Subsection \ref{ssc:g=5} and prepare Appendix, which provides the detail explanations, to show the general $g$ case in Subsection \ref{ssc:g=g}.

Further, we demonstrate the closed plane curves of genus five numerically in Section 5.
There we display a generalization of Euler's result and the results from the previous paper \cite{Ma24d}.

As in \cite{Mat02c}, the symmetric polynomials determine a fundamental property of the Jacobi matrices between the cotangent spaces $T^* S^g X$ and $T^* J_X$ of $S^g X$ and $J_X$, respectively.
Weierstrass and Baker essentially studied the correspondence between $T^* S^g X$ and $T^* J_X$ to obtain the differential identities on hyperelliptic curves of genus $g$, which are related to the sine-Gordon equation and the KdV hierarchy.
They implicitly and explicitly used the elementary symmetric polynomials.
(Recently, such a picture is sophisticated from a modern point of view and extended by Buchstaber and Mikhailov \cite{BuMi1, BuMi2}.)

In this paper, we apply their method to the configurations satisfying the condition CI or  $(b_0-x_1)\cdots(b_0-x_g)\in \mathrm{U}(1)$ to obtain the real FGMKdV equation as a real differential identities of genus $g$, although we implicitly employed this approach in \cite{Ma24b,Ma24d} for the genus three case.
The process of extending from genus three to genus $g$ is essentially straightforward, although the treatment of the shifted elementary symmetric polynomials is slightly more complicated in the extension (thus, we present the polynomials of genus five in Subsection \ref{ssc:g=5} and the general cases in the Appendix).

\bigskip

The content is following:
Section 2 reviews the hyperelliptic solutions of the FMKdV equation over $\CC$ of genus $g$ in Theorem \ref{4th:MKdVloop} following \cite{Mat02c,Mat02b,MP15,MP22} for the hyperelliptic curve $X$ of genus $g$.

Since the real solutions are related to the covering $\hX$ of $X$ and the angle expression, Section 3 is devoted to the double covering $\hX$ of $X$ and the angle expression of the hyperelliptic curves of genus $g$.
To obtain the real hyperelliptic solutions of the FGMKdV equation, we make use of Assumption \ref{Asmp} and the assumptions in Definition \ref{Asmp2}, after which the angle expression becomes applicable.
Under the assumptions, in order to express the map from $v_* : T_* S^g X \to T_* J_X$ and its inverse, we explicitly evaluate the maps of genus $g=5$ in Subsection \ref{ssc:g=5}, and we extend them to general $g$ in Subsection \ref{ssc:g=g}

Section 4 provides local and global properties of the solutions of the FGMKdV equation (\ref{1eq:gaugedMKdV2}).
In Theorem \ref{th:4.2}, we present the local properties.
Based on these, we also show the global behavior of the hyperelliptic solutions of genus $g$ of the FGMKdV equation in Theorems \ref{4th:reality_gga} and \ref{pr:solgMKdV}.
Theorems \ref{th:4.2}, \ref{4th:reality_gga} and \ref{pr:solgMKdV} are our main theorems, showing that we have the real solutions of the FGMKdV equation of genus $g>2$.

Section 5 uses Theorem \ref{pr:solgMKdV} to draw plane curves as snapshots, and  displays the numerical results of the closed plane curves whose tangential angle obeys the FGMKdV equation (\ref{1eq:gaugedMKdV2}).

Section 6 gives the conclusion of this paper.

In this paper, Lemmas  \ref{lm4.1} and \ref{4lm:dudphig} are pivotal. However, their proofs are rather intricate. To elucidate these proofs, we have prepared  Appendix that provides detailed explanations of shifted elementary symmetric polynomials $\varepsilon$, as defined in Definition \ref{def:varepsilon}.

\section{Hyperelliptic solutions of the FMKdV equation$/\CC$ of genus $g$}
\label{sec:HESGE}

To obtain the relation (\ref{1eq:fprMKdV}),  we handle a hyperelliptic curve $X$ of genus $g\ge 3$ over $\CC$,
\begin{equation}
X=\left\{(x,y) \in \CC^2 \ |
\ y^2 -(-1)^g f(x)=0\right\}
\cup \{\infty\},
\label{4eq:hypC}
\end{equation}
where $f(x):=(x-b_0)(x-b_1)(x-b_2)\cdots(x-b_{2g})$, and $b_i$'s are mutually distinct complex numbers.
Let $\lambda_{2g}=\displaystyle{-\sum_{i=0}^{2g} b_i}$ and $S^k X$ be the $k$-th symmetric product of the curve $X$. 
Further, we introduce the Abelian covering $\tX$ of $X$ by abelianization of the path-space of $X$ divided by the homotopy equivalence, $\kappa_X: \tX \to X$, ($\gamma_{P, \infty} \mapsto P$) \cite{Baker98, Wei54, MP15, M25}.
Here $\gamma_{P, \infty}$ means a path from $\infty$ to $P$.
We also consider an embedding $\iota_X : X \to \tX$ and will fix it.
$S^k \tX$ also means the $k$-th symmetric product of the space $\tX$. 
The Abelian integral $\tv:=\displaystyle{\begin{pmatrix} v_1\\ \vdots \\ v_g\end{pmatrix}} : S^g  \tX \to \CC^g$ is defined by
\begin{equation}
\tv_i(\gamma_1, \ldots, \gamma_g)=\sum_{j=1}^g
 \tv_i(\gamma_j), \quad
\tv_i(\gamma_{(x,y), \infty}) = \int_{\gamma_{(x,y), \infty}} \nuI{i},\quad
\nuI{i} = \frac{x^{i-1}d x}{2y}.
\label{4eq:firstdiff}
\end{equation}
Then we have the Jacobian $J_X$, $\kappa_J: \CC^g \to J_X=\CC^g/\Gamma_X$, where $\Gamma_X$ is the lattice generated by the period matrix for the standard homology basis of $X$.
Due to the Abel-Jacobi theorem \cite{FarkasKra}, we also have the bi-rational map $v$ from $S^g X$ to $J_X$ by letting $v:=\tv$ modulo $\Gamma_X$.
We refer to $v$ as the Abel-Jacobi map.

\cite{Mat02c} shows the hyperelliptic solutions of the MKdV equation over $\CC$, derived by a natural extension of the investigations of Weierstrass \cite{Wei54} and Baker \cite{Baker03}.

\begin{definition}\label{4df:KdV_def2}
Let $\{(x_i, y_i)\}_{i=1, \ldots, g} \in S^g X$.
\begin{enumerate}
\item 
We define the polynomials associated with $F(x)=(x-x_1) \cdots (x-x_g)$ by
\begin{equation}
\pi_i(x) := \frac{F(x)}{x-x_i}=\chi_{i,g-1}x^{g-1} +\chi_{i,g-2} x^{g-2}
            +\cdots+\chi_{i,1}x+\chi_{i,0}.
\label{4eq:KdV_def2.1}
\end{equation}

\item We define $g\times g$ matrices as follows.
{\small{
$$
 \cX\!:=\! 
\left[
\begin{array}{cccc}
     \chi_{1,0} & \chi_{1,1} & \cdots & \chi_{1,g-1}  \\
      \chi_{2,0} & \chi_{2,1} & \cdots & \chi_{2,g-1}  \\
   \vdots & \vdots & \ddots & \vdots  \\
    \chi_{g,0} & \chi_{g,1} & \cdots & \chi_{g,g-1}
     \end{array}\right]
,\ 
\cY\!:=\! 
\left[\begin{array}{cccc}
     y_1 & \ & \ & \  \\
      \ & y_2& \ & \   \\
      \ & \ & \ddots   & \   \\
      \ & \ & \ & y_g  \end{array}
\right],
$$
$$
	\cF'\!:=\!
\left[
\begin{array}{cccc}F'(x_1)& &  &   \\
       & F'(x_2)&  &    \\
       &  &\ddots&    \\
       &  &  &F'(x_{g})\end{array}
\right],\ 
\cU\!:=\! 
\left[
\begin{array}{cccc} 1 & 1 & \cdots & 1 \\
                   x_1 & x_2 & \cdots & x_g \\
                   x_1^2 & x_2^2 & \cdots & x_g^2 \\
                    \vdots& \vdots & \ddots & \vdots \\
                   x_1^{g-1} & x_2^{g-1} & \cdots & x_g^{g-1}
                 \end{array}\right],
$$
}}
where $F'(x):=d F(x)/d x$.
\end{enumerate}

\end{definition}

Using these and the properties of the Vandermonde matrix, we obtain the following \cite{Baker03, Mat02c}:

\begin{lemma}\label{4lm:KdV1}
Let $u = \tv(\iota_X((x_1, y_1), \ldots, (x_g, y_g)))$.

\begin{enumerate}
\item 
$$
      \left[\begin{array}{c} d u_1\\
                 d u_2\\
                 \vdots\\
                 d {u_g}
         \end{array}\right]
   =\frac{1}{2} \cU \cY^{-1}
   \left[\begin{array}{c} d {x_1}\\
                  d {x_2}\\
                 \vdots\\
                 d {x_g}
         \end{array}\right].
$$

\item The inverse matrix of $\cU$ is given as $\cX$, i.e, 
$\cX \cU=\cF^{\prime}$.

\item For $\partial_{u_i}:=\partial/\partial{u_i}$ and
$\partial_{x_i}:=\partial/\partial{x_i}$, we have
$$
      \left[\begin{array}{c} \partial_{u_1}\\
                 \partial_{u_2}\\
                 \vdots\\
                 \partial_{u_g}
         \end{array}\right]
   =2 \trp \cX \cF^{\prime -1}\cY
   \left[\begin{array}{c} \partial_{x_1}\\
                 \partial_{x_2}\\
                 \vdots\\
                 \partial_{x_g}
         \end{array}\right],
\quad
\frac{\partial x_i}{\partial u_r}=
\frac{2y_i}{F'(x_i)} \chi_{i, r-1}, 
$$
\begin{equation}
\frac{\partial}{\partial u_g }=
         \sum_{i=1}^g \frac{2y_i}{F'(x_i)} \frac{\partial}{\partial x_i},
           \quad
	\frac{\partial}{\partial u_{g-1} }=
         \sum_{i=1}^g \frac{2y_i\chi_{i,g-2}}{F'(x_i)}
              \frac{\partial}{\partial x_i}.
\label{4eq:hyp_dxdu}
\end{equation}
\end{enumerate}
\end{lemma}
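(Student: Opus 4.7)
My plan is to treat the three assertions in sequence, with the interpolation property of $\pi_i$ carrying the main argument in part (2) and parts (1) and (3) reducing to linear-algebraic bookkeeping. For part (1) I simply unpack the definitions: since $\tv_i=\sum_j \int^{(x_j,y_j)}_\infty \nuI{i}$ with $\nuI{i}=x^{i-1}dx/(2y)$, differentiating the endpoints yields
\begin{equation*}
du_i \;=\; \sum_{j=1}^g \frac{x_j^{\,i-1}}{2y_j}\,dx_j .
\end{equation*}
Since $(\cU)_{ij}=x_j^{i-1}$ and $\cY$ is diagonal with entries $y_j$, this is exactly the $i$-th component of $\tfrac12\,\cU\cY^{-1}(dx_1,\ldots,dx_g)^{t}$, establishing (1).

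Part (2) is the algebraic heart. I would compute the $(i,j)$-entry of the product $\cX\cU$ directly as
\begin{equation*}
(\cX\cU)_{ij} \;=\; \sum_{k=0}^{g-1}\chi_{i,k}\,x_j^{\,k} \;=\; \pi_i(x_j).
\end{equation*}
From $\pi_i(x)=F(x)/(x-x_i)$ one reads off the interpolation property: $\pi_i(x_j)=0$ when $j\ne i$ because $F(x_j)=0$, while $\pi_i(x_i)=F'(x_i)$ follows from differentiating the identity $F(x)=(x-x_i)\pi_i(x)$ at $x=x_i$. Hence $\cX\cU$ is diagonal with entries $F'(x_i)$, which is the claim $\cX\cU=\cF'$. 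As a byproduct, I would record the normalisation $\chi_{i,g-1}=1$ (the leading coefficient of $\pi_i$), which is needed in part (3).

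For part (3) I invert part (1) using part (2). Since $\cX\cU=\cF'$ gives $\cU^{-1}=\cF'^{-1}\cX$, combining with (1) yields
\begin{equation*}
(dx_1,\ldots,dx_g)^{t} \;=\; 2\,\cY\,\cF'^{-1}\,\cX\,(du_1,\ldots,du_g)^{t},
\end{equation*}
which reads off componentwise as $\partial x_i/\partial u_r = 2y_i\chi_{i,r-1}/F'(x_i)$, using that $\cY$ and $\cF'$ are diagonal. The chain rule $\partial_{u_r}=\sum_i(\partial x_i/\partial u_r)\partial_{x_i}$ then assembles into the displayed matrix expression for $\partial_u$, and specialising to $r=g$ with $\chi_{i,g-1}=1$ reproduces the formula for $\partial_{u_g}$.

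There is no genuine obstacle; the only conceptual observation needed is that $\cX\cU$ realises the Lagrange-type interpolation matrix evaluating each $\pi_i$ at each node $x_j$, after which the rest is linear algebra with diagonal matrices.
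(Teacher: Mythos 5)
Your proof is correct; the paper gives no proof of this lemma (it simply cites Baker 1903 and Matsutani 2002), and your argument --- differentiate the Abelian integrals for (1), use the Lagrange-interpolation identity $\pi_i(x_j)=F'(x_i)\delta_{ij}$ for (2), and invert via $\cU^{-1}=\cF^{\prime -1}\cX$ for (3) --- is precisely the standard one behind those references. One useful byproduct of your derivation: the general formula $\partial x_i/\partial u_r=2y_i\chi_{i,r-1}/F'(x_i)$ gives the coefficient $\chi_{i,g-2}$ (not $\chi_{i,g-1}$) in the expression for $\partial_{u_{g-1}}$ in (\ref{4eq:hyp_dxdu}), and since $\chi_{i,g-1}=1$ the coefficient as printed there would collapse $\partial_{u_{g-1}}$ into $\partial_{u_g}$, so that is evidently a typo in the paper and your argument yields the intended statement.
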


By applying these differential operators $\displaystyle{
\frac{\partial}{\partial u_g }}$ and $\displaystyle{
\frac{\partial}{\partial u_{g-1} }}$ to $\log F(b_0)$, we obtain the following theorem.

\begin{theorem}\label{4th:MKdVloop} {\textrm{\cite{Mat02b}}}
For $((x_1,y_1),\cdots, (x_g,y_g)) \in S^g X$, the fixed branch point $(b_0, 0)$, and $u:= v( (x_1,y_1),$ $\cdots,(x_g,y_g))$,
$$
\displaystyle{
   \psi(u) :=-\ii \log (b_0-x_1)(b_0-x_2)\cdots(b_0-x_g)
}
$$
satisfies the MKdV equation over $\CC$,
\begin{equation}
(\partial_{u_{g-1}}-\frac{1}{2}
(\lambda_{2g}+3b_0)
          \partial_{u_{g}})\psi
           +\frac{1}{8}
\left(\partial_{u_g} \psi\right)^3
 +\frac{1}{4}\partial_{u_g}^3 \psi=0,
\label{4eq:loopMKdV2}
\end{equation}
where $\partial_{u_i}:= \partial/\partial u_i$ as an differential identity in $S^g X$ and $\CC^g$.
\end{theorem}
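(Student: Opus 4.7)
Set $\Psi := \ii\psi = \log F(b_0) = \sum_{j=1}^{g}\log(b_0-x_j)$; since $(-\ii)^3 = \ii$, the MKdV equation (\ref{4eq:loopMKdV2}) is equivalent to
\[
\partial_{u_{g-1}}\Psi - \tfrac{1}{2}(\lambda_{2g}+3b_0)\partial_{u_{g}}\Psi - \tfrac{1}{8}(\partial_{u_{g}}\Psi)^{3} + \tfrac{1}{4}\partial_{u_{g}}^{3}\Psi = 0,
\]
which I aim to verify as a differential identity on $S^g X$. My approach is direct computation: I will use Lemma~\ref{4lm:KdV1}(3) to rewrite each of $\partial_{u_g}^k\Psi$ and $\partial_{u_{g-1}}\Psi$ as a rational sum over the affine coordinates $(x_i,y_i)$, expand everything over a common denominator, and reduce the result to residue-type identities on the hyperelliptic curve.

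The first derivatives follow immediately once one notes that $\chi_{i,g-1}=1$ and $\chi_{i,g-2}=x_i - \sigma_1$ (with $\sigma_1 = \sum_j x_j$):
\[
\partial_{u_g}\Psi = -\sum_{i=1}^{g}\frac{2y_i}{(b_0-x_i)F'(x_i)},\qquad
\partial_{u_{g-1}}\Psi = -\sum_{i=1}^{g}\frac{2y_i(x_i-\sigma_1)}{(b_0-x_i)F'(x_i)}.
\]
To compute $\partial_{u_g}^{3}\Psi$ I iterate $\partial_{u_g} = \sum_i \tfrac{2y_i}{F'(x_i)}\partial_{x_i}$, using the on-curve derivations $\partial_{x_j} y_i = \delta_{ij}(-1)^g f'(x_i)/(2y_i)$, $\partial_{x_i}F'(x_i) = \tfrac{1}{2}F''(x_i)$, and $\partial_{x_j}F'(x_i) = -F'(x_i)/(x_i-x_j)$ for $j\ne i$. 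The resulting triple sum I organize by the coincidence pattern of its three indices, and I expand $(\partial_{u_g}\Psi)^3$ analogously as a triple sum split by coincidence pattern.

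After symmetrization, the three-distinct-index sectors of $\tfrac{1}{4}\partial_{u_g}^3\Psi$ and $\tfrac{1}{8}(\partial_{u_g}\Psi)^3$ cancel, thanks to the Lagrange interpolation identity $\sum_i \pi_i(x)/F'(x_i)=1$ for polynomials of degree less than $g$. The surviving contributions contain factors $y_i^2 = (-1)^g f(x_i) = (-1)^g(x_i-b_0)\prod_{k\ge 1}(x_i - b_k)$; the factor $(x_i-b_0)$ cancels the singular denominator $(b_0-x_i)^{-1}$, leaving a sum of the form $\sum_i 2y_i\, Q(x_i)/[(b_0-x_i)F'(x_i)]$ with $Q$ an explicit polynomial whose coefficients depend on $b_0,\ldots,b_{2g}$. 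Matching this against $\partial_{u_{g-1}}\Psi - \tfrac{1}{2}(\lambda_{2g}+3b_0)\partial_{u_g}\Psi$ via $\lambda_{2g} = -\sum_k b_k$ together with the residue identities $\sum_i x_i^k/F'(x_i) = 0$ for $0\le k \le g-2$ and $\sum_i x_i^{g-1}/F'(x_i) = 1$ then closes the argument. The main obstacle is the combinatorial bookkeeping of the triple sums, in particular the symmetric cancellation of the three-distinct-index sector of $\partial_{u_g}^3\Psi$ against the cubic term $(\partial_{u_g}\Psi)^3$; the special role of the branch point $b_0$ (namely $f(b_0)=0$) is what forces the residual polynomial $Q$ to have degree low enough for the residue identities to close the computation, and the overall scheme is the natural $g$-variable generalization of the classical identities of Weierstrass-Baker \cite{Wei54, Baker03}.
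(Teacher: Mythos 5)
Your proposal is correct and follows essentially the same route as the paper, which establishes the theorem simply by applying the operators $\partial_{u_g}$ and $\partial_{u_{g-1}}$ of Lemma \ref{4lm:KdV1} to $\log F(b_0)$ and deferring the computation to \cite{Mat02b}; your outline supplies the standard ingredients of that computation (the on-curve derivation rules, the cancellation of the distinct-index sectors between $\tfrac14\partial_{u_g}^3\Psi$ and $\tfrac18(\partial_{u_g}\Psi)^3$, the cancellation of $(x_i-b_0)$ against $y_i^2$, and the Lagrange/residue identities). One small point in your favor: you correctly use $\chi_{i,g-2}=x_i-\sigma_1$ as the coefficient in $\partial_{u_{g-1}}$, whereas the paper's display (\ref{4eq:hyp_dxdu}) carries a typo ($\chi_{i,g-1}$ in place of $\chi_{i,g-2}$).
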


We, here, emphasize the difference between the FMKdV equations (\ref{4eq:MKdV1phi}) over $\RR$ and (\ref{4eq:loopMKdV2}) over $\CC$.
In (\ref{4eq:MKdV1phi}), $\phi$ is a real valued function over $\RR^2$ but $\psi$ in (\ref{4eq:loopMKdV2}) is a complex valued function over $\CC^2 \subset \CC^g$.
The difference is crucial since our ultimate goal is to obtain solutions of (\ref{4eq:MKdV1phi}), not (\ref{4eq:loopMKdV2}).
However, the latter is expressed well in terms of the hyperelliptic function theory.

As mentioned in \cite[(11)]{MP22}, we describe the difference.
By introducing real and imaginary parts, $ u_a = u_{a\,\rr} + \ii u_{a\,\ri}$, $(a=1,2,\ldots,g)$, and $ \psi = \psi_{\rr} + \ii \psi_{\ri}$, the real and imaginary part of (\ref{4eq:loopMKdV2}) are reduced to the FGMKdV equations with the gauge fields $A^+(u)=(\lambda_{2g}+3b_0+\frac{3}{4}(\partial_{u_{g}\, \ri}\psi_\rr)^2)/2$,
$A^-(u)=(\lambda_{2g}+3b_0-\frac{3}{4}(\partial_{u_{g}\, \rr}\psi_\rr)^2)/2$,
\begin{eqnarray}
(\partial_{u_{g-1}\, \rr}-
A^+(u)\partial_{u_{g}\, \rr})\psi_\rr
           +\frac{1}{8}
\left(\partial_{u_g\, \rr} \psi_\rr\right)^3
+\frac{1}{4}\partial_{u_g\, \rr}^3 \psi_\rr&=0, \nonumber\\
-(\partial_{u_{g-1}\, \ri}-
A^-(u)\partial_{u_{g}\, \ri})\psi_\rr
          +\frac{1}{8}
\left(\partial_{u_g\, \ri} \psi_\rr\right)^3
+\frac{1}{4}\partial_{u_g\, \ri}^3 \psi_\rr&=0
\label{4eq:gaugedMKdV2}
\end{eqnarray}
by the Cauchy-Riemann relations $\partial_{u_a\,\rr} \psi_{\rr}=\partial_{u_a\,\ri} \psi_{\ri}$ and $\partial_{u_a\,\rr} \psi_{\ri}=-\partial_{u_a\,\ri} \psi_{\rr}$ as mentioned in \cite[(11)]{MP22} and (\ref{1eq:gaugedMKdV2}).
We note that $\partial_{u_a} \psi = \partial_{u_a\, \rr} (\psi_{\rr} + \ii \psi_{\ri})$ because $\partial_{u_a} =(\partial_{u_a\, \rr}-\ii\partial_{u_a\, \ri})/2$, and thus $(\partial_{u_g} \psi)^3$ contains the term $-3(\partial_{u_g\,\rr}\psi_{\ri})^2 \partial_{u_g\,\rr}\psi_{\rr}$.
In the derivation of the second relation in (\ref{4eq:gaugedMKdV2}), we used the relations $\partial_{u_a} \psi =\partial_{u_a, \ri} \psi_\ri-\ii\partial_{u_a, \ri} \psi_\rr$ and
$\partial_{u_a}^3 \psi = -\partial_{u_a, \ri}^3 \psi_\ri+\ii\partial_{u_a, \ri}^3 \psi_\rr$.
We also note that the latter one has an alternative expression as defocusing gauged MKdV equation,
\begin{equation}
(\partial_{u_{g-1}\, \rr}-
A^-(u)\partial_{u_{g}\, \rr})\psi_\ri
          -\frac{1}{8}
\left(\partial_{u_g\, \rr} \psi_\ri\right)^3
+\frac{1}{4}\partial_{u_g\, \rr}^3 \psi_\ri=0,
\label{4eq:gaugedMKdV2II}
\end{equation}
even though we will not touch this expression.

\bigskip
A solution of (\ref{4eq:MKdV1phi}) in terms of the data in Theorem \ref{4th:MKdVloop} must satisfy  the following conditions \cite{MP22}:

\begin{condition}\label{cnd}
{\rm{
\begin{enumerate}

\item[CI] $\prod_{i=1}^g |x_i - b_0|=$ a constant $(> 0)$ in Theorem \ref{4th:MKdVloop},

\item[CII] $d u_{g\,\ri}=d u_{g-1\, \ri}=0$ or $d u_{g\, \rr}=d u_{g-1\, \rr}=0$ in Theorem \ref{4th:MKdVloop}, and

\item[CIII] $A(u)$ is a real constant:
if $A(u)=$ constant (or $\partial_{u_{g}\, \rr}\psi_\ri=$ constant), (\ref{4eq:gaugedMKdV2}) is reduced to (\ref{4eq:MKdV1phi}), i.e., $\psi_\rr=\phi$. 
\end{enumerate}
}}
\end{condition}
It is obvious that if we have the solutions $\psi_\rr$ of (\ref{4eq:gaugedMKdV2}) satisfying the conditions CI--CIII, $\partial_{u_g, \rr}\psi_{\rr}/2$ obeys the FMKdV equation (\ref{eq:MKdV1}).

However, in this paper we focus on the conditions CI and CII and the real hyperelliptic solutions of the FGMKdV equation (\ref{4eq:gaugedMKdV2}) of genus $g$ instead of (\ref{4eq:MKdV1phi}).

\begin{figure}
\begin{center}

\includegraphics[width=0.6\hsize]{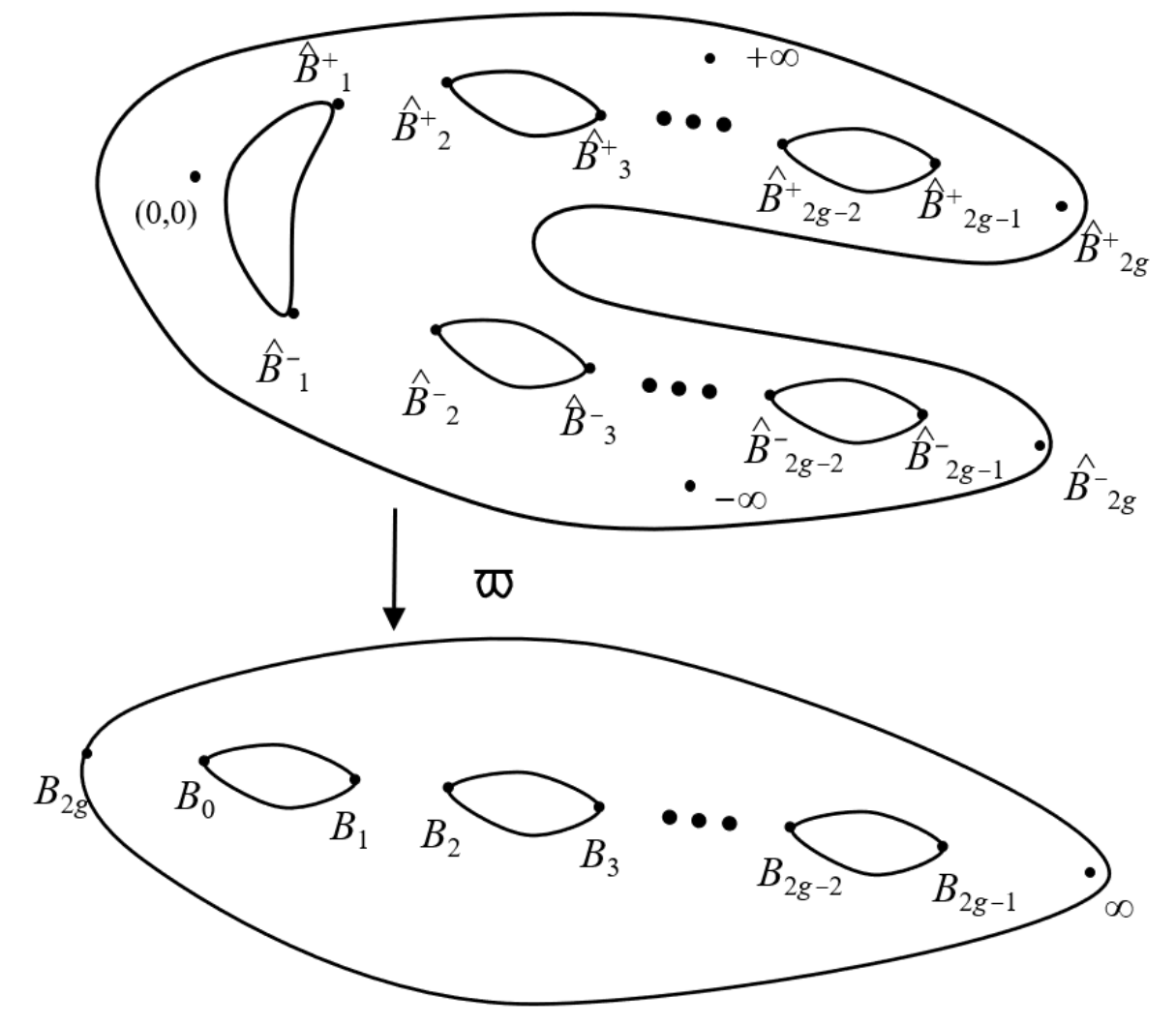}

\end{center}

\caption{The double covering $\varpi_\hX: \hX \to X$,
$\varpi_X: (w, z) \mapsto (w^2+b_0, zw)=(x,y)$.
}\label{fg:Fig00}
\end{figure}

\bigskip

\section{Hyperelliptic curves of genus $g$ in angle expression}\label{sec:g=3}
To find real solutions of the FGMKdV equation (\ref{4eq:gaugedMKdV2}) under Condition \ref{cnd}  CI and CII, we introduce the angle expression \cite{MP22, Ma24b, Ma24d} for $X\setminus\{(b_0,0)\}$ as mentioned in Introduction.

\bigskip
Since the angle expression is connected with a double covering $\hX$ of $X$, we introduce the double covering $\hX$ as in Figure \ref{fg:Fig00}.
We consider the $\al_a(u)$ function $\sqrt{\prod_{i=1}^g(b_a-x_i)}$ for a branch point $B_a:=(b_a, 0) \in X$ ($\varpi_x: X\to \PP^1$) and $((x_i, y_i))_i \in S^g X$.
It means that we consider a line bundle on $X$ and its local section on an open set $U \subset X$.
Here $u$ is given by $u:=\tv_i(\gamma_1,\ldots,\gamma_g)\in \CC^g$ such that $\kappa_X(\gamma_i)=(x_i, y_i)$.
Fix $a=0$.
The square root leads the transformation of $w^2 = (x-b_0)$, i.e., the double covering $\hX$ of the curve $X$, $\varpi_\hX: \hX \to X$, although the precise arguments are left to the Appendix in \cite{MP15}.
Since $\tX$ is also an Abelian covering of $\hX$, we have a natural commutative diagram,
\begin{equation}
\xymatrix{ 
 \tX \ar[dr]^{\kappa_X}\ar[r]^-{\kappa_\hX}& \hX \ar[d]^-{\varpi_\hX} \\
  & X.
}\label{2eq:al_hyp_kappas}
\end{equation}
The $\al_a(u)$ function is a generalization of the Jacobi $\sn, \cn, \dn$ functions because the Jacobi function consists of $\sqrt{x-e_i}$, ($i=1,2,3$) of genus one for a curve $y^2 = \prod_{i=1}^3(x-e_i)$.

The curve $\hX$ is given by $f_\hX(w,z) = z^2 -(-)^g (w^2-e_1)\cdots(w^2 - e_{2g})=0$, where $z:=y/w$ (due to normalization), and $e_j := b_j- b_0$, $j=1, \ldots, 2g$.
Its affine ring is $R_\hX:=\CC[w,z]/(f_\hX(w, z))$, and the ring of its $g$-th symmetric polynomials is denoted by $S^g R_\hX$.

Since the genus of $\hX$ is $2g-1$, we have $2g-1$ holomorphic one-forms,
$$
\hnu_j:= \frac{w^{j} d w}{z}, \quad (j=0, 1, 2, 3, \ldots, 2g-2),
$$
and the Jacobi variety, $J_{\hX}$ of $\hX$ is given by the complex torus $J_{\hX}=\CC^{2g-1} /\Gamma_{\hX}$ for the lattice $\Gamma_{\hX}$ generated by the period matrix.
As in \cite[Appendix, Proposition 11.9]{MP15}, we have the correspondence $\varpi_X^*\nuI{i}=\hnu_{2i-2}$, $(i=1, \ldots, g)$ and thus the Jacobian $J_{\hX}$  contains a subvariety $\hJ_X\subset J_{\hX}$ which is a double covering of the Jacobian $J_X$ of $X$, $\hvarpi_J: \hJ_X \to J_X$, and $\hkappa_J : \CC^g \to \hJ_X:= \CC^g/(\Gamma_\hX\cap \CC^g)$.

Since for each branch point $B_j:=(b_j, 0)\in X$ $(j=1, \ldots, 2g)$, we have double branch points $\hB^\pm_j:=(\pm\sqrt{e_j},0) \in \hX$ as illustrated in Figure~\ref{fg:Fig00}.

Similar to the Jacobi elliptic functions, $\hJ_X=\CC^g/\hGamma_X$ is determined by the same Abelian integral $\tv$, and thus we use the same symbol $\tv$ as $\tv : S^g\tX \to \CC^g$ for $\hX$ \cite{MP15}.

\bigskip

We restrict the moduli (rather, parameter) space of the curve $X$ by the following.
We choose coordinates $u = {}^t(u_1, \ldots, u_g)$
 in  $\CC^g$;
$u_i = u_i^{(1)}+u_i^{(2)}+\cdots+u_i^{(g)}$, where $u_i^{(j)}
=v_i((x_j, y_j))$ for $(x_j, y_j) \in X$.
There are the projection $\varpi_x : X \to \PP^1$, $((x,y) \mapsto x)$, and similarly $\hvarpi_x : \hX \to \PP^1$, $((w,z) \mapsto w)$;
$\varpi_{\PP^1}\circ \hvarpi_x=\varpi_x\circ \varpi_\hX$,
where $\varpi_{\PP^1}(w) = w^2 + b_0$.

\begin{assumption}\label{Asmp}
{\rm{
As in Figure~\ref{fg:Fig01}, we let $e_j := b_j - b_0$, $(j=1, 2, \ldots, 2g)$ be on a unit circle $\mathrm{U}(1)=S^1$, $(|e_j|=1)$, whose center is the origin $b_0$ such that $e_{2i-1} = \overline{e_{2i}}$, $(i = 1, 2, \ldots, g)$; there are $\varphi_{\fb,i}^{++}$, $(i = 1, 2, \ldots, g)$ such that 
$$
e_{2i -1} = e^{2 \ii \varphi_{\fb,i}^{++}}, \quad
e_{2i} = e^{-2 \ii \varphi_{\fb,i}^{++}}, \quad (i = 1, 2, \ldots, g).
$$
We let $\varphi_{\fb,i}^{+-}:=-\varphi_{\fb,i}^{++}$, and 
$\varphi_{\fb,i}^{-\pm}:=\pi\mp \varphi_{\fb,i}^{++}$.
}}
\end{assumption}

Further, we rename them and define several sets.
\begin{definition}
{\rm{
\begin{enumerate}
\item $g$ odd case:
$\varphi_{\fb}^{[1\pm]}:=\varphi_{\fb,1}^{+\pm}$,
$$ 
\varphi_{\fb}^{[\ell+]}:=\varphi_{\fb, 2\ell-1}^{++}, \ 
\varphi_{\fb}^{[\ell-]}:=\varphi_{\fb, 2\ell-2}^{++}, \ (\ell =  2, 3, 
\cdots, (g+1)/2),
$$
$$ 
\varphi_{\fb}^{[(\ell+(g-1)/2+)]}:=\varphi_{\fb, 2\ell-2}^{+-}, \ 
\varphi_{\fb}^{[(\ell+(g-1)/2-)]}:=\varphi_{\fb, 2\ell-1}^{+-}, \ 
(\ell =  2, 3, \cdots, (g+1)/2),
$$
\begin{equation}
\cA_X^\circ:= \{ \ee^{\ii\varphi} \ |\ \varphi \in \cA_X^{\varphi}\}, \qquad
\cA_X^{\varphi}:=\bigcup_{\ell=1}^{2g}
[\varphi_{\fb}^{[\ell-]}, \varphi_{\fb}^{[\ell+]}].
\label{eq:cAXodd}
\end{equation}
\item $g$ even case:
$$ 
\varphi_{\fb}^{[\ell+]}:=\varphi_{\fb, 2\ell}^{++}, \ 
\varphi_{\fb}^{[\ell-]}:=\varphi_{\fb, 2\ell-1}^{++}, \ (\ell =  1, 2, 
\cdots, g/2),
$$
$$ 
\varphi_{\fb}^{[(\ell+g/2)+]}:=\varphi_{\fb, 2\ell-1}^{+-}, \ 
\varphi_{\fb}^{[(\ell+g/2)-]}:=\varphi_{\fb, 2\ell}^{+-}, \ (\ell =  1, 2, 
\cdots, g/2),
$$
\begin{equation}
\cA_X^\circ:= \{ \ee^{\ii\varphi} \ |\ \varphi \in \cA_X^{\varphi}\}, \qquad
\cA_X^{\varphi}:=\bigcup_{\ell=1}^{2g}[\varphi_{\fb}^{\ell-}, \varphi_{\fb}^{\ell+}].
\label{eq:cAXeven}
\end{equation}

\item For both cases, we define $\cA_X:=\cA_X^\circ \cup 
[\overline{\ii \cA_X^\circ}]\ii^{-1}$. 
\end{enumerate}
}}
\end{definition}

\begin{figure}
\begin{center}

\includegraphics[width=0.42\hsize, bb= 0 0 641 668]{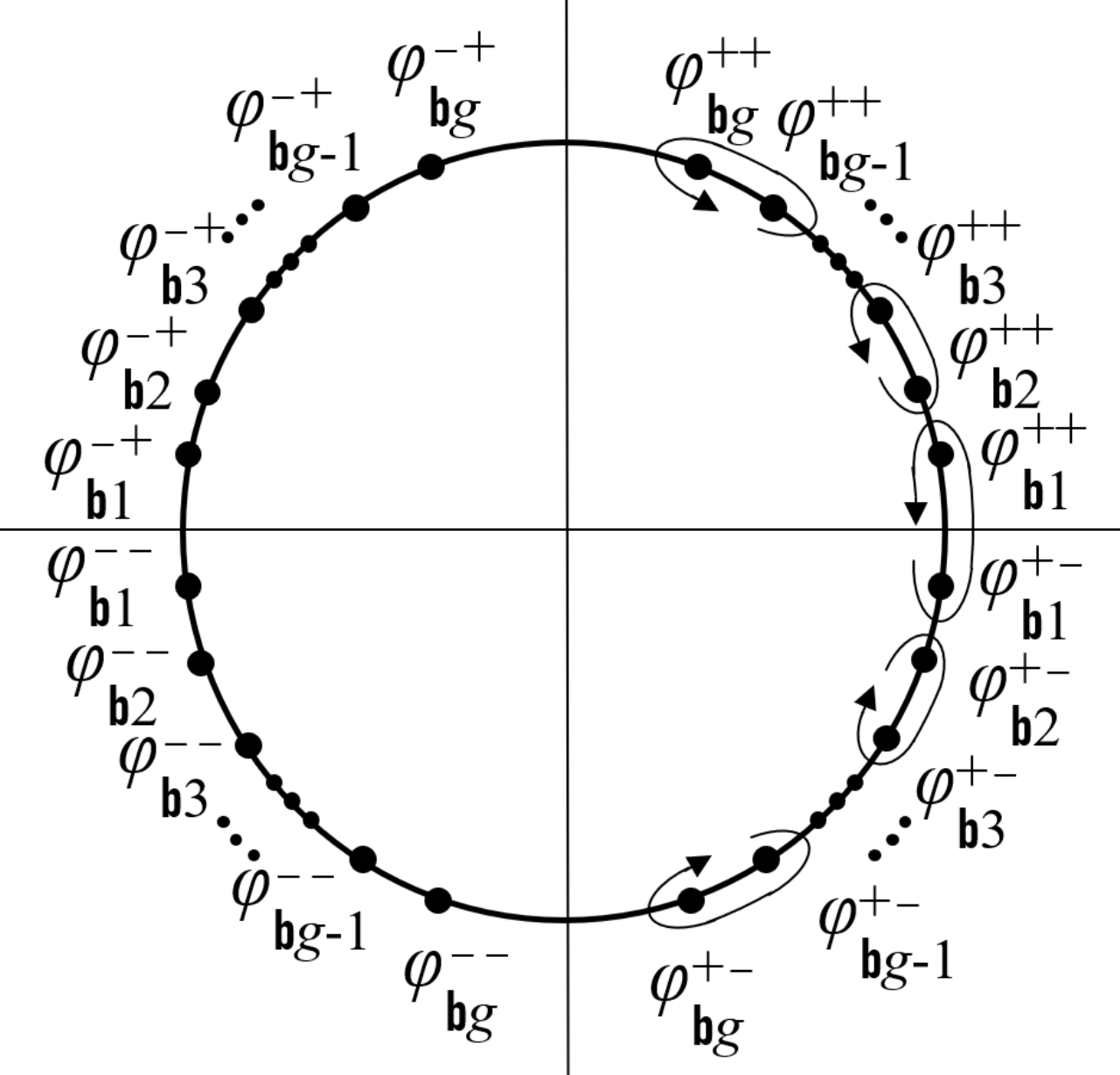}
\includegraphics[width=0.42\hsize, bb= 0 0 641 668]{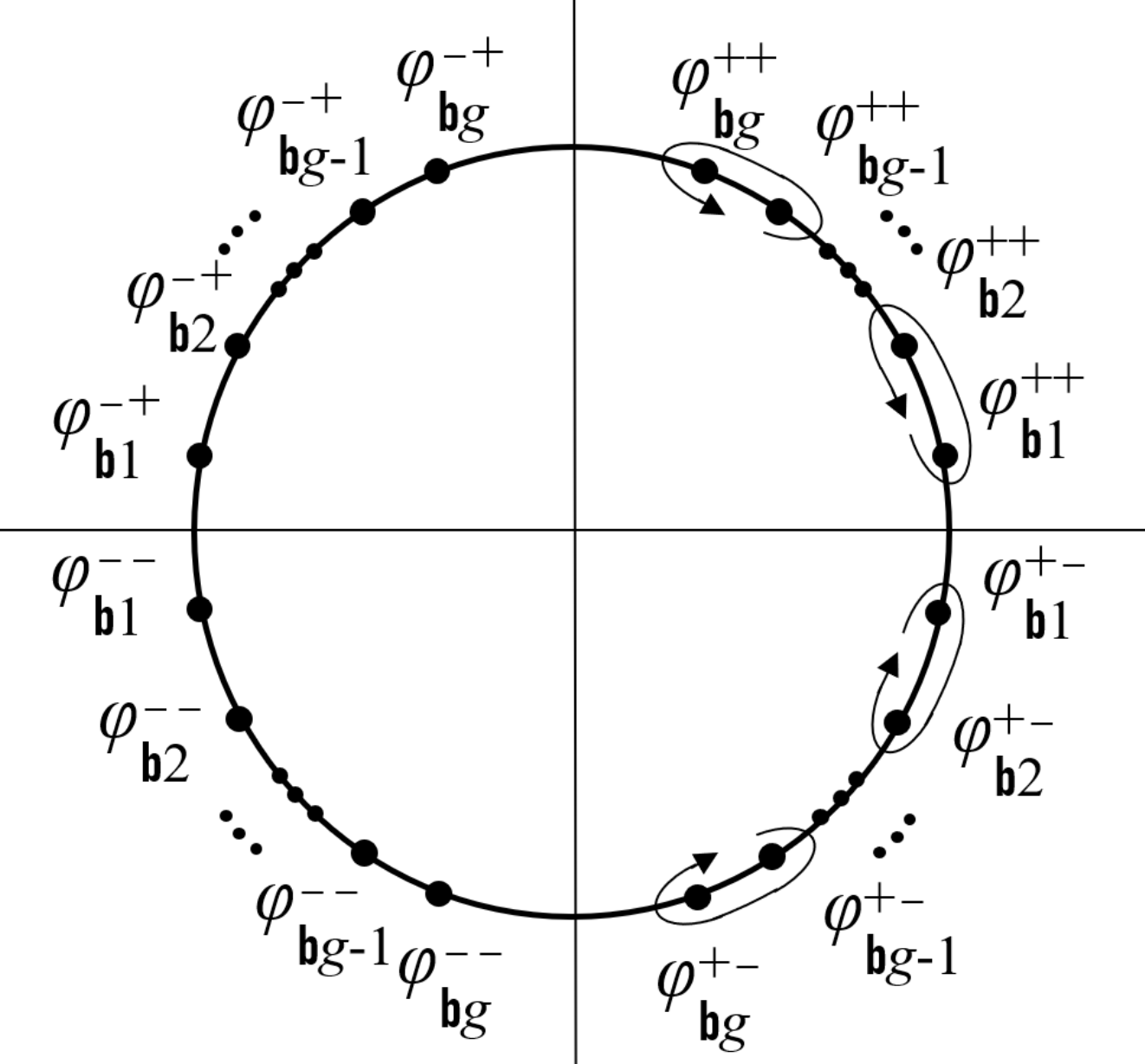}

(a) \hskip  0.35\hsize 
(b)
\end{center}

\caption{
$\cA_X \subset S^1$:
$k_1> k_2 > \cdots > k_g>1.0$ 
 for the odd $g$ (a) and the even $g$ (b).
}\label{fg:Fig01}
\end{figure}

We recall $w^2 = (x-b_0)$ and $w=\ee^{\ii \varphi}\in \hX\setminus\{(0,0)\}$.
For a {\lq\lq}real{\rq\rq} expression of (\ref{4eq:hypC}), we use the following transformation, which is a generalization of {\lq\lq}the angle expression{\rq\rq} of the elliptic integral as mentioned in \cite{MP22}.
\begin{lemma}
$(w^2-e_1)(w^2-e_2)=4\frac{1}{k_1^2}\ee^{2\ii \varphi}
(1-k^2\sin^2\varphi)$, where 
$$
w= \ee^{\ii \varphi}, \quad 
k_1 = \frac{2\ii\sqrt[4]{e_{1}e_{2}}}{\sqrt{e_{1}}- \sqrt{e_{2}}}
=\frac{1}{\sin\varphi_{\fb1}^{++}}, \quad
 e_1 e_2 = 1.
$$
\end{lemma}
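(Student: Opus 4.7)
The plan is to verify both equalities by a direct trigonometric computation on the unit circle, using only Assumption \ref{Asmp} with $i=1$ (which gives $e_1 = \ee^{2\ii\varphi_{\fb,1}^{++}}$, $e_2 = \ee^{-2\ii\varphi_{\fb,1}^{++}}$, and in particular $e_1 e_2 = 1$). No deep machinery is needed; the content is essentially the classical passage from a quartic in $w$ to the Jacobi-angle form $1-k^2\sin^2\varphi$.

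First I would compute $k_1$. Since $e_1 e_2 = 1$, a consistent choice of fourth root gives $\sqrt[4]{e_1 e_2} = 1$, and $\sqrt{e_1} = \ee^{\ii\varphi_{\fb,1}^{++}}$, $\sqrt{e_2} = \ee^{-\ii\varphi_{\fb,1}^{++}}$, so that
\begin{equation*}
\sqrt{e_1} - \sqrt{e_2} = 2\ii \sin\varphi_{\fb,1}^{++},
\qquad
k_1 = \frac{2\ii}{2\ii \sin\varphi_{\fb,1}^{++}} = \frac{1}{\sin\varphi_{\fb,1}^{++}},
\end{equation*}
which establishes the second equality in the displayed definition of $k_1$.

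Next I would expand the product. Substituting $w = \ee^{\ii\varphi}$ and using $e_1 + e_2 = 2\cos(2\varphi_{\fb,1}^{++})$, $e_1 e_2 = 1$,
\begin{equation*}
(w^2 - e_1)(w^2 - e_2) = w^4 - 2\cos(2\varphi_{\fb,1}^{++})\, w^2 + 1.
\end{equation*}
Factoring out $w^2 = \ee^{2\ii\varphi}$ and using $w^2 + w^{-2} = 2\cos(2\varphi)$ yields
\begin{equation*}
(w^2 - e_1)(w^2 - e_2) = \ee^{2\ii\varphi}\bigl(2\cos(2\varphi) - 2\cos(2\varphi_{\fb,1}^{++})\bigr).
\end{equation*}
Applying $\cos(2\alpha) = 1 - 2\sin^2\alpha$ to both cosines reduces the bracket to $4\sin^2\varphi_{\fb,1}^{++} - 4\sin^2\varphi$, so that
\begin{equation*}
(w^2 - e_1)(w^2 - e_2) = 4\,\ee^{2\ii\varphi}\,\sin^2\varphi_{\fb,1}^{++}\Bigl(1 - \tfrac{\sin^2\varphi}{\sin^2\varphi_{\fb,1}^{++}}\Bigr).
\end{equation*}
Now I would substitute $\sin^2\varphi_{\fb,1}^{++} = 1/k_1^2$ from the first step, obtaining exactly the claimed identity (with $k$ read as $k_1$).

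There is essentially no obstacle: the only point requiring care is the branch choice in $\sqrt[4]{e_1 e_2}$, for which I would note that the hypothesis $e_1 e_2 = 1$ together with the chosen square roots $\sqrt{e_j} = \ee^{\pm\ii\varphi_{\fb,1}^{++}}$ fixes the principal branch and is compatible with the $\cA_X$ set-up of Assumption \ref{Asmp}. The rest is algebra.
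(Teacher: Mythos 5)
Your proof is correct and follows essentially the same route as the paper: a direct expansion of the quartic, substitution of $w^2+w^{-2}=2\cos 2\varphi$, and the double-angle identity, with the only cosmetic difference being that you apply $\cos 2\alpha = 1-2\sin^2\alpha$ to both $\varphi$ and $\varphi_{\fb,1}^{++}$, whereas the paper handles $e_1+e_2-2$ via $(\sqrt{e_1}-\sqrt{e_2})^2=-4/k_1^2$. These are trivially equivalent, and your explicit treatment of the branch choices for $\sqrt{e_j}$ and $\sqrt[4]{e_1e_2}$ is a welcome clarification of a point the paper leaves implicit.
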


\begin{proof}
Let $ e_1 e_2=1$.
We recall the double angle formula $\cos 2\varphi = 1-2\sin^2 \varphi$.
\begin{gather*}
\begin{split}
(w^2 - e_1)(w^2 -e_2)& =w^2(w^2-(e_1+e_2) +e_1 e_2 w^{-2})\\
&=2w^2 \left(\cos(2\varphi) - \frac{e_1+e_2}{2}\right)\\
&=-w^2 
\left(
e_1+e_2-2\sqrt{e_1e_2}+4\sin^2 \varphi\right) \\
&=4w^2 \frac{1}{k_1^2}
\left(1-k_1^2\sin^2 \varphi\right), \\
\end{split}
\end{gather*}
where $(e_1+e_2-2\sqrt{e_1e_2})=(\sqrt{e_1}-\sqrt{e_2})^2 = e_1^{-1}(e_1+1)^2=-4/k_1^2$.
\qed
\end{proof}

\begin{figure}
\begin{center}
\includegraphics[width=0.85\hsize]{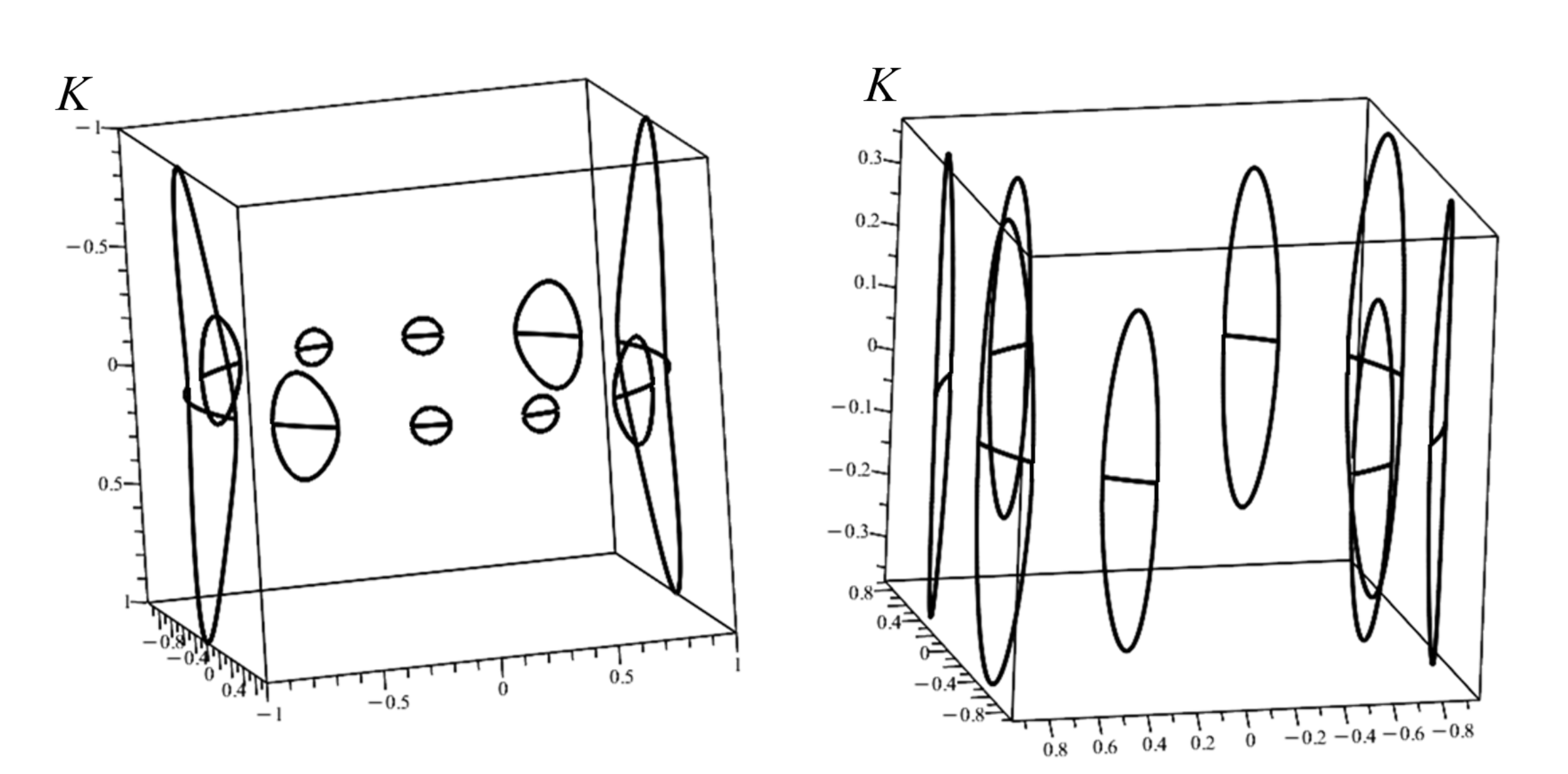}

(a) \hskip 0.35\hsize 
(b)
\end{center}

\caption{
$\cA_X \cup H_\RR \subset \CC \times \RR$
 for the odd $g$ ($g=5$) (a), and the even $g$ ($g=4$) (b).
}\label{fg:Fig02}
\end{figure}

Under these assumptions, we have the {\lq\lq}real{\rq\rq} extension of the hyperelliptic curve $X$ by $(\ee^{\ii\varphi},$ $ y/\ee^{\ii\varphi})\in \hX$. 
The direct computation shows the following:

\begin{lemma} \label{4lm:g3gene_y2}
Let $ \ee^{2\ii\varphi} :=(x-b_0)\in X \setminus\{(0,1)\}$.
(\ref{4eq:hypC}) is written by
\begin{gather}
\begin{split}
y^2&=(-4)^g \frac{\ee^{(2g+2)\ii\varphi}}{\prod_i (k_i^2) } 
(1-k_1^2 \sin^2 \varphi)(1-k_2^2 \sin^2 \varphi)\cdots
(1-k_g^2 \sin^2 \varphi)\\
&= ((2\ii)^{g} \ee^{(g+1)\ii\varphi}K)^2,
\label{4eq:HEcurve_phi}
\end{split}
\end{gather}
where 
$\displaystyle{
k_a = \frac{2\ii\sqrt[4]{e_{2a-1}e_{2a}}}{\sqrt{e_{2a-1}}- \sqrt{e_{2a}}}
=\frac{1}{\sin(\varphi_{\fb,a}^{++})}}$, $(a=1,2,\ldots, g)$, $K=\tgamma\tK(\varphi)$, $\tgamma = \pm 1$ and
$$
\displaystyle{
\tK(\varphi):=\frac{\sqrt{
 (1-k_1^2 \sin^2 \varphi)(1-k_2^2 \sin^2 \varphi)\cdots
(1-k_g^2 \sin^2 \varphi)}}
{k_1 k_2 \cdots k_g}}.
$$
\end{lemma}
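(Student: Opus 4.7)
The plan is to iterate the preceding lemma over the $g$ conjugate pairs of branch points prescribed by Assumption \ref{Asmp}, and then collect the resulting factors into the two stated forms. This is essentially a direct computation, so the challenge is just the bookkeeping of exponential and sign factors.

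First I would substitute $w=\ee^{\ii\varphi}$, so that $x-b_0=w^2=\ee^{2\ii\varphi}$ and, for $j\geq 1$, $x-b_j=(x-b_0)-(b_j-b_0)=w^2-e_j$. This converts the right-hand side of (\ref{4eq:hypC}) into
\begin{equation*}
(-1)^g f(x)=(-1)^g\ee^{2\ii\varphi}\prod_{a=1}^{g}(w^2-e_{2a-1})(w^2-e_{2a}).
\end{equation*}

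Second, Assumption \ref{Asmp} gives $e_{2a-1}=\ee^{2\ii\varphi_{\fb,a}^{++}}$ and $e_{2a}=\ee^{-2\ii\varphi_{\fb,a}^{++}}$, hence $e_{2a-1}e_{2a}=1$ for every $a$. So the preceding lemma applies to each pair with modulus $k_a=1/\sin\varphi_{\fb,a}^{++}$, yielding
\begin{equation*}
(w^2-e_{2a-1})(w^2-e_{2a})=\frac{4}{k_a^2}\,\ee^{2\ii\varphi}(1-k_a^2\sin^2\varphi).
\end{equation*}
Plugging this in and collecting factors: the prefactor $\ee^{2\ii\varphi}$ combines with the $g$ new copies of $\ee^{2\ii\varphi}$ into $\ee^{(2g+2)\ii\varphi}$; the numerical constants give $(-1)^g\cdot 4^g=(-4)^g$; and the $k_a^{-2}$'s collect as $1/\prod_a k_a^2$. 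This establishes the first equality.

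For the second equality, I would use $(-4)^g=(2\ii)^{2g}$ and $\ee^{(2g+2)\ii\varphi}=\bigl(\ee^{(g+1)\ii\varphi}\bigr)^2$, so the expression becomes the square of $(2\ii)^g\ee^{(g+1)\ii\varphi}\,\tgamma\,\tK(\varphi)$, with $\tgamma=\pm1$ absorbing the choice of square root and $\tK(\varphi)^2=\prod_a(1-k_a^2\sin^2\varphi)/k_a^2$ by definition. Since this is a direct chain of algebraic identities, there is no substantive obstacle; one only needs to verify that the pairing prescribed in Assumption \ref{Asmp} (conjugate pairs on the unit circle about $b_0$) is exactly what makes the hypothesis $e_{2a-1}e_{2a}=1$ of the previous lemma hold for every $a$.
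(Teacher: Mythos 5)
Your proposal is correct and matches the paper's approach: the paper simply introduces the lemma with ``The direct computation shows the following,'' and the computation it has in mind is exactly your iteration of the preceding single-pair lemma over the $g$ conjugate pairs $e_{2a-1}e_{2a}=1$ supplied by Assumption \ref{Asmp}, followed by the collection of the factors $(-1)^g\cdot 4^g=(-4)^g=((2\ii)^g)^2$ and $\ee^{(2g+2)\ii\varphi}=(\ee^{(g+1)\ii\varphi})^2$. No gaps.
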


\begin{definition}\label{Asmp2}
{\rm{
Hereafter, we assume that $\varphi\in [-\pi, \pi)=\RR/(2\pi \ZZ)-\pi$ as a local parameter of the covering $\RR$ of $S^1:=\{\ee^{\ii \varphi}\}$ and consider $H:=\hvarpi_x^{-1}S^1$.
$H$ is parameterized by $(\ee^{\ii \varphi}, K)$ and $(\varphi, K)$.
We introduce the sets
$$
H_\RR :=\{ (\ee^{2\ii \varphi}, K)\in H \ |\ K \in \RR\}, 
\quad
H_\RR^\varphi :=\{ (\varphi, K) \ |\ \varphi \in [-\pi, \pi), (\ee^{\ii \varphi}, K)\in H_\RR \},
$$
so that we have natural immersion $\iota_H$ and projection $\varpi_H$:
\begin{equation}
\xymatrix{ 
 H_\RR \ar[dr]^{\varpi_H}\ar[r]^-{\iota_H}& \hX \ar[d]^-{\varpi_\hX|_H} \\
  & {\ \ \ \  \cA_X\subset S^1}.
}\label{2eq:H_hyp_kappas}
\end{equation}

Further, we define the  their covering spaces
$$
\tH :=\kappa_{\hX}^{-1} H, \quad
\tH_\RR :=\kappa_{\hX}^{-1} H_\RR, \quad
\tH_\RR^\varphi,
$$
and their $g$-th symmetric products
$$
S^g H, \quad S^g H_\RR, \quad
S^g H_\RR^\varphi, \quad
S^g \tH, \quad S^g \tH_\RR, \quad
S^g \tH_\RR^\varphi.
$$
Here we also define the map,
\begin{equation}
\Phi: \tX \supset \tH_\RR \to [-\pi, \pi), \quad\left( \gamma \mapsto 
\varphi_\gamma=\frac{1}{\ii}\log (\hvarpi_x\hkappa_\hX\gamma -b_0)\right).
\label{eq:trans_Phi}
\end{equation}
Moreover, we define $[S^g \tH_\RR]^0:=$
 $(\varpi_H\circ\kappa_\hX)^{-1}\prod_{a=1}^g [\varphi_\fb^{[a-]}, \varphi_\fb^{[a+]}]$, where \break
$\prod_{a=1}^g [\varphi_\fb^{[a-]}, \varphi_\fb^{[a+]}]\subset S^g \cA_X^\varphi$.
In other words, for $(\gamma_1, \ldots, \gamma_g) \in [S^g \tH_\RR]^0$, we set $\varpi_H\circ\kappa_\hX\gamma_i$ into $[\varphi_\fb^{[i-]}, \varphi_\fb^{[i+]}]$ respectively.

}}
\end{definition}

We note that Figure \ref{fg:Fig02} displays an example of $H_\RR$ with $\cA_X$ in (\ref{eq:cAXodd}) and (\ref{eq:cAXeven}).
From here on, we will loosely identify $H_\RR$ with $H_\RR^\varphi$.
For simplicity,  we restrict $S^g \tH_\RR \subset S^g \tX$ to $[S^g \tH_\RR]^0$.
 
\begin{lemma}
For $(\gamma_1, \ldots, \gamma_g) \in [S^g \tH_\RR]^0$, there is no intersection between $\gamma_i$ and $\gamma_j$ for $i\neq j$.
Further, $H_\RR$ consists of $2g$ loops.
$\dim_\RR S^g H_\RR = \dim_\RR S^g H_\RR^\varphi =$ 
$\dim_\RR S^g \tH_\RR = \dim_\RR S^g \tH_\RR^\varphi=g$.
\end{lemma}

\begin{proof}
It is obvious.
 $H_\RR$ is homeomorphic to $(S^1)^{2g}$ in Figure \ref{fg:Fig02}.
\end{proof}

\begin{lemma}
For $x-b_0 = \ee^{2\ii\varphi}\in \varpi_x(H)$ and $b_0=1$, 
$\nuI{\ell}$  in (\ref{4eq:firstdiff}) is equal to
$$
\nuI{\ell}=\frac{(2\ii) \ee^{-(g-\ell)\ii\varphi}
(2\ii\sin(\varphi))^{\ell-1} d\varphi}{(2\ii)^{g} K}, \quad
(\ell = 1, 2, \ldots, g).
$$
\end{lemma}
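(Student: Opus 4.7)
The plan is to perform a direct substitution of the angle parameterization into the definition $\nuI{\ell} = \frac{x^{\ell-1}\,dx}{2y}$, using Lemma~\ref{4lm:g3gene_y2} to eliminate $y$. The three ingredients I would compute first are: (a) $dx = 2\ii\,\ee^{2\ii\varphi}\,d\varphi$ from differentiating $x - b_0 = \ee^{2\ii\varphi}$; (b) $y = (2\ii)^{g}\,\ee^{(g+1)\ii\varphi}\,K$ directly from Lemma~\ref{4lm:g3gene_y2}; and (c) a factored form of $x^{\ell-1}$ exposing the $(2\ii\sin\varphi)^{\ell-1}$ factor, obtained from the trigonometric identity $\ee^{2\ii\varphi}-1 = \ee^{\ii\varphi}(\ee^{\ii\varphi}-\ee^{-\ii\varphi}) = 2\ii\sin\varphi\cdot\ee^{\ii\varphi}$ combined with the normalization $b_0 = 1$.

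A cleaner way to organize the bookkeeping is to first pull everything back to the double cover $\hX$ via $(w,z) \mapsto (w^2 + b_0,\, wz)$. On $\hX$, $\nuI{\ell}$ becomes $(w^2 + b_0)^{\ell-1}\,dw/z$, and substituting $w = \ee^{\ii\varphi}$, $dw = \ii\,\ee^{\ii\varphi}\,d\varphi$, and $z = y/w = (2\ii)^{g}\,\ee^{g\ii\varphi}\,K$ reduces the computation to a single line. Collecting exponential prefactors, the net exponent of $\ee^{\ii\varphi}$ in the numerator works out to $(\ell - 1) + 2 - (g+1) = \ell - g$, giving the claimed $\ee^{-(g-\ell)\ii\varphi}$, while the numerical coefficients reassemble into $(2\ii)/(2\ii)^{g}$ multiplying $(2\ii\sin\varphi)^{\ell-1}$.

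There is no substantive obstacle here: the lemma is a coordinate-change identity whose proof reduces to careful tracking of signs and powers of $2$ and $\ii$ on the double cover. Its role in the paper is to record the angle expression of the basis of holomorphic differentials, which is the computational backbone for the subsequent analysis of real hyperelliptic solutions of the focusing gauged MKdV equation in Section~4.
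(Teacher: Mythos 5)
Your proposal is correct and takes essentially the same route as the paper: the paper's own proof likewise substitutes $x=\ee^{\ii\varphi}(\ee^{\ii\varphi}-\ee^{-\ii\varphi})=2\ii\,\ee^{\ii\varphi}\sin\varphi$ and $dx=2\ii\,\ee^{2\ii\varphi}\,d\varphi$ into $x^{\ell-1}\,dx$ and then divides by $2y$ via Lemma~\ref{4lm:g3gene_y2}, so your double-cover reorganization is only a cosmetic repackaging of the same substitution. The two small frictions in your write-up (reading $x$ effectively as $\ee^{2\ii\varphi}-1$ despite the stated $b_0=1$, and the overall constant $2\ii$ versus the $\ii$ that the division by $2y$ actually produces) are already present in the paper's own statement and proof, so they are not defects introduced by your argument.
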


\begin{proof}
Since we have $x=  \ee^{\ii \varphi}(\ee^{\ii \varphi}- \ee^{-\ii \varphi})$ $=2 \ii  \ee^{\ii \varphi} \sin \varphi$ and  $d x = 2 \ii  \ee^{2\ii \varphi}d\varphi$, we have
$
x^{\ell-1} d x =$ $ (2\ii)^{\ell}  \ee^{(\ell+1)\ii \varphi}\sin^{\ell-1} \varphi\ d \varphi$.
\end{proof}

Here we note that $\displaystyle{
\nuI{g}=\frac{(\sin(\varphi))^{g-1} d\varphi}{K}
}$ is real if $(\varphi, K)$ belongs to $H_\RR^\varphi$ and $d \varphi$ is real.

Then we obviously have the following lemmas:
\begin{lemma} \label{4lm:dudphi}
For $( \ee^{\ii \varphi_j}, K_j)_{j=1, 2, \ldots, g}\in S^gH$, the following holds:
$$
\left[\begin{matrix} d u_1 \\ d u_2\\ \vdots \\d u_\ell\\ \vdots \\ du_g
\end{matrix}\right]
=-
\left[
\frac{ \ee^{-(g-\ell)\ii\varphi_k}(2\ii \sin \varphi_k)^{\ell-1}}
{(2\ii)^{g-1}  K_k}
\right]
\left[\begin{matrix} d \varphi_1 \\ d \varphi_2 \\ \vdots \\
d \varphi_k \\ \vdots \\ d \varphi_g\end{matrix}\right].
$$
Let the matrix be denoted by $\cL=(\cL_{ij})$,
$$
\cL_{ij}=\displaystyle{\left[
\frac{(\cos\varphi_j-\ii\sin\varphi_j)^{(g-i)}(2\ii \sin\varphi_j)^{i-1}}
{(2\ii)^{g-1}  K_j}\right]_{ij}}.
$$
Then the determinant of $\cL$, 
$$
\det(\cL)=\displaystyle{
\frac{\prod_{i<j}\sin(\varphi_i-\varphi_j)}{(2\ii)^{g(g-1)/2} K_1 K_2 \cdots K_g}}.
$$
\end{lemma}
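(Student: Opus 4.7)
The plan is in two independent pieces: the matrix formula for the differentials, and the determinant evaluation.

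For the matrix identity, I would simply assemble the previous lemma with the definition of $u$. Since $u_\ell = u_\ell^{(1)}+\cdots+u_\ell^{(g)}$ with $u_\ell^{(k)}=v_\ell((x_k,y_k))$, differentiating gives $du_\ell=\sum_k \nu_\ell^{\mathrm I}|_{(x_k,y_k)}$, and substituting the angular expression
$$
\nu_\ell^{\mathrm I}=\frac{(2\ii)\,\ee^{-(g-\ell)\ii\varphi}(2\ii\sin\varphi)^{\ell-1}d\varphi}{(2\ii)^g K}
$$
from the preceding lemma produces precisely the matrix $\mathcal L$ acting on $(d\varphi_1,\ldots,d\varphi_g)^t$. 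The overall sign encoded by the leading minus in the statement is a bookkeeping item about the orientation of $\gamma_{P,\infty}$ versus the parametrization of the loop; I would note this as a convention check rather than compute it anew.

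For the determinant, my plan is to reduce the matrix to a Vandermonde. Factor $1/((2\ii)^{g-1}K_j)$ out of the $j$-th column, leaving the entries $M_{ij}=\ee^{-\ii(g-i)\varphi_j}(2\ii\sin\varphi_j)^{i-1}$. Use the algebraic identity
$$
\ee^{\ii\varphi_j}\cdot 2\ii\sin\varphi_j = \ee^{2\ii\varphi_j}-1,
$$
so that $M_{ij}=\ee^{-\ii(g-1)\varphi_j}\,z_j^{i-1}$ with $z_j:=\ee^{2\ii\varphi_j}-1$. Thus $M=\mathrm{diag}(\ee^{-\ii(g-1)\varphi_j})\cdot V(z_1,\ldots,z_g)$ where $V$ is the standard Vandermonde, whose determinant is $\prod_{i<j}(z_j-z_i)$. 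The key simplification is
$$
z_j-z_i=\ee^{2\ii\varphi_j}-\ee^{2\ii\varphi_i}=2\ii\,\ee^{\ii(\varphi_i+\varphi_j)}\sin(\varphi_j-\varphi_i),
$$
so $\prod_{i<j}(z_j-z_i)=(2\ii)^{g(g-1)/2}\prod_{i<j}\ee^{\ii(\varphi_i+\varphi_j)}\prod_{i<j}\sin(\varphi_j-\varphi_i)$. In the product over pairs each $\varphi_k$ occurs $g-1$ times, giving $\prod_k \ee^{\ii(g-1)\varphi_k}$, which exactly cancels the diagonal prefactor $\prod_k\ee^{-\ii(g-1)\varphi_k}$.

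Putting these together and reinstating the column normalization $\prod_j((2\ii)^{g-1}K_j)^{-1}=1/((2\ii)^{g(g-1)}\prod_j K_j)$ yields
$$
\det\mathcal L=\frac{\prod_{i<j}\sin(\varphi_j-\varphi_i)}{(2\ii)^{g(g-1)/2}\,K_1K_2\cdots K_g},
$$
matching the stated formula up to the symmetric sign $\prod_{i<j}\sin(\varphi_i-\varphi_j)=(-1)^{g(g-1)/2}\prod_{i<j}\sin(\varphi_j-\varphi_i)$, which I would resolve at the same time as the sign bookkeeping in the first part. The only non-trivial manipulation is the recognition that the angle-parametrization matrix is a Vandermonde after the substitution $z_j=\ee^{2\ii\varphi_j}-1$; everything else is direct algebra. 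I do not expect a genuine obstacle in this lemma, since the structure mirrors the Jacobian computation of Lemma \ref{4lm:KdV1}(1) for $\cU\cY^{-1}$, but transported to the angle coordinates on $Z$.
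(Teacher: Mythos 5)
Your proposal is correct and follows the same route the paper intends: the paper offers no explicit proof (it derives the matrix by substituting the angle expression of $\nu^{\mathrm I}_{\ell}$ into $du_\ell=\sum_k\nu^{\mathrm I}_\ell(\varphi_k)$ and evaluates the determinant via the Vandermonde structure, as the author indicates in the Discussion), and your substitution $z_j=\ee^{2\ii\varphi_j}-1$ is exactly the change of variables back to $x_j$ that exposes the Vandermonde matrix $\cU$ of Lemma \ref{4lm:KdV1}. The residual sign ambiguities you defer (the leading minus in the display versus its absence in the entrywise definition of $\cL_{ij}$, and $\prod_{i<j}\sin(\varphi_i-\varphi_j)$ versus $\prod_{i<j}\sin(\varphi_j-\varphi_i)$) are present in the paper's own statements, so treating them as convention bookkeeping is reasonable.
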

We denote $\fs_i:= \sin \varphi_i$, $\fc_i:= \cos \varphi_i$, and $\hfc_i = \ii
\fc_i$, and then we have
$$
\cL_{ij}=\displaystyle{\left[\frac{(\fc_j-\ii \fs_j)^{g-i}(2 \ii \fs_j)^{i-1}}{
(2\ii)^{g-1} K_j}\right]}.
$$
\begin{lemma}\label{lm:3.9}
{\small{
$$
\cK\!:=\!2^{g-1}
\left[\begin{matrix}
 K_1 &  & & \\
& K_2 &  &  \\
 &  &\ddots & \\
& & &  K_g \\
\end{matrix}\right], 
\cM\!:=\!
\left[\begin{matrix}
(\fs_1+\ii\fc_1)^{g-1} &  \cdots &  (\fs_g+\ii\fc_g)^{g-1}\\
2(\fs_1+\ii\fc_1)^{g-2}\fs_1  & \cdots 
& 2 (\fs_g+\ii\fc_g)^{g-2}\fs_g\\
\vdots & \ddots &\vdots \\
2^{g-1}\fs_1^{g-1} &  \cdots &  2^{g-1}\fs_g^{g-1}\\
\end{matrix}\right], 
$$
}}
then $\cL =  \cM\cK^{-1}$.
\end{lemma}

We extend (\ref{4eq:KdV_def2.1}) and Lemma \ref{4lm:KdV1}.

\begin{definition} \label{def:varepsilon}
We define $\varepsilon_{j,k}$ by the polynomial of $\eta$, $\displaystyle{\prod_{i=1,\neq j}^{g} ((\fs_i + \ii \fc_i) \eta - 2 \fs_i)}$
$=\varepsilon_{j,g-1} \eta^{g-1} + \varepsilon_{j,g-2} \eta^{g-2} + \cdots + \varepsilon_{j,1} \eta + \varepsilon_{j,0}$.
\end{definition}

Then we obviously have the relations: $\varepsilon_{j,0}=\displaystyle{\prod_{i=1,\neq j}^{g} 2\ii \fs_i}$, and $\varepsilon_{j,g-1}=\displaystyle{\prod_{i=1,\neq j}^{g} (\fs_i + \ii \fc_i)}$. 

\begin{lemma}\label{lm:varepsilon}
{\small{
\begin{align*}
\cE\!:=\!
\left[\begin{matrix}
\varepsilon_{1,0} & \varepsilon_{1,1} & \cdots &  \varepsilon_{1,g-1}\\
\varepsilon_{2,0} & \varepsilon_{2,1} & \cdots &  \varepsilon_{2,g-1}\\
\vdots & \vdots &\ddots &\vdots \\
\varepsilon_{g,0} & \varepsilon_{g,1} & \cdots &  \varepsilon_{g,g-1}\\
\end{matrix}\right], \quad
\cD\!:=\!
2^{g-1}\left[\begin{matrix}
\displaystyle{\prod_{j\neq 1}}(\fc_j\fs_1-\fc_1\fs_j) &  &  \\
  &\ddots & \\
& & \displaystyle{\prod_{j\neq g}}(\fc_j\fs_g-\fc_g\fs_j) 
\end{matrix}\right], 
\end{align*}
}}
$\cE \cM = \cD$, and thus $\cM\cD^{-1}  \cE=I$, 
where $I$ is the identity matrix.
\end{lemma}

\begin{proof}
See Appendix A but its essential is the same as Lemma \ref{4lm:KdV1}.
\end{proof}

We extend Lemma \ref{4lm:KdV1} to the case of angle expressions.

\begin{lemma} \label{4lm:dudphig1}
For $(\ee^{\ii\varphi_a}, K_a)_{a} \in [S^g \tH_\RR]^0$ such that $\varphi_a \neq \varphi_b$ $(a\neq b)$, we have
\begin{equation}
\displaystyle{
\begin{pmatrix} d \varphi_1 \\ d \varphi_2 \\ \vdots \\ d \varphi_g\end{pmatrix}
=\cK^{-1}\cD^{-1} \cE
\begin{pmatrix} d u_1 \\ d u_2\\ \vdots \\ du_g\end{pmatrix}
},
\qquad
\cL^{-1}=\cK \cE.
\label{3eq:Elas3.7}
\end{equation}
\end{lemma}

Since the elements of $\cE$ are not real even for $\varphi_i \in \RR$, we decompose these elements in $\cE$ into imaginary parts and real parts.
However the decomposition is, a little bit, complicated and thus, we first consider the case of genus 5 in Subsection \ref{ssc:g=5} and then, we treat general $g$ in Subsection \ref{ssc:g=g}.

\subsection{The case of $g=5$:}\label{ssc:g=5}
This subsection is on the matrix $\cE$ only for $g=5$ before we deal with general $g$ case since its structure is slightly complicated.

Before we show the decomposition of $\cE$ into the real parts and imaginary parts, we concretely write down the entities in Lemmas \ref{lm:3.9}, Definition \ref{def:varepsilon} and Lemma \ref{lm:varepsilon}.
{\small{ 
$$
\cK\!:=\!16
\left[\begin{matrix}
K_1 &  & & \\
&  K_2 &  &  \\
 &  &\ddots & \\
& & & K_5 \\
\end{matrix}\right],\  
\cM\!:=\!
\left[\begin{matrix}
(\fs_1+\ii\fc_1)^{4} &  \cdots &  (\fs_g+\ii\fc_g)^{4}\\
2(\fs_1+\ii\fc_1)^{3}\fs_1& \cdots 
& 2 (\fs_4+\ii\fc_4)^{3}\fs_g\\
\vdots  &\ddots &\vdots \\
16\fs_1^4  & \cdots &  16\fs_5^{4}\\
\end{matrix}\right], 
$$
}}

\begin{align}
\fe_{5,0} &:= 16 \fs_{1} \fs_{2} \fs_{3}\fs_{4},\nonumber\\
\fe_{5,1} &:= 8 \fs_{1} \fs_{2} \fs_{3}\fc_{4}
             +8 \fs_{1} \fs_{2} \fc_{3}\fs_{4}
             +8 \fs_{1} \fc_{2} \fs_{3}\fs_{4}
             +8 \fc_{1} \fs_{2} \fs_{3}\fs_{4},\nonumber\\
\fe_{5,2} &= 4 \fs_{1} \fs_{2} \fc_{3} \fc_{4} 
            + 4 \fs_{1} \fs_{3} \fc_{2} \fc_{4}
            + 4 \fs_{1} \fs_{4} \fc_{2} \fc_{3}
            + 4 \fs_{2} \fs_{3} \fc_{1} \fc_{4}
            + 4 \fs_{2} \fs_{4} \fc_{1} \fc_{3}
            + 4 \fs_{3} \fs_{4} \fc_{1} \fc_{2},\nonumber\\
\fe_{5,3} &=  2 \fs_{1} \fc_{2} \fc_{3} \fc_{4}
            + 2 \fs_{2} \fc_{1} \fc_{3} \fc_{4}
            + 2 \fs_{3} \fc_{1} \fc_{2} \fc_{4}
            + 2 \fs_{4} \fc_{1} \fc_{2} \fc_{3},\nonumber\\
\fe_{5,4} &= \fc_{1} \fc_{2} \fc_{3} \fc_{4}.
\end{align}

\begin{lemma}
\begin{eqnarray*}
\varepsilon_{j,0}&=& \fe_{j,0},\\ 
\varepsilon_{j,1}&=& -2\fe_{j,0}-\ii\fe_{j,1},\\ 
\varepsilon_{j,2}&=& \frac{3}{2}\fe_{j,0} +\ii \frac{3}{2} \fe_{j,1}-\fe_{j,2},\\ 
\varepsilon_{j,3}&=& -\frac{1}{2}\fe_{j,0}
-\ii\frac{3}{4}\fe_{j,1}+\fe_{j,2}+\ii\fe_{j,3},\\ 
\varepsilon_{j,4}&=& \frac{1}{16}\fe_{j,0}
+\ii\frac{1}{8}\fe_{j,1}-\frac{1}{4}\fe_{j,2}-\ii\frac{1}{2}\fe_{j,3}+\fe_{j,4}.
\end{eqnarray*}
\end{lemma}

\begin{lemma}
{\small{
\begin{align*}
\cE\!:=\!
\left[\begin{matrix}
\varepsilon_{1,0} & \varepsilon_{1,1} & \cdots &  \varepsilon_{1,4}\\
\varepsilon_{2,0} & \varepsilon_{2,1} & \cdots &  \varepsilon_{2,4}\\
\vdots & \vdots &\ddots &\vdots \\
\varepsilon_{5,0} & \varepsilon_{5,1} & \cdots &  \varepsilon_{5,4}\\
\end{matrix}\right]\!, \ \ 
\cD\!:=\!
16\left[\begin{matrix}
\displaystyle{\prod_{j\neq 1}}(\fc_j\fs_1-\fc_1\fs_j)  & & \\
   &\ddots & \\
 & & \displaystyle{\prod_{j\neq 5}}(\fc_j\fs_5-\fc_5\fs_j)  \\
\end{matrix}\right]\!, 
\end{align*}
}}
$\cE \cM = \cD$, and thus $\cM \cD^{-1} \cE=I$.
\end{lemma}

As mentioned above, since the elements of $\cE$ are not real even for $\varphi_i \in \RR$, we will decompose these elements in $\cE$ into imaginary parts and real parts.

\begin{align}
&\varepsilon_{j,0,\even}:= \fe_{j,0},\
\varepsilon_{j,2,\even}:= \frac{3}{2}\fe_{j,0} -  \fe_{j,2},\
\varepsilon_{j,4,\even}:=  \frac{1}{16}\fe_{j,0}- \frac{1}{4}\fe_{j,2}
                           +\fe_{j,4},\nonumber\\
&\varepsilon_{j,2,\odd}:= \frac{3}{2}\fe_{j,1},\
\varepsilon_{j,4,\odd}:= \frac{1}{8}\fe_{j,1}-\frac{1}{2}\fe_{j,3}.
\label{eq:varepsilon_eo5}
\end{align}

\begin{lemma}
\begin{align*}
\tcV:=\left[\begin{matrix}
\varepsilon_{1,0,\even} & \varepsilon_{1,2,\odd} & 
\varepsilon_{1,2,\even} & 
\varepsilon_{1,4,\odd} & 
\varepsilon_{1,4,\even}\\
\varepsilon_{2,0,\even} & 
\varepsilon_{2,2,\odd} & 
\varepsilon_{2,2,\even} & 
\varepsilon_{2,4,\odd} & 
\varepsilon_{2,4,\even}\\
\vdots & \vdots &\ddots &\vdots \\
\varepsilon_{5,0,\even} & 
\varepsilon_{5,2,\odd} & 
\varepsilon_{5,2,\even} & 
\varepsilon_{5,4,\odd} & 
\varepsilon_{5,4,\even}
\end{matrix}\right], 
\end{align*}
\begin{align*}
\cI:=\left[\begin{matrix}
1 &  &  &  & \\
 & \ii&  &  & \\
 &   & 1    & &  \\
 &  &    &\ii &  \\
 &   &   & & 1 \\
\end{matrix}\right], \quad
\cB:=\left[\begin{matrix}
1 & -2 & 0 & 1 & 0\\
0 & -2/3& 1 & -1/3 & 0\\
0 & 0  & 1    &-1 & 0 \\
0 & 0  & 0    &-2 & 1 \\
0 & 0  &0   &0 & 1 \\
\end{matrix}\right],
\end{align*}
then $\cE=\tcV \cI \cB$ and 
$\cM \cD^{-1} \tcV \cI \cB=I$.

By letting $\cV:=(\cV_1, \ldots, \cV_5):=\cK\cD^{-1}\tcV$, we have $\cL\cV\cI\cB=I$.
Further, for $(\varphi_a, K_a)_{a=1,\ldots,5} \in S^5 H_{\RR}^\varphi$, $(\varphi_a \neq \varphi_b)$ $(a\neq b)$, 
$\cV_{i,j}$ is real valued, and $\dim_\RR\langle\cV_1, \ldots, \cV_5\rangle_\RR=5$.
\end{lemma}

\begin{proof}
We note $\cL=\cM\cK^{-1}$ and $\cM^{-1}=\cD^{-1}\cE$.
Straight computations show them.
$\cV_1, \ldots, \cV_5$ are linear independent as an $\RR$ vector space.
\end{proof}

\begin{lemma} \label{4lm:dudphi5}
For $(\varphi_a, K_a)_{a=1, 2, \ldots, 5} \in S^5 H$ such that $\varphi_a \neq \varphi_b$ $(a\neq b)$, we have
\begin{equation}
\displaystyle{
\left[
\begin{matrix} d \varphi_1 \\ d \varphi_2 \\ 
d \varphi_3\\ d \varphi_4 \\ d \varphi_5\end{matrix}
\right]
\!=\![\cK\cD]^{-1} \cE
\left[\begin{matrix} d u_1 \\ d u_2\\ du_3\\ du_4 \\ du_5\end{matrix}
\right]
\!=\!\cK\cD^{-1}\tcV\cI\cB
\left[\begin{matrix} d u_1 \\ d u_2\\ du_3\\ du_4 \\ du_5\end{matrix}
\right]
\!=\!\cK\cD^{-1}\tcV
\left[
\begin{matrix} d t_1 \\ d t_2\\ dt_3\\ dt_4\\ dt_5\end{matrix}\right]
=\cV
\left[
\begin{matrix} d t_1 \\  d t_2\\ dt_3\\ dt_4\\ dt_5\end{matrix}\right]
},
\label{eq:Elas3.10}
\end{equation}
where
{\small{
\begin{equation}
\left[
\begin{matrix} dt_1 \\ \vdots \\ d t_5\end{matrix}\right]
\!:=\!\cI\cB\!
\left[\begin{matrix} du_1 \\ \vdots \\ d u_5\end{matrix}\right]\!, \ 
\left[\begin{matrix} du_1 \\ \vdots \\ d u_5\end{matrix}\right]
\!=\![\cI\cB]^{-1}\!
\left[\begin{matrix} dt_1 \\ \vdots \\ d t_5\end{matrix}\right]\!, \ 
[\cI\cB]^{-1}\!:=\!
\left[\begin{matrix}
1 & 3\ii & 3 & \ii/2 & 1/2\\
0 & 3\ii/2& 3/2 & \ii/2 & 1/2\\
0 & 0  & 1    &\ii/2 & 1/2 \\
0 & 0  & 0    &\ii/2 & 1/2 \\
0 & 0  &0   &0 & 1 \\
\end{matrix}\right]\!.
\label{eq:dus_dts}
\end{equation}
}}
\end{lemma}

Since (\ref{eq:dus_dts}) also shows
$$
(\cV_1,  \cV_2, \ldots, \cV_5)=
\begin{pmatrix}
\partial_{t_1} \varphi_1&  \partial_{t_2} \varphi_1& 
\cdots & \partial_{t_5} \varphi_1\\
\partial_{t_1} \varphi_2& \partial_{t_2} \varphi_2& 
\cdots & \partial_{t_5} \varphi_2\\
   \vdots & \vdots & \ddots & \vdots\\
\partial_{t_1} \varphi_5& -  \partial_{t_2} \varphi_5& 
\cdots &\partial_{t_5} \varphi_5
\end{pmatrix}, 
$$
$$
\begin{pmatrix} d\varphi_1 \\ d \varphi_2 \\ \vdots \\ d \varphi_5\end{pmatrix}
=
(\cV_1,  \cV_2, \ldots, \cV_5)
\begin{pmatrix} dt_1 \\ d t_2 \\  dt_3 \\ d t_4 \\ d t_5\end{pmatrix},
$$
the relations between differential operators are given by
\begin{equation}
\begin{pmatrix} 
\partial_{u_1} \\ \partial_{u_2} \\ 
\partial_{u_3} \\
\partial_{u_4} \\
 \partial_{u_5}\end{pmatrix}
=\trp [\cI\cB]
\begin{pmatrix} 
\partial_{t_1} \\ 
\partial_{t_2} \\ 
\partial_{t_3} \\
\partial_{t_4} \\
\partial_{t_5}\end{pmatrix}
,\quad
\trp [\cI\cB]=
\begin{pmatrix} 
1 & 0 & 0 & 0 & 0 \\ 
-2 & -2\ii/3 & 0 & 0 & 0\\
0 & \ii & 1 & 0 & 0\\
1 & -\ii/3 & -1 & -2\ii & 0\\
0 & 0 & 0 & \ii & 1 \end{pmatrix}.
\label{eq:pus_pts}
\end{equation}
Noting $\trp [\cI\cB]=[[\frac{\partial t_j}{\partial u_i}]_{ij}]$, we have the relations,
\begin{equation} 
\partial_{u_5} = \partial_{t_5}+\ii \partial_{t_4}, \quad
\partial_{u_4} = -\partial_{t_3}-2\ii \partial_{t_4}
+\partial_{t_1}-\frac{\ii}{3} \partial_{t_2}.
\label{eq:22}
\end{equation}

\begin{definition}
$$
(\be_1, \ii\be_2, \be_3, \ii\be_4, \be_5):=[\cI\cB]^{-1}.
$$
\end{definition}

\subsection{The case of general $g$:}\label{ssc:g=g}

As we displayed the case of $g=5$ in the previous subsection, we generalize the arguments of $g=5$ to general $g$: 
the following naturally contains the genus three case in \cite{Ma24d} and five in the previous subsection.
Further, the precise proofs are in Appendix.

As we introduced $\varepsilon$ in Definition \ref{def:varepsilon} which has both real and imaginary parts, we decompose it into real parts and imaginary parts following (\ref{eq:varepsilon_eo5}).

\begin{definition}
By assuming that $\fs_i$ and $\fc_i$ are real, we let
$$
\varepsilon_{i,j,\even}:=\re(\varepsilon_{i,j}), \quad
\varepsilon_{i,j,\odd}:=\im(\varepsilon_{i,j}).
$$
For the odd $g$ case, let
{\small{
$$
\tcV:= \left[\begin{matrix}
 \varepsilon_{1,0,\even} &  \varepsilon_{1,2,\odd} & \cdots &  
     \varepsilon_{1,2\ell,\odd}  & \varepsilon_{1,2\ell,\even}  
  &  \cdots   \\
 \varepsilon_{2,0,\even} &  \varepsilon_{2,2,\odd} & \cdots &  
     \varepsilon_{2,2\ell,\odd}  & \varepsilon_{2,2\ell,\even}  
  &  \cdots   \\
\vdots   &\vdots& \ddots &  \vdots& \vdots& \ddots   \\
 \varepsilon_{g-1,0,\even} &  \varepsilon_{g-1,2,\odd} & \cdots &  
     \varepsilon_{g-1,2\ell,\odd}  & \varepsilon_{g-1,2\ell,\even}  
  &  \cdots   \\
 \varepsilon_{g,0,\even} &  \varepsilon_{g,2,\odd} & \cdots &  
     \varepsilon_{g,2\ell,\odd}  & \varepsilon_{g-1,2\ell,\even}  
  &  \cdots   
\end{matrix}\right.
\hskip 0.1\hsize
$$
$$
\hskip 0.6\hsize
\left.\begin{matrix}
  \varepsilon_{1,g-1,\odd}  &  \varepsilon_{1,g-1,\even}\\
  \varepsilon_{2,g-1,\odd}  &  \varepsilon_{2,g-1,\even}\\
 \vdots &\vdots \\
  \varepsilon_{g-1,g-1,\odd}  &  \varepsilon_{g-1,g-1,\even}\\
  \varepsilon_{g,g-1,\odd}  &  \varepsilon_{g,g-1,\even}
\end{matrix}\right].
$$
}}
For the even $g$ case, let
{\small{
$$
\tcV:= \left[\begin{matrix}
 \varepsilon_{1,1,\odd} &  \varepsilon_{1,1,\even} & \cdots &   \varepsilon_{1,2\ell-1,\odd}  &  \varepsilon_{1,2\ell-1,\even}  &  \cdots    \\
 \varepsilon_{2,1,\odd} &  \varepsilon_{2,1,\even} & \cdots &   \varepsilon_{2,2\ell-1,\odd}  &  \varepsilon_{2,2\ell-1,\even}  &  \cdots    \\
\vdots   &\vdots& \ddots &  \vdots& \vdots& \ddots   \\
 \varepsilon_{g-1,1,\odd} &  \varepsilon_{g-1,1,\even} & \cdots &  
 \varepsilon_{g-1,2\ell-1,\odd}  &  \varepsilon_{g-1,2\ell-1,\even}  &  \cdots   \\
 \varepsilon_{g,1,\odd} &  \varepsilon_{g-1,1,\even} & \cdots &  
 \varepsilon_{g,2\ell-1,\odd}  &  \varepsilon_{g,2\ell-1,\even}  &  \cdots   \\
\end{matrix}\right.
\hskip 0.1\hsize
$$
$$
\hskip 0.6\hsize
\left.\begin{matrix}
 \varepsilon_{1,g-1,\odd}  &  \varepsilon_{1,g-1,\even}  \\
 \varepsilon_{2,g-1,\odd}  &  \varepsilon_{2,g-1,\even}  \\
 \vdots &\vdots \\
 \varepsilon_{g-1,g-1,\odd}  &  \varepsilon_{g-1,g-1,\even}  \\
 \varepsilon_{g,g-1,\odd}  &  \varepsilon_{g,g-1,\even}  \\
\end{matrix}\right].
$$
}}
\end{definition}

The following is a key lemma.
\begin{lemma}\label{lm4.1}
We define a $g\times g$ matrix:
Then $\cE$ is expressed as
$$
\cE= \tcV \cI\cB,  \qquad
\cI\cB= \begin{pmatrix} \cB^{[g-3,g-3]} & \cB^{[g-3],1}& \cB^{[g-3],2} &0\\
                    0& 1  & -1 & 0 \\
                     0 &0 & -2\ii & \ii \\
                      0 &0        & 0 & 1 \end{pmatrix}\in \GL(\QQ[\ii], g),
$$
where
$$
\cI = \mathrm{diag} \cI_{ii}, \quad
\cI_{ii} = \left\{
\begin{matrix} 1 & \mbox{if }g - i \mbox{ is even,}\\
\ii & \mbox{if }g - i \mbox{ is odd,}
\end{matrix} \right.
$$
$\cB^{[g-3,g-3]} \in \Mat_\CC( (g -3) \times (g-3))$, $\cB^{[g-3],1}, \cB^{[g-3],2}\in \Mat_\CC( (g -3) \times 1)$, and $0$ denotes a zero $\ell \times k$ matrix, although $(0)=0$.
Further, $\cB^{[g-3],2}=\trp(\fb_1, \ii \fb_2, \ldots, \fb_{g-4}, \ii \fb_{g-3})$ for an odd $g$ and $\cB^{[g-3],2}=\trp(\ii\fb_1, \fb_2, \ldots, \fb_{g-4}, \ii \fb_{g-3})$ for an even $g$.
\end{lemma}

In other words, we will decompose the image of the Abelian integral or the Abel-Jacobi map from a real analytic viewpoint or consider the $\RR$-linear transformation in $T^*\CC^g$:

\begin{definition} We define the following:
$$
\cV:=\cK \cD^{-1} \tcV, \quad
\begin{array}{ll}
(\be_1, \ii\be_2, \be_3, \ldots, \ii\be_{g-1}, \be_g):=[\cI\cB]^{-1}
& \text{for odd }g,\\
(\ii\be_1, \be_2, \ii\be_3, \ldots, \ii\be_{g-1}, \be_g):=[\cI\cB]^{-1}
& \text{for even }g.
\end{array}
$$
\end{definition}

We have a key lemma.
In terms of $\cV$, Lemma \ref{4lm:dudphig1} reads as follows.
\begin{lemma} \label{4lm:dudphig}
For $(\varphi_a, K_a)_{a=1, \ldots, g} \in S^g H$, $\varphi_a \neq \varphi_b$ $(a\neq b)$, we have
\begin{equation}
\displaystyle{
\begin{pmatrix} d \varphi_1 \\ d \varphi_2 \\ \vdots \\ d \varphi_g\end{pmatrix}
=\cK\cD^{-1} \cE
\begin{pmatrix} d u_1 \\ d u_2\\ \vdots \\ du_g\end{pmatrix}
=\cV
\begin{pmatrix} d t_1 \\ d t_2\\ \vdots \\ dt_g\end{pmatrix}
},
\qquad
\cL^{-1}=\cK \cE.
\label{eq:Elas3.14}
\end{equation}
\end{lemma}

\begin{proof}
The straightforward computations show it for $(\varphi_a, K_a)$.
Even at the branch point, this expression works since $\nu_i$ is a holomorphic one-form. \qed
\end{proof}

\section{Real hyperelliptic solutions of the FGMKdV equation over $\RR$ of $g$}

As we show its proof and background in Appendix, this angle expression provides  the following lemmas:
Lemma \ref{lm:4.7} is one of the key lemmas in this paper.

We remark that (\ref{eq:Elas3.14}) in Lemma \ref{4lm:dudphig} means that even if $\varphi_j$, $(j=1, 2, \ldots, g)$ is real, $d\varphi_j$ is a complex valued one-form.
We let it decomposed to $d\varphi_j = d\varphi_{j,\rr}+ \ii d \varphi_{j, \ri}$, $(j=1, 2, \ldots, g)$.
Further, we introduce $\varphi := \varphi_1 + \cdots +\varphi_g \in \RR$ and $d\varphi = d\varphi_{\rr}+ \ii d \varphi_{\ri}$; 
$\psi_\rr = 2 \varphi$, $d\psi_\rr = 2 d\varphi_\rr$ and $d\psi_\ri =2 d\varphi_\ri$ for $\psi$ in (\ref{4eq:loopMKdV2}) and (\ref{4eq:gaugedMKdV2}).
We sometimes write $\varphi_{a,\rr}:=\varphi_a$.

Using them, we have the following lemma:
\begin{lemma}\label{lm:4.ga}
$$
\partial_{t_a}\psi = (1, 1,\ldots,1, 1)\cV_a=
\cV_{a,1}+\cV_{a,2}+\cdots+\cV_{a,g}, \quad (a=1, 2,\ldots, g),
$$
$$
\partial_{u_{g-1}}\psi = (1, 1,\ldots,1, 1)(\cV_{g-2}
+2\ii \cV_{g-1}+\sum_{j=1}^{g-3}\sqrt{(-1)^j} \fb_{g-2-j}\cV_{g-2-j}),
$$
$$
\partial_{u_g}\psi = (1, 1,\ldots,1, 1)(\cV_g+\ii \cV_{g-1}).
$$
\end{lemma}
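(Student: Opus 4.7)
The plan is to express every derivative of $\psi$ in terms of the columns of $\cV$ via the clean relation $d\varphi=\cV\,dt$, and then invoke (\ref{eq:22}) for the last two formulas.

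First, I would compute $d\psi$ directly from its definition $\psi=-\ii\log\prod_{i=1}^{g}(b_0-x_i)$ of Theorem~\ref{4th:MKdVloop}. Using the angle parametrization $x_i-b_0=\ee^{2\ii\varphi_i}$ of Section~\ref{sec:g=3}, we have $dx_i/(x_i-b_0)=2\ii\,d\varphi_i$, so $d\psi$ is (up to the normalization fixed at the end of Section~\ref{sec:g=3}) proportional to $d\varphi_1+\cdots+d\varphi_g$. Hence every $\partial$-derivative of $\psi$ equals the corresponding derivative of $\varphi=\varphi_1+\cdots+\varphi_g$.

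Second, I would pass from $d\varphi$ to $dt$. Lemma~\ref{4lm:dudphig} gives $d\varphi=\cL^{-1}du=\cK\cM\,du$. The key Lemma~\ref{lm4.1} factorizes this as $\cK\cM=\cV\cB$, while the defining relation (\ref{eq:dus_dts}) reads $du=\cB^{-1}dt$. Combining,
\begin{equation*}
d\varphi=\cK\cM\,du=\cV\cB\cdot\cB^{-1}dt=\cV\,dt,
\end{equation*}
so $\partial_{t_a}\varphi_i$ is the $(i,a)$ entry of $\cV$. Summing over $i=1,\ldots,g$ then yields the first identity $\partial_{t_a}\psi=(1,1,\ldots,1)\cV_a$.

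Third, the remaining two identities follow by the chain rule via (\ref{eq:22}). Since (\ref{eq:22}) writes each of $\partial_{u_g}$ and $\partial_{u_{g-1}}$ as an explicit $\CC$-linear combination of $\partial_{t_1},\ldots,\partial_{t_g}$, substituting the first identity into those combinations immediately produces the claimed expressions $\partial_{u_g}\psi=(1,\ldots,1)(\cV_g+\ii\cV_{g-1})$ and the longer formula for $\partial_{u_{g-1}}\psi$ with the coefficients $\sqrt{(-1)^j}\fb_{g-2-j}$.

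The main obstacle is not Lemma~\ref{lm:4.ga} itself but the input Lemma~\ref{lm4.1}, i.e.\ the factorization $\cK\cM=\cV\cB$ through the very specific block matrix $\cB$; once that is granted, the present lemma reduces to a short change-of-variables and chain-rule calculation. The combinatorial content behind $\cK\cM=\cV\cB$, namely the way the entries $\varepsilon_{j,k}$ arrange into real and imaginary parts according to the parity of $g$, is handled through the shifted elementary symmetric polynomials developed in the Appendix.
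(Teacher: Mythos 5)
Your proposal is correct and follows exactly the route the paper itself takes: the paper states this lemma with no separate proof, relying on the display $d\varphi=(\cV_1,\ldots,\cV_g)\,dt$ (your step $d\varphi=\cK\cM\,du=\cV\cB\cdot\cB^{-1}dt=\cV\,dt$) together with the chain-rule expressions (\ref{eq:22}), which is precisely your argument. The only caveat is one you inherit from the paper rather than introduce: substituting into (\ref{eq:22}) literally gives $-\cV_{g-2}-2\ii\cV_{g-1}$ in the $\partial_{u_{g-1}}\psi$ formula, so the signs in the lemma as printed do not match (\ref{eq:22}); this is an internal inconsistency of the source, not a gap in your reasoning.
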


Here we recall the Cauchy-Riemann relations of these parameters.
We have $u_a = u_{a\rr}+\ii u_{a\ri}$ and let $t_a := t_{a\rr}+\ii t_{a\ri}$ for $a=1, 2, \ldots, g$.
For a complex analytic function $\psi = \psi_\rr + \ii \psi_\ri$,
(\ref{eq:Elas3.14}) shows
$$
\begin{pmatrix} 
\partial_{u_1} \\ \partial_{u_2} \\ 
\vdots \\
 \partial_{u_g}\end{pmatrix}
=\trp [\cI\cB]
\begin{pmatrix} 
\partial_{t_1} \\ 
\partial_{t_2} \\ 
\vdots \\
\partial_{t_g}\end{pmatrix},
$$
\begin{align}
&\partial_{u_a\rr}\psi_\rr=\partial_{u_a\ri}\psi_\ri , \qquad
\partial_{u_a\rr}\psi_\ri=-\partial_{u_a\ri}\psi_\rr,
\qquad (a= 1, 2, \ldots,  g),\nonumber \\
&\partial_{t_a\rr}\psi_\rr=\partial_{t_a\ri}\psi_\ri , \qquad
\partial_{t_a\rr}\psi_\ri=-\partial_{t_a\ri}\psi_\rr,
\qquad (a= 1, 2, \ldots, g).
\label{eq:4.0}
\end{align}

Since we are concerned with the case that $\varphi_a\in \RR$, i.e., $\varphi_{a,\ri}=0$, $(a=1, 2, \ldots, g)$ as in $H_\RR^\varphi$, we may assume that $t_a \in \RR$ belongs to $\CC^g$.
Lemma \ref{lm:4.ga} and (\ref{eq:4.0}) show the following lemma:

\begin{lemma}\label{lm:4.7}
For $\varphi_a\in \RR$, i.e., $\varphi_{a,\ri}=0$, $(a=1, 2, \ldots, g)$,
the following relations hold:
\begin{eqnarray*}
&\partial_{u_{g-1},\rr} \psi_\rr=(-\partial_{t_{g-2},\rr} 
+\sum_{j=1}^{\lfloor(g-3)/2\rfloor} \fb_{g-2-2j}\partial_{t_{g-2-2j}})\psi_\rr\\
&=(1,\ldots, 1)(-\cV_{g-2}
+\sum_{j=1}^{\lfloor(g-3)/2\rfloor}\fb_{g-2-2j}\cV_{g-2-2j}),\\
&\partial_{u_{g-1},\rr} \psi_\ri=(2\partial_{t_{g-1},\rr},
+\sum_{j=1}^{\lfloor(g-3)/2\rfloor} \fb_{g-1-2j}\partial_{t_{g-1-2j}})\psi_\ri\\
&=(1, \ldots, 1)(\cV_{g-1}+\sum_{j=1}^{\lfloor(g-3)/2\rfloor} \fb_{g-1-2j}\cV_{g-1-2j}),
\end{eqnarray*}
\begin{equation}
\partial_{u_g,\rr} \psi_\rr=\partial_{t_g,\rr} \psi_\rr=(1,\ldots,1)\cV_g, \quad
\partial_{u_g,\ri} \psi_\rr=-\partial_{t_{g-1},\rr} \psi_\rr=-(1,\ldots,1)\cV_{g-1}.
\end{equation}
\end{lemma}
\begin{proof}
We obviously obtain them.
\end{proof}

\begin{remark}
{\rm{
Lemma \ref{lm:4.7} does not claim that there is a different complex structure in $J_X$ and the image of $\tv$.
These parameterizations are consistent only for the local regions $H_X$ related to the arcs of $S^1$ in $\hkappa_X \hX$, or $\varphi_a \in \RR$ and $\varphi_{a,\ri}=0$ ($a=1, 2, \ldots, g$).
This means that we simply embed the real vector space $\RR^g$ in $\CC^g$ via the matrix $\cB$ and $\cB^{-1}$.
}}
\end{remark}

\bigskip

The FGMKdV equations (\ref{4eq:gaugedMKdV2}) with Lemma \ref{lm:4.7} can be expressed in terms of the parameterizations of $t$'s.
Since these $d t_a$ are linearly independent, we can set $dt_i =0$ ($i<g-2$).
We give the first theorem in this paper.
\begin{theorem}\label{th:4.2}
Assume $\varphi_i\in \RR$,  $(i=1, 2, \ldots, g)$, i.e., $\psi\in \RR$ or $\psi_{\ri}=0$.
Let $t:=t_{g-2\rr}$, $\ft:=t_{g-1\ri}$,  and $s:=t_{g\rr}$ belonging to $\RR$, and let us consider the differential equation,
\begin{equation}
\begin{pmatrix} d\varphi_1 \\ d \varphi_2 \\ \vdots \\ d \varphi_g\end{pmatrix}
=(\cV_{g-2}, \ii\cV_{g-1},\cV_g)\begin{pmatrix} dt \\ d \ft \\ d s\end{pmatrix}.\label{eq:4.2}
\end{equation}
Then (\ref{4eq:loopMKdV2}) is reduced to the coupled FGMKdV equations,
\begin{eqnarray}
(\partial_{u_{g-1},\rr}-\frac{1}{2}
(\lambda_{2g}+3b_0+\frac{3}{4}(\partial_\ft\psi_\rr)^2)
          \partial_{s})\psi_\rr
           +\frac{1}{8}
\left(\partial_{s} \psi_\rr\right)^3
 +\frac{1}{4}\partial_{s}^3 \psi_\rr&=&0,
\label{eq:FGMKdV1r}
\\
(\partial_{u_{g-1},\ri}-\frac{1}{2}
(\lambda_{2g}+3b_0-\frac{3}{4}(\partial_s\psi_\rr)^2)
          \partial_{\ft})\psi_\rr
           +\frac{1}{8}
\left(\partial_{\ft} \psi_\rr\right)^3
 +\frac{1}{4}\partial_{\ft}^3 \psi_\rr&=&0.
\label{eq:FGMKdV1i}
\end{eqnarray}
If $\partial_{u_g,\rr} \psi_{\ri}=\partial_s \psi_\ri=\partial_\ft \psi_\rr=$ constant for the region $S^g H_\RR$, (\ref{eq:FGMKdV1r}) is further reduced to the FMKdV equation over $\RR$.
\end{theorem}

We consider that each $\kappa_\hX\gamma_i$ of $(\gamma_1, \ldots, \gamma_g)\in [S^g \tH_\RR]^0$ forms loops illustrated in Figure \ref{fg:Fig02}.

Theorem \ref{th:4.2} leads to the nice property that the conditions CI and CII in Condition \ref{cnd} are satisfied.
We explicitly describe the global property following \cite{Ma24b}.

\begin{theorem}\label{4th:reality_gga}
Let $(P_{a,0}=( \ee^{\ii\varphi_{a,0}}$, $K_{a,0}))_{a=1,\ldots,g}$ be a point in $\kappa_\hX[S^g \tH_\RR]^0$ where $\varphi_{a,0}\in [\varphi_\fb^{[a-]}, \varphi_\fb^{[a+]}]$, and set $\gamma_0 \in [S^g \tH_\RR]^0$ such that $\kappa_{\hX}\gamma_0=(P_{a,0})_{a=1,\ldots,g}$.
For $(t,s) \in \RR^2$,
\begin{equation}
  \begin{pmatrix} \varphi_1(t,s) \\ \varphi_2(t,s) \\ \vdots\\ \varphi_g(t,s) 
  \end{pmatrix}
 :=  \left(\int^t_0\cV_{g-2} d t + \int^s_0 \cV_g d s\right)+
\begin{pmatrix} \varphi_{1,0} \\ \varphi_{2,0} \\ \vdots\\ \varphi_{g,0} 
  \end{pmatrix}
\label{eq:Xipnt}
\end{equation}
forms $\gamma(t,s)\in S^g \tH_\RR$ such that $\gamma(0,0)=\gamma_0$. 
Then the image of $\gamma$,
\begin{equation}
\tSS_{\gamma_0}:=\{ \gamma(t,s) \ | \ (t,s) \in \RR^2\}\subset S^g\tH_\RR,
\label{eq:tS_varphi}
\end{equation}
provides a global solution of the FGMKdV equation in Theorem \ref{th:4.2}.
\end{theorem}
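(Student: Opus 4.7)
The plan is to verify three points in turn: (i) that the defining system
\[
d\varphi_a = (\cV_{g-2})_a\,dt + (\cV_g)_a\,ds, \qquad a=1,\ldots,g,
\]
has a global solution on $\RR^2$ whose lift lies in $S^g\tZ_\RR$; (ii) that the resulting $\psi = 2\sum_a \varphi_a$ is real and satisfies Condition \ref{cnd} CI and CII; and (iii) that $\psi_\rr$ satisfies the FGMKdV equation \eqref{eq:FGMKdV1r} of Theorem \ref{th:4.2}. (Here I read the $\cV_1$ in \eqref{eq:Xipnt} as $\cV_{g-2}$, consistent with Theorem \ref{th:4.2}; the two agree when $g=3$.)

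For (i), on the open regular locus $[S^g\tZ_\RR]^0$ minus preimages of branch points, Lemma \ref{lm4.1} provides smooth columns $\cV_{g-2}$ and $\cV_g$ as vector fields in the $(\varphi_1,\ldots,\varphi_g)$ chart. These two flows commute: in the $t_a$-coordinates of \eqref{eq:pus_pts}, $\partial_{t_{g-2,\rr}}$ and $\partial_{t_{g,\rr}}$ are obviously commuting constant-coefficient vector fields on $\CC^g$, and pullback under $\tv^*$ preserves commutativity wherever $\tv^*$ is non-degenerate (Assumption \ref{Asmp2}), so $\gamma(t,s)$ is well-defined locally by integrating the two flows in either order. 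Extending through branch points is the delicate step: near $\varphi_a = \varphi_\fb^{[a\pm]}$, the factor $K_a = \tgamma_a \tK(\varphi_a)$ has a simple square-root zero, and since the matrix $\cK$ in Lemma \ref{4lm:dudphig} is diagonal with $K_j$ in the numerator, both $d\varphi_a/dt$ and $d\varphi_a/ds$ vanish at exactly the same rate. Hence the trajectory reaches the boundary of $[\varphi_\fb^{[a-]},\varphi_\fb^{[a+]}]$ in finite parameter time, and the sign prescription $K_a \to -K_a$ realizes the monodromy of the double cover $\varpi_\hX$ around the branch point $\hB^{\pm}_a$. This continuation lifts canonically along $\kappa_\hX$, and iterating across the $2g$ circles of $Z_\RR$ extends the flow to all of $\RR^2$, producing a globally smooth map $\gamma:\RR^2 \to S^g\tZ_\RR$. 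This is the hyperelliptic analogue of the $\sn,\cn,\dn$ uniformization of the Jacobi case.

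For (ii) and (iii), by construction $\varphi_a(t,s)\in\RR$, so $|x_a - b_0| = |\ee^{2\ii\varphi_a}| = 1$, yielding $\prod_a |x_a - b_0| = 1$ (Condition \ref{cnd} CI) and $\psi = -\ii\log\prod_a(b_0 - x_a) = 2\sum_a \varphi_a \in \RR$, i.e., $\psi_\ri \equiv 0$. The definition \eqref{eq:Xipnt} also forces $dt_{g-1,\rr} = dt_{g-1,\ri} = 0$ along the image of $\gamma$, which is Condition \ref{cnd} CII. All hypotheses of Theorem \ref{th:4.2} are satisfied, so \eqref{eq:FGMKdV1r} holds along $\gamma(t,s)$ as a global identity on $\RR^2$, as desired.

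The main obstacle is the branch-point analysis in Step (i): one must show the trajectory continues smoothly across branch points uniformly in $(t,s)$ and consistently across all $g$ coordinates (each of which hits its own branch values at possibly different parameter times), and that the simultaneous lift to $S^g\tZ_\RR$ remains well-defined. This reduces to the standard fact that hyperelliptic integrals uniformize the Jacobian in the real-analytic sense—the picture implicit in Weierstrass's $\Al$-function and in the genus-three treatment of \cite{Ma24b, Ma24d}.
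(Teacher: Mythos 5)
Your argument is correct and follows essentially the same route as the paper, whose proof simply defers to Proposition 5.4 of \cite{Ma24b} and notes that the disjointness of the intervals $[\varphi_\fb^{[a-]},\varphi_\fb^{[a+]}]$ lets each component be integrated as a contour integral on its own loop of $Z_\RR$; your commutativity check and branch-point continuation are exactly the details being deferred there. Your reading of $\cV_1$ in (\ref{eq:Xipnt}) as $\cV_{g-2}$ (the two coincide for $g=3$) is the one consistent with Theorem \ref{th:4.2}.
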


\begin{proof}
We follow the proof in Proposition 5.4 in \cite{Ma24b}, and Proposition 4.11 in \cite{Ma24d}.
Since $\gamma(t,s)=(\gamma_1, \ldots, \gamma_g)$ satisfies $\gamma_i \cap \gamma_j=\emptyset$ for $i \neq j$, we can simply integrate the orbits and find $\gamma \in S^g \tH_\RR$ for every $(t,s) \in \RR^2$.
\end{proof}


By letting $\LL_{\tv(\gamma_0)}:=\{ v_{t,s}:=\be_1 s +\be_3 t +v_0\ |\ (t,s)\in \RR^2\}\subset \CC^g$, we have the graph of $\tv^{-1}$ in $\CC^g\times S^g \tH_\RR$,
\begin{equation}
\cG(\tv^{-1}|\LL_{\tv(\gamma_0)})=\left\{ \left(v_{t,s},\gamma(t,s) \right)\ | \ (t,s) \in \RR^2\right\}\subset \LL_{\tv(\gamma_0)} \times \tSS_{\gamma_0} 
\label{eq:LL_tS}.
\end{equation}
At every point in the concerned subspace in $S^g\tH_\RR$, the meromorphic function $\psi$ satisfies the FGMKdV equations (\ref{eq:FGMKdV1r}) and (\ref{eq:FGMKdV1i}) as differential identities.
For a certain $\gamma =(\gamma_1, \gamma_2, \ldots, \gamma_g)\in \tSS_{\gamma_0}$, we can express it by using (\ref{eq:trans_Phi}), i.e., $\varphi_a:=\Phi(\gamma_a)$ for $a=1,2,\ldots, g$, and obtain a real valued one $\psi_{\rr}=2(\varphi_{1}+ \varphi_{2}+\cdots+\varphi_{g})$.
$\cG(\tv^{-1}|\LL_{\tv(\gamma_0)})$ shows a solution of the FGMKdV equation including the time-development in $t$.

As in the previous paper \cite{Ma24d}, we will restrict $\tSS_{\gamma_0}$, $\LL_{v_0}$, and $\cG(\tv^{-1}|\LL_{v_0})$ to the following subspaces with $t=0$ for simplicity, in order to obtain loops beyond Euler's figure eight as a snapshot of $t=0$.
Let $v_0:=\tv(\gamma_0)$.
\begin{equation}
\begin{split}
&\tS_{\gamma_{0}}:=\{ \gamma(s):=\gamma(0,s) \ | \ s \in \RR\},\\
&L_{v_0}:=\left\{v_s:=\be_1 s +v_0 \Bigr|\ s\in \RR\right\}, \\
&\cG(\tv^{-1}|L_{v_0}):=\left\{ (v_s,\gamma(s))\ | \ s \in \RR\right\}\subset \tS_{\gamma_{0}}\times L_{v_0}\subset S^g \tH_\RR \times \CC^g.
\label{eq:tS_varphi,t}
\end{split}
\end{equation}
Then using the graph space $(\gamma, \psi(\kappa_X(\gamma))) \in S^g \tH_\RR \times \CC$, we have a graph structure from (\ref{eq:Xipnt}),
\begin{equation}
\Psi_{t=0}:=\left\{ \left(s, \psi_\rr(\kappa_X(\gamma(t=0,s))\right)\ | \ s \in \RR\right\}\subset \RR\times \RR.
\label{eq:Psit}
\end{equation}

More precisely speaking, we have the third theorem of this paper.

\begin{theorem}\label{pr:solgMKdV}
Let $t=0$. For $\displaystyle{\begin{pmatrix}\varphi_{1}(0, s)\\ \varphi_{2}(0, s)\\ \vdots \\ \varphi_{g}(0, s)\end{pmatrix}}$ satisfying $\varphi_a \in [\varphi_\fb^{[a-]}, \varphi_\fb^{[a+]}]$ $(a = 1, \ldots, g)$ of a solution $\gamma(0,s)$ of the differential equation (\ref{eq:4.2}) in $S^g \hX$, we let $\psi_{\rr}(t=0, s)=2(\varphi_{1}(0,s)+ \varphi_{2}(0,s)$ $+\cdots+\varphi_{g}(0,s))$.
Then $\psi_\rr(t, s)$ at $t=0$ satisfies the FGMKdV equation (\ref{eq:FGMKdV1r}), where the gauge field is $A(t,s)=(\lambda_{2g}-3+\frac{3}{4}(\partial_{s}\psi_\ri(t,s))^2)/2$ given by 
\begin{equation}
\partial_{s}\psi_\ri(t=0, s)=(1,1,\ldots, 1)\cV_{g-1}.
\label{eq:g3CIIIi}
\end{equation}
\end{theorem}

\begin{proof}
The following is basically the same as the proof of Theorem 6.1 in \cite{Ma24b} and Theorem 4.12 in \cite{Ma24d}.
However, since this proof provides the algorithm for obtaining the global solutions for the hyperelliptic Riemann surface $\hX$, which is essential in Section 5.
We will show it briefly.

From Theorems \ref{th:4.2} and \ref{pr:solgMKdV}, for $du_{g,\rr} = ds$, we have $\psi_\rr(t,s)$ and $\partial_{s}\psi_\ri$ in (\ref{eq:g3CIIIi}).

We show that a global solution of (\ref{eq:4.2}) exists as an orbit in $S^g H_\RR^\varphi$. 
Let us prove this statement for the case of Figure \ref{fg:Fig01} (a) as follows.

We consider a primitive quadrature (\ref{eq:4.2}) for a real infinitesimal value $\delta s$ instead of $ds$ by letting $dt =0$ and $d\ft =0$.
We find $\delta \varphi_{a,\rr} =\cV_{g,a} \delta s$ for the real part of the entries of $\cK \cM$.
The orbit of $\varphi_a$ moves back and forth between the branch points $[\varphi_\fb^{[a-]}, \varphi_\fb^{[a+]}]\subset \hvarpi_x \kappa_H[S^g \tH_\RR]^0$ in (\ref{4eq:HEcurve_phi}) as in Figure \ref{fg:Fig01}.
Thus $\ee^{\ii \varphi_i}$ exists on the arc of the unit circle as in Figure \ref{fg:Fig01} (a) so that $(\ee^{\ii \varphi_a},  K_a)\in H_\RR$ draws a loop $\tgamma_i$ as in Figure \ref{fg:Fig01}.

Then we find an orbit in $S^g H_\RR^\varphi$ as a solution of (\ref{eq:4.2}) since $\cV_a$ is regular for disjoint $\varphi_a$, and $[\varphi_\fb^{[a-]}, \varphi_\fb^{[a+]}]$ and $[\varphi_\fb^{[b-]}, \varphi_\fb^{[b+]}]$ are disjoint if $a\neq b$ in $[S^g \tH_\RR]^0$.
Hence we prove it.
\end{proof}

\begin{remark}\label{rmk:4.10}
{\rm{
In this paper we restrict ourselves to $[S^g \tH_\RR]^0$.
However, there are other possibilities; we can directly extend our arguments for the cases $S^g \tH_{\ii \RR}$ which correspond to $u_{a,\ri}$ as in (\ref{4eq:gaugedMKdV2}).
Here $S^g \tH_{\ii \RR}$ is defined by $H_{\ii \RR}:=\{ (\ee^{2\ii \varphi}, K)\in H \ |\ K \in \ii\RR\}$ as in Definition \ref{Asmp2}.
Furthermore, we avoid the intersection of the paths $\gamma_i$ and $\gamma_j$, $(i \neq j)$ in this paper, but we could consider the intersection as argued in the previous paper \cite{Ma24b}.
Moreover, there are further possibilities as the real hyperelliptic solutions of the FGMKdV equations; for example, some of $e_j$ and $e_{j+1}=1/e_j$ can be real.
}}
\end{remark}

\begin{remark}\label{rmk:4.11}
{\rm{
Here we give some comments on condition III, $\partial_{u_g, \rr}\psi=$ constant.
The condition is now given by $(1,1,\ldots,1) \cV_{g-1}=$ constant.
This is realized as the vanishing of the meromorphic functions on $S^g \hX$ and thus on $\hJ_X$.
It might be obtained by the ratio of the sigma functions explicitly.
Thus, it should be written down more concretely and studied from an algebraic geometric point of view in the future.
}}
\end{remark}

\begin{remark}\label{rmk:4.12}
{\rm{
In our investigation, we have considered the differential relations given by the symmetric functions on $S^g \hX$.
There we deal with the derivatives of the quotient ring $\CC[\fc_1, \fs_1, K_1, \ldots,$ $ \fc_g, \fs_g, K_g]$ divided by $\fc_i^2+\fs_i^2 =1$, and (\ref{4eq:HEcurve_phi}).
Recently,  Buchstaber and Mikhailov investigated such systems as the Lie algebras of vector fields on universal bundles of symmetric product of curves, and as an integrable Hamiltonian system there \cite{BuMi1, BuMi2}.
These visions give a sophisticated interpretation from a modern mathematical point of view of the methods of Weierstrass \cite{Wei54} and Baker \cite{Baker03} on which we are based.
The rewrite may reveal the mathematical essence of our approach and provide a foundation between biophysics and modern mathematics, since this system is closely related to the shapes of supercoiled DNA via the excited states of Euler's elastica.
}}
\end{remark}

\begin{remark}\label{rmk:4.12}
{\rm{
Corresponding to (\ref{4eq:gaugedMKdV2}), we have (\ref{eq:FGMKdV1r}) and (\ref{eq:FGMKdV1i}) in Theorem \ref{th:4.2}.
Thus, solving the differential equation (\ref{eq:4.2}) with respect to $\ft$ also yields the solution to the second equation in (\ref{4eq:gaugedMKdV2}) or (\ref{eq:FGMKdV1i}), although this is not mentioned in the paper.
}}
\end{remark}

\section{Numerical computations for closed curves of $g=5$}

Following the algorithm written in the proof of Theorem \ref{pr:solgMKdV}, we will numerically evaluate the hyperelliptic solutions of the FGMKdV equation in this section and demonstrate a generalization of Euler's elastica as mentioned in Introduction.
We present the numerical results of genus five using the same notations and methods as in the previous paper \cite{Ma24d}.

To estimate the magnitude of the gauge field $\partial_{s} \psi_{\ri}$, we introduce 
$\psi_\ri^\circ:=$
$\displaystyle{\int^s \partial_{s} \psi_{\ri} ds}$ noting $ \partial_{s} \psi_{\ri}= \partial_{u_5\rr} \psi_{\ri}$.
Further, we consider a plane curve $Z(s)=$ $\displaystyle{\int_{s_0}^s \ee^{\ii \psi_\rr(t=0, s)} ds}$ by fixing $t$ and $s_0$.

We numerically integrate the differential equation (\ref{eq:4.2}) following the algorithm explained in the proof of Theorem \ref{pr:solgMKdV}.
We regard $Z$ as a generalization of Euler's elastica $C_Z:=\{(Z_\rr(s), Z_\ri(s)) \ |\ s \in [s_0, s_1]\}\subset \RR^2 =\CC$ because our evaluation of $g=1$ case recovers Euler's elastica \cite{MP22, Ma24d}.

By tuning the parameters $k_a$ of the hyperelliptic curves and the initial conditions of $\varphi_{a,0}$, we can deal with closed generalized elastica given by the FGMKdV equation.
As we show in the following, we found four closed curves $C_Z$ as examples.

As in \cite{Ma24c, Ma24d}, we have used the Euler's numerical quadrature method \cite{RLeV} for $\{(s, \gamma(0,s))
\ |\ s \ge 0\}\subset \RR \times S^5 \tH_\RR$ and 
$\{(s, \psi_\rr((\gamma(0,s))) \ |\ $ $ s \ge 0\}\subset \RR \times \RR$.
We will draw some graphs $C_Z$, $\{(s, \psi_\rr(t,s))\}_s$, 
$\{(s, \psi^\circ_\ri(t,s))\}_s$ and others based on the graph structure $\Psi_t$ in (\ref{eq:Psit}).

\medskip
\begin{table}[htb]
 \caption{Computational parameters of models A, B, C, D, E, F, and G }\label{tb:CompCond}
  \begin{tabular}{|c|cccc|ccc|}
\hline
& A& B& C& D &E &F &G\\
\hline
$k_1$ & 1.990000 & 2.659000 & 1.86805 & 1.90010 & 1.82000 & 2.016 & 1.8664 \\
$k_2$ & 1.989998 & 2.658000 & 1.86800 & 1.90000 & 1.80000 & 1.900 & 1.8663 \\
$k_3$ & 1.700780 & 1.880010 & 1.82000 & 1.82000 & 1.70000 & 1.820 & 1.8209 \\
$k_4$ & 1.700000 & 1.880000 & 1.81800 & 1.81000 & 1.68000 & 1.810 & 1.8189 \\
$k_5$ & 1.000100 & 1.000001 & 1.00650 & 1.01085 & 1.00001 & 1.026 & 1.0093\\
\hline
$\varphi_{1,0}$ &-0.518  &0.308  &-0.531  &0.460  &-0.503  &0.478  &-0.045 \\
$\varphi_{2,0}$  &0.527  &0.386  &0.565  &0.554  &0.589  &0.580  &0.565 \\
$\varphi_{3,0}$ &-0.527  &-0.386  &-0.565  &-0.554  &-0.589  &-0.556  &-0.565 \\
$\varphi_{4,0}$ &0.630  &0.588  &0.587  &1.400  &1.560  &1.310  &0.588 \\
$\varphi_{5,0}$  &-1.560  &-0.658  &-0.846  &-0.590  &-0.662  &-0.616  &-1.300 \\
\hline
  \end{tabular}
\end{table}
We demonstrate several models of A -- G including the cases that are not closed  and not differentiable closed cases; some of them are discussed in Section 6.
The moduli parameters $(k_a)$ of the hyperelliptic curve given and the initial conditions are listed in Table \ref{tb:CompCond}.

\bigskip
\subsection{Models A and B}

The first and the second results are displayed in Figures \ref{fg:shapeA} and \ref{fg:shapeB} as Models A and B.
Both are periodic solutions of FGMKdV equations from the numerical viewpoints.

Figures \ref{fg:shapeA} and \ref{fg:shapeB} (a) show $C_Z$, (b) display $\psi_\rr$ and $\psi_\ri^\circ$ while (c) show $\partial_s\psi_\rr$ and $\partial_s\psi_\ri$.
Since the value of $\partial_s\psi_\ri$ is not constant, $\psi_\rr(t=0,s)$ in both cases are solutions of the FGMKdV equation rather than the FMKdV equation.
Figures \ref{fg:shapeA} and \ref{fg:shapeB} (d) show $\varphi =\psi_\rr/2 +\varphi_0$ and $(\varphi_a)_{a=1,2,\ldots,5}$, which display the contributions of $\varphi_a$ to $\varphi$.
Since the SMKdV equation (\ref{4eq:SMKdV_k}) is a special case of the FGMKdV equation (\ref{eq:FGMKdV1r}), we can regard that Figure~\ref{fg:shapeA} (a) is a generalization of Euler's elastica in Figure \ref{fg:shape1744_24} (a).

Figures \ref{fg:shapeA} and \ref{fg:shapeB} numerically show that $Z(s+\ell) = Z(s)$, $\partial_s Z(s+\ell) = \partial_s Z(s)$, $\psi_\rr(s+\ell) = \psi_\rr(s)$ and $\partial_s\psi_\rr(s+\ell) = \partial_s\psi_\rr(s)$ in both cases.
Since the tangential vector of the real plane curve $C_Z$ is given by $\partial_s Z=\ee^{\ii \psi}$, they mean that the curve $C_Z$ in Figure \ref{fg:shapeA} and Figure \ref{fg:shapeB} are numerically continuous by the second order differentiable with respect to $s$; $\partial_s^a Z(s+\ell) = \partial_s^a Z(s)$, $(a=0, 1, 2)$.
Figures \ref{fg:shapeA} and \ref{fg:shapeB} (d) show that the sudden jumps in $\partial_s\psi_\rr(s)$ in (c) are caused by a turning of some $\varphi_a$ around the branch point.
Although $K$ in Section 3 vanishes at the branch point, we only handle holomorphic one-forms. Thus, the matrices appearing in Section 3 are regular.
This implies that the jumps in $\partial_s\psi_\rr(s)$ in (c), are not non-differentiable behavior, though we must analyze it more carefully in the future.

Further,  each $\varphi_a$ in (d) are also periodic from a numerical viewpoint.
Hence it is expected that both models A and B are periodic as regular functions in $s$.

Furthermore, defining the $\mathrm{index}[Z]:=$
$\displaystyle{\frac{1}{2\pi} \int_0^\ell \partial_s \varphi\, d s}$, we have
$\mathrm{index}[Z] =0$ for Models A and B.

\begin{figure}
\begin{center}

\includegraphics[angle=90,height=0.3\hsize]{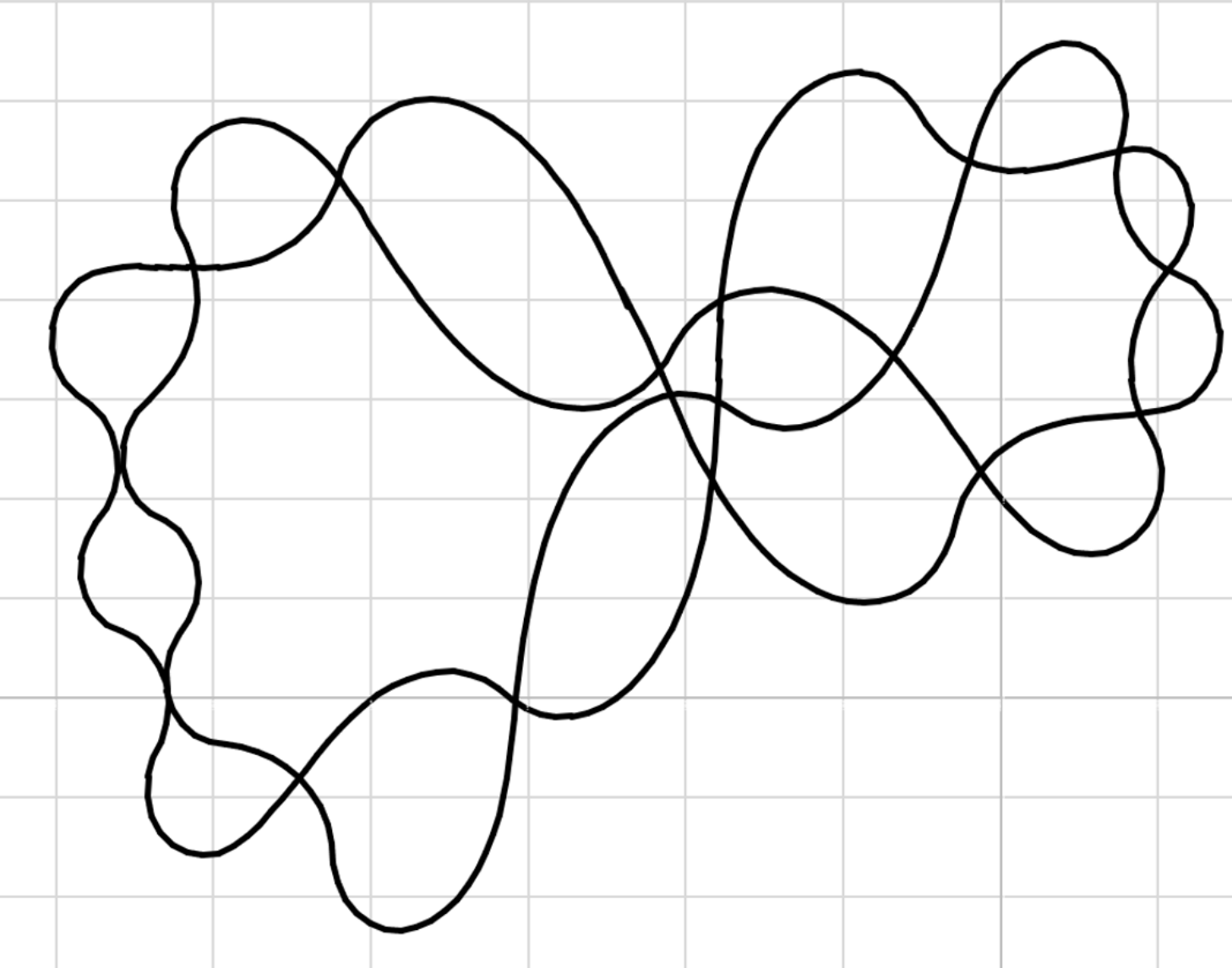}
\hskip 0.05\hsize
\includegraphics[height=0.2\hsize,]{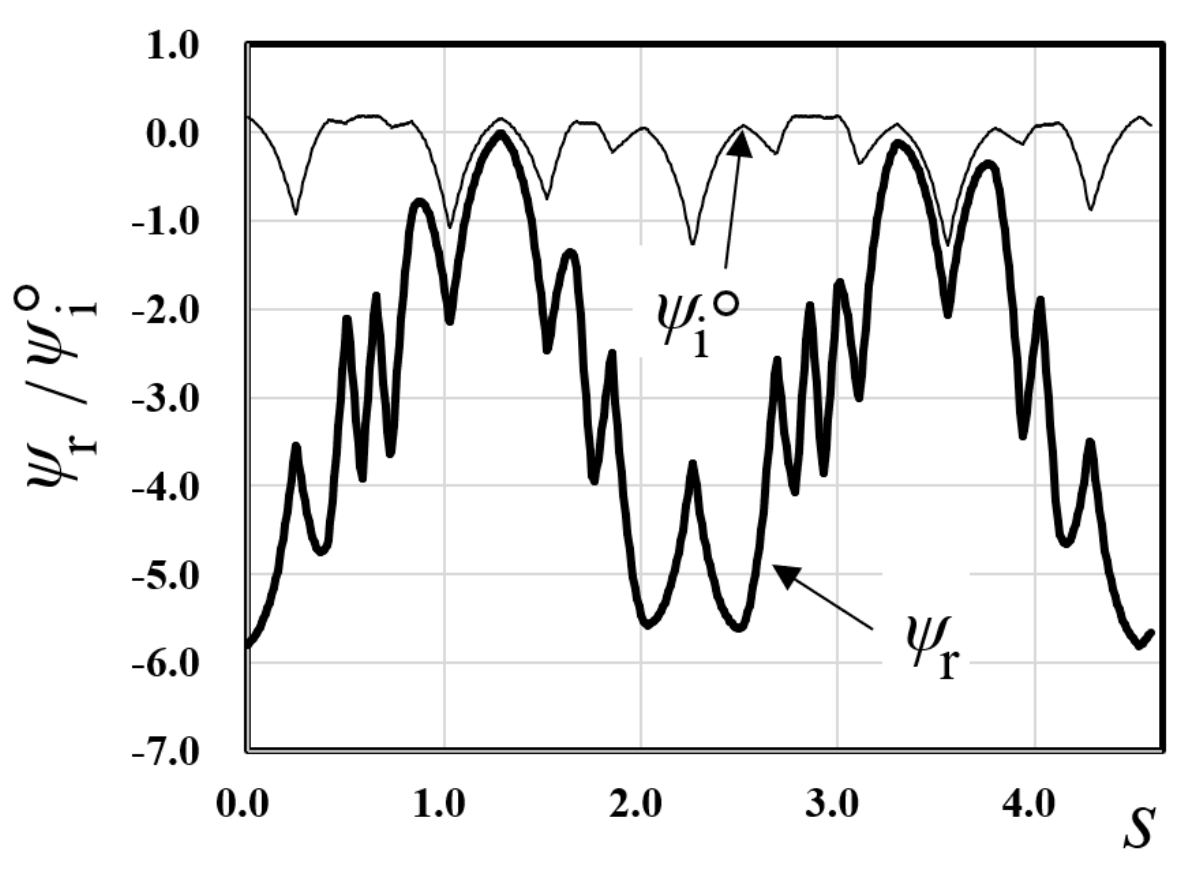}

(a)\hskip 0.4\hsize
(b)

\smallskip

\includegraphics[height=0.2\hsize,]{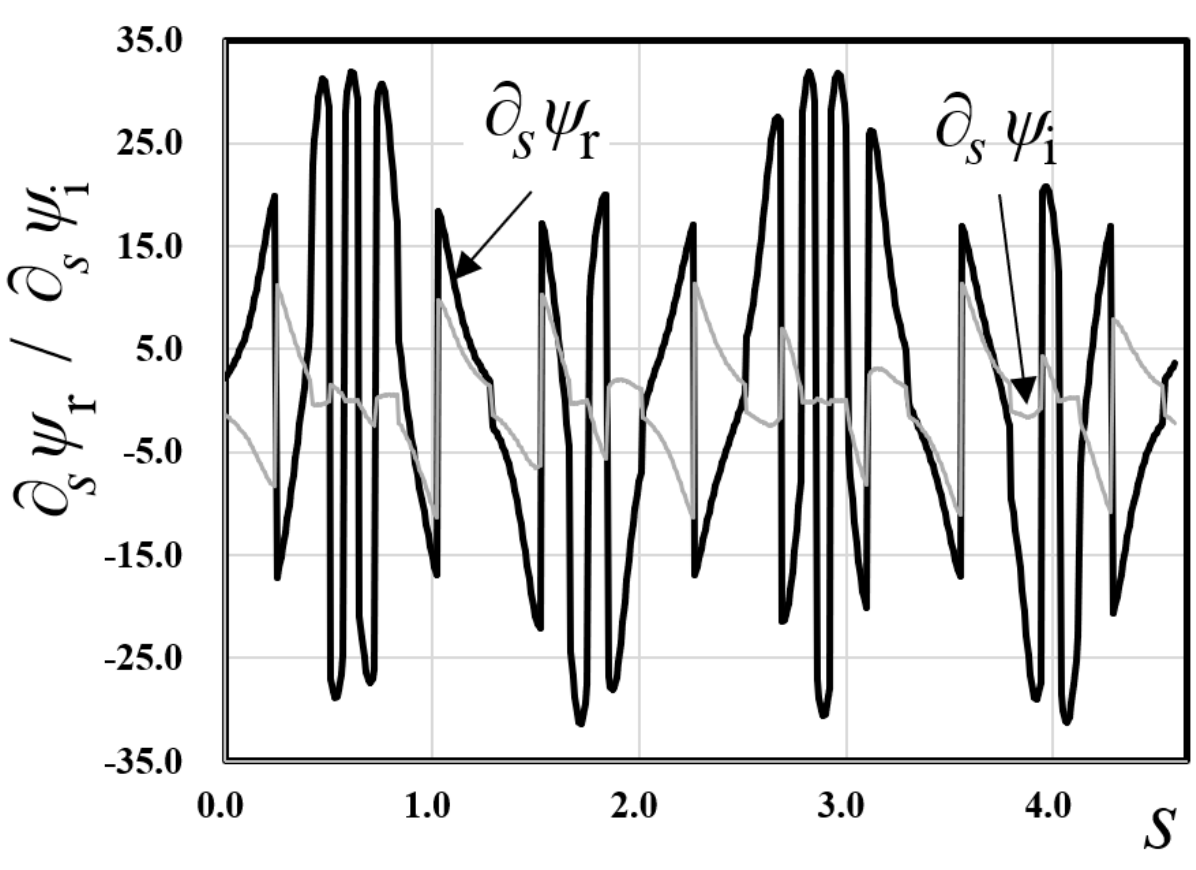}
\hskip 0.05\hsize
\includegraphics[height=0.2\hsize,]{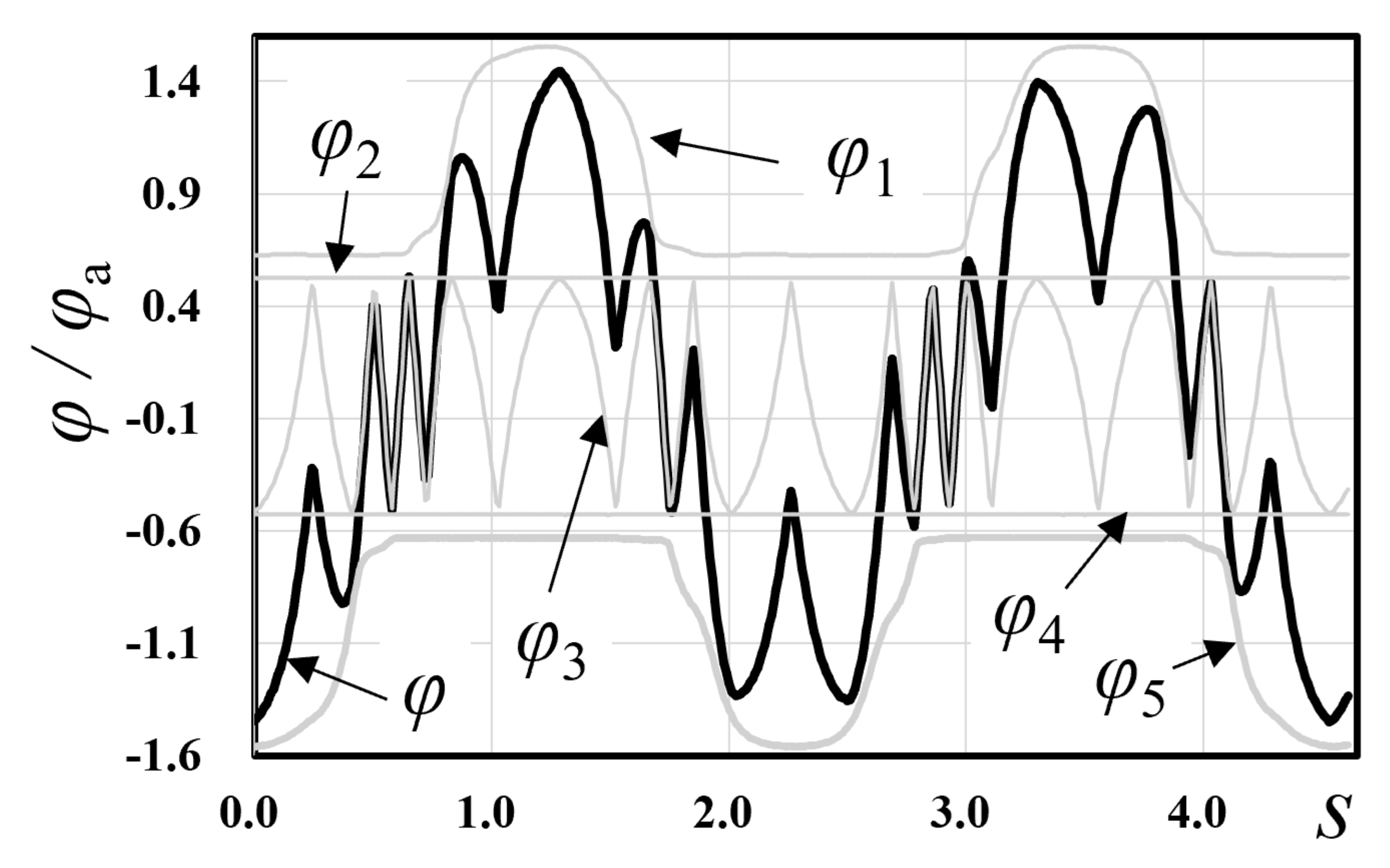}

(c) \hskip 0.4\hsize
(d)
\end{center}

\caption{
The solutions of (\ref{eq:Xipnt}) for $t=0$ of model A:\ 
(a): $C_Z$, (b): $\psi_\rr$ and $\psi_\ri^\circ$
and (c): $\partial_s\psi_\rr$, $\partial_s\psi_\ri$, and $\varphi=\psi_\rr/2-\varphi_0$ with $(\varphi_a)_{a=1,\ldots,5}$.
}\label{fg:shapeA}
\end{figure}

\bigskip

\begin{figure}
\begin{center}

\includegraphics[height=0.3\hsize]{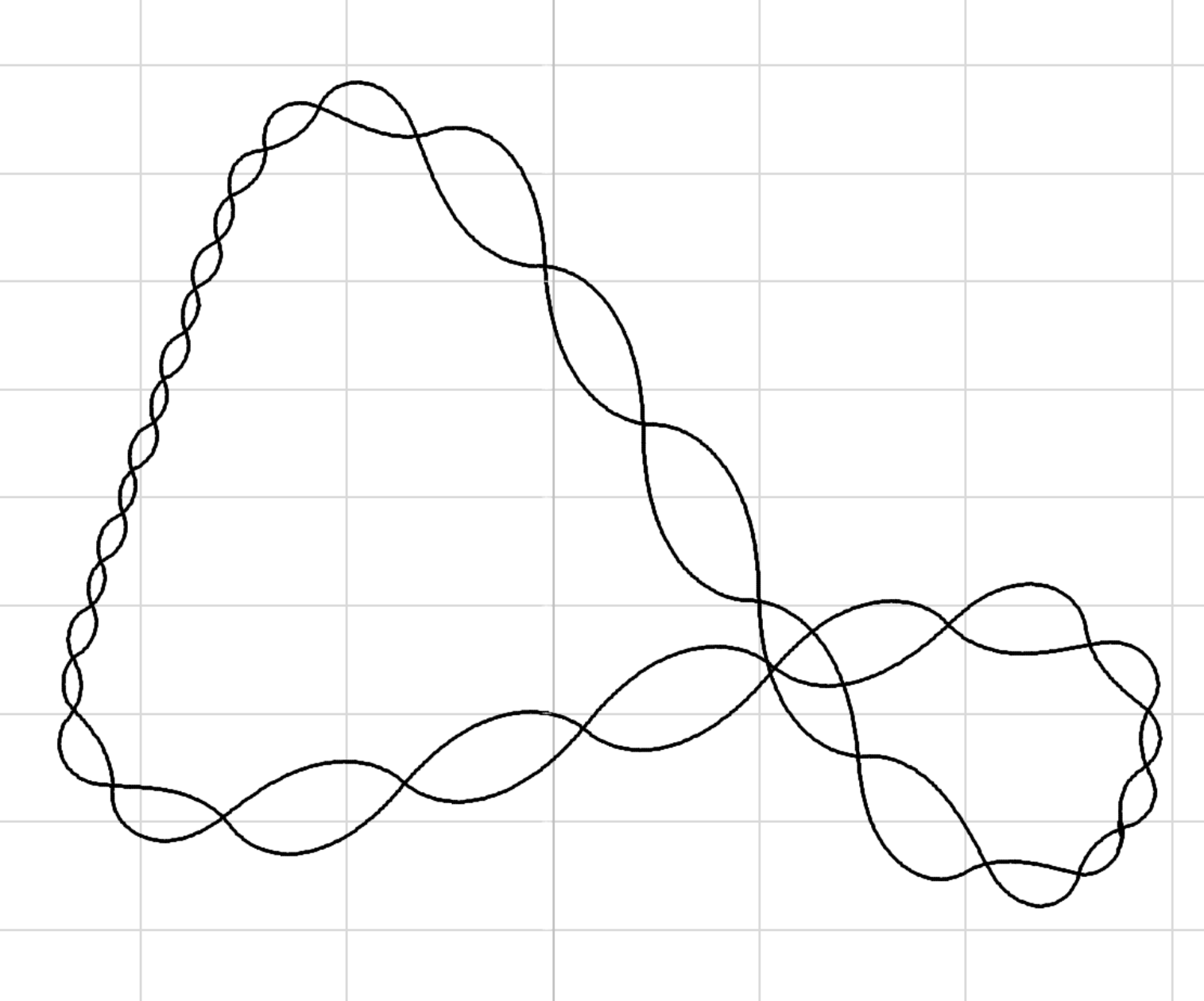}
\hskip 0.05\hsize
\includegraphics[height=0.17\hsize,]{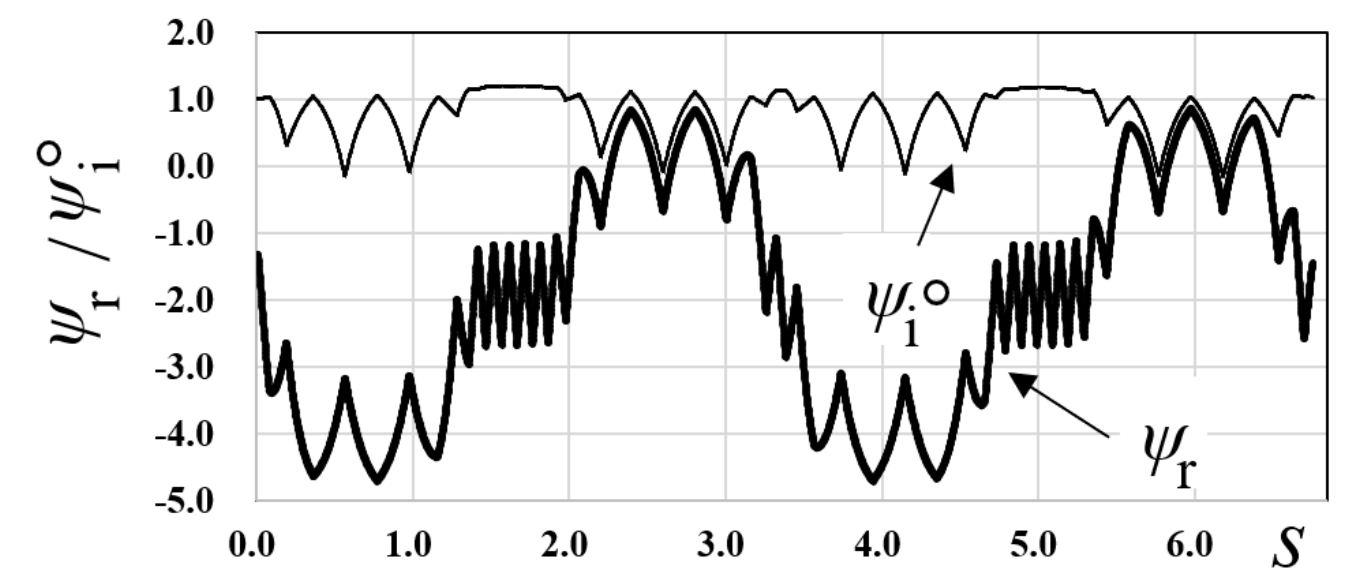}

(a)\hskip 0.4\hsize
(b) 

\smallskip

\includegraphics[height=0.17\hsize,]{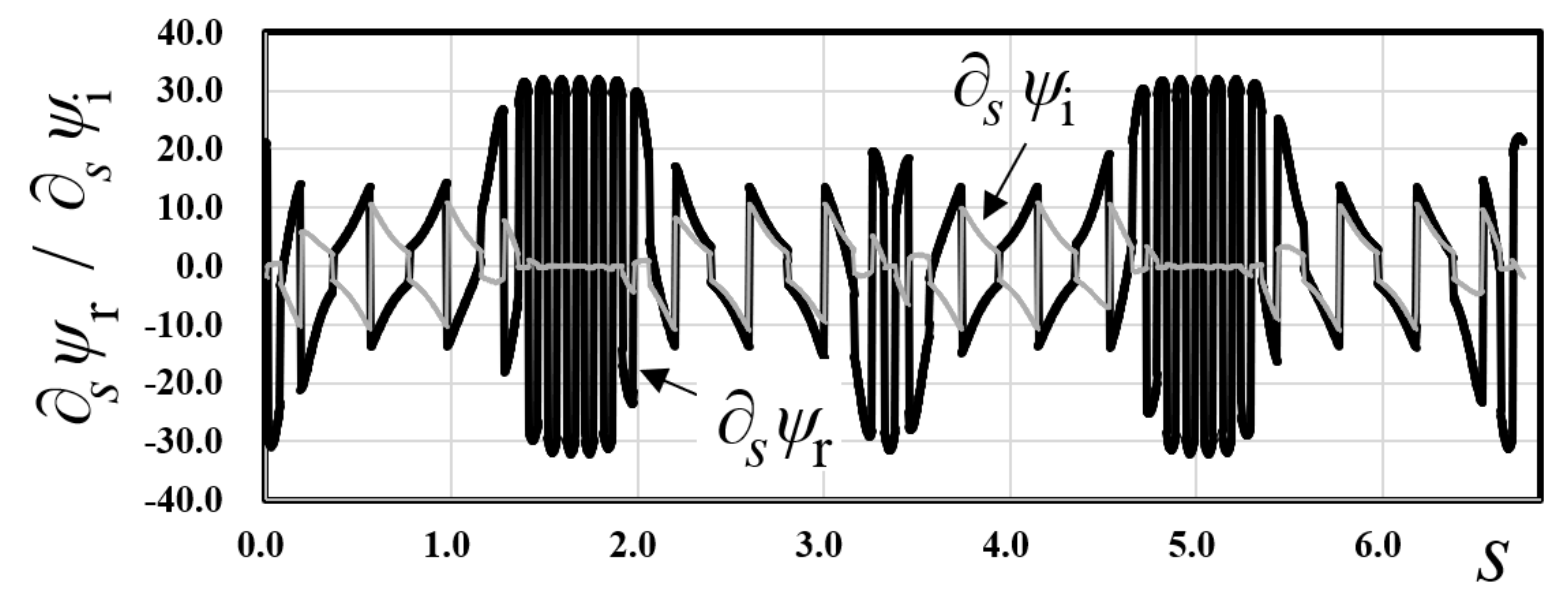}
\hskip 0.05\hsize
\includegraphics[height=0.17\hsize,]{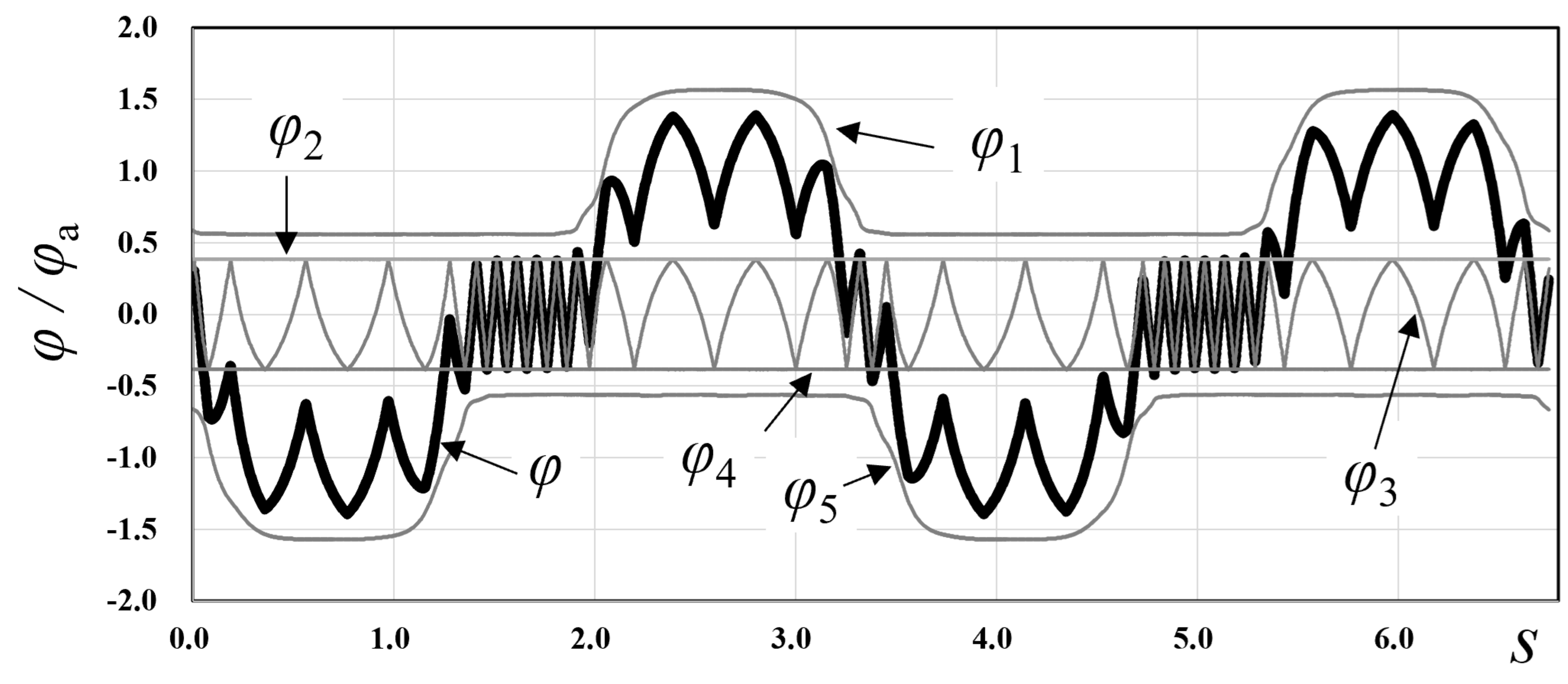}

(c) \hskip 0.4\hsize
(d)

\end{center}

\caption{
The solutions of (\ref{eq:Xipnt}) for $t=0$ of model B:\ 
(a): $C_Z$, (b): $\psi_\rr$ and $\psi_\ri^\circ$
and (c): $\partial_s\psi_\rr$, $\partial_s\psi_\ri$, and $\varphi=\psi_\rr/2-\varphi_0$ with $(\varphi_a)_{a=1,\ldots,5}$.
}\label{fg:shapeB}
\end{figure}

\subsection{Models C and D}

The third and fourth results are illustrated in Figures \ref{fg:shapeC} and \ref{fg:shapeD} as Models C and D.
When we tuned the parameters, we consider only the condition that $Z(s+\ell) = Z(s)$, and $\partial_s Z(s+\ell) = \partial_s Z(s)$ since $\partial_s Z(s) =\ee^{\ii \psi_\rr(s)}$.
Figures \ref{fg:shapeC} and \ref{fg:shapeD} (a) show these $Z$'s.
As in Figure \ref{fg:shapeC} and \ref{fg:shapeD} (b) and (c) show $\psi_\rr$, $\psi^\circ_\ri$. $\partial_s \psi_\rr$ and $\partial_s\psi_\ri$.
Since the value of $\partial_s\psi_\ri$ is neither constant, $\psi_\rr(t=0,s)$ is also a hyperelliptic solutions of the FGMKdV equation rather than the FMKdV equation.

\begin{figure}
\begin{center}

\includegraphics[height=0.19\hsize]{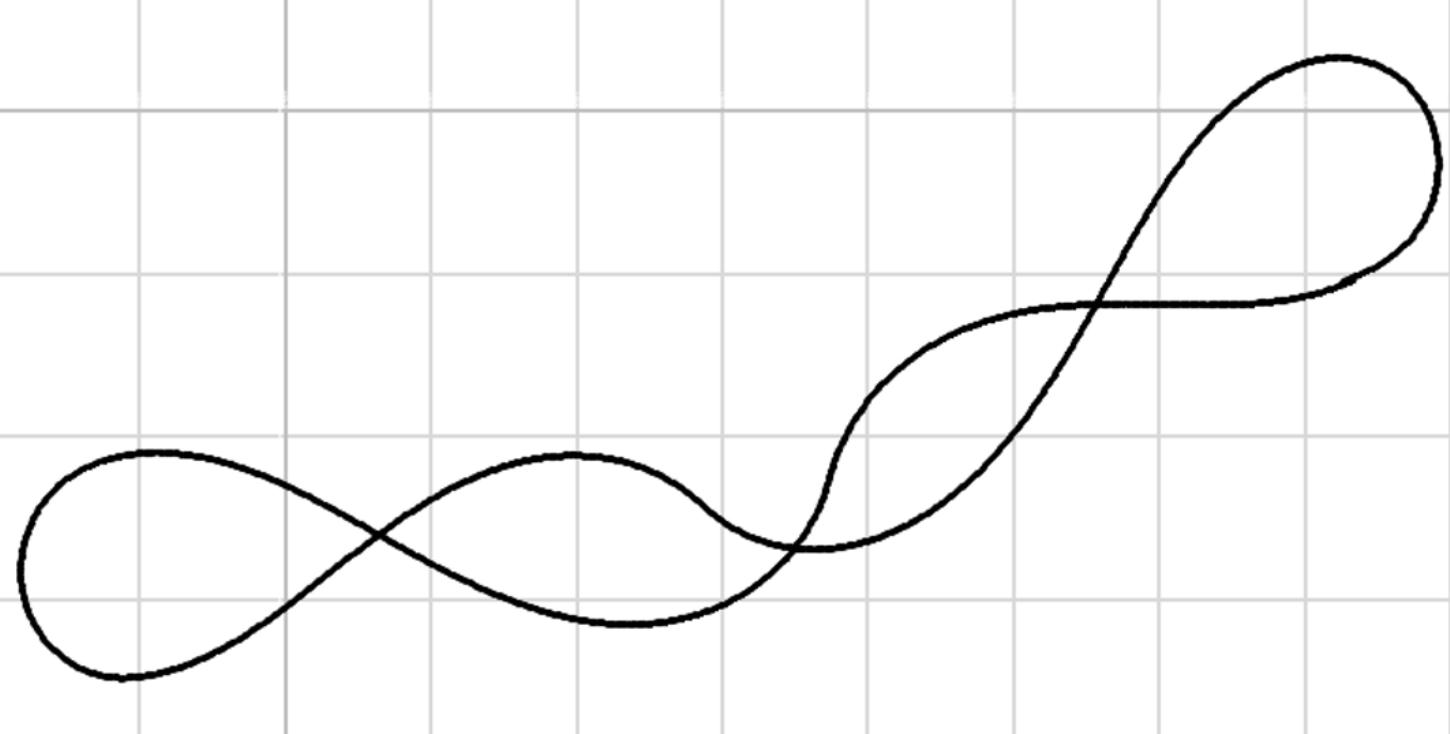}
\hskip 0.05\hsize
\includegraphics[height=0.18\hsize,]{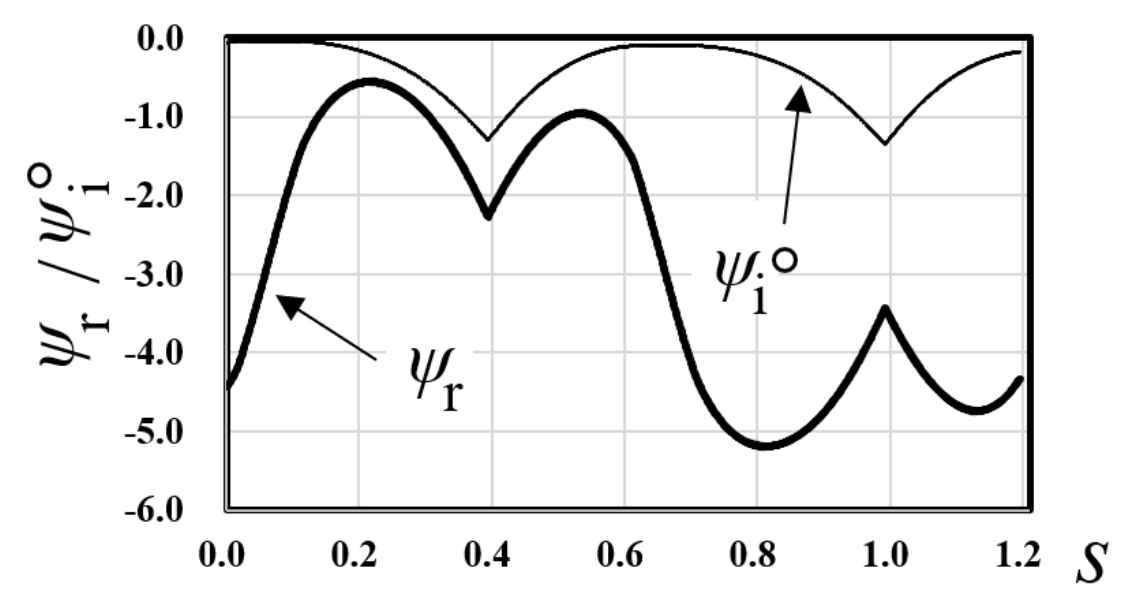}

(a)\hskip 0.4\hsize
(b)

\smallskip

\smallskip

\includegraphics[height=0.18\hsize,]{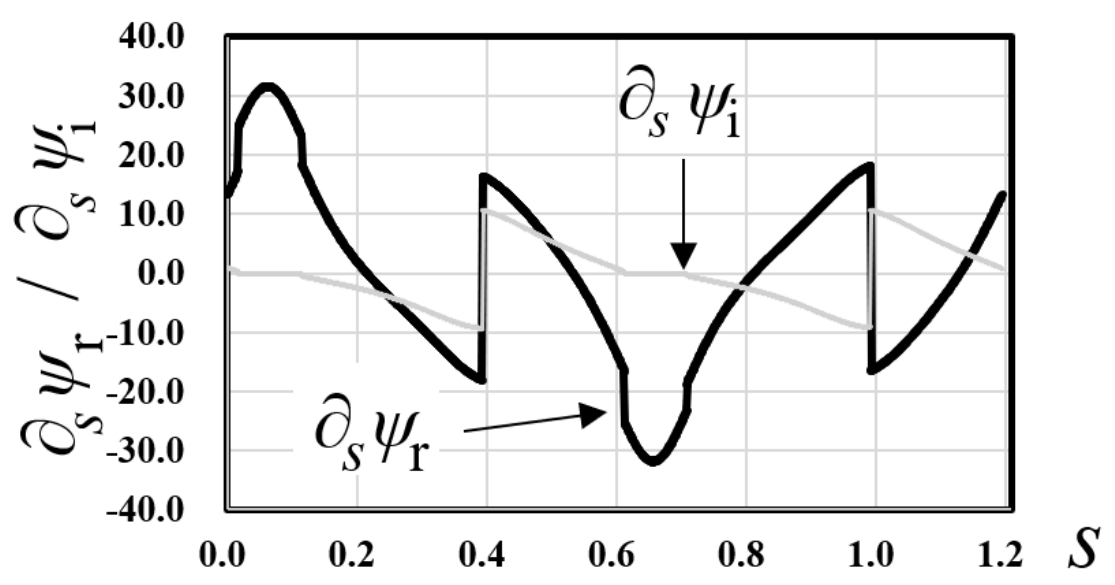}
\hskip 0.05\hsize
\includegraphics[height=0.18\hsize,]{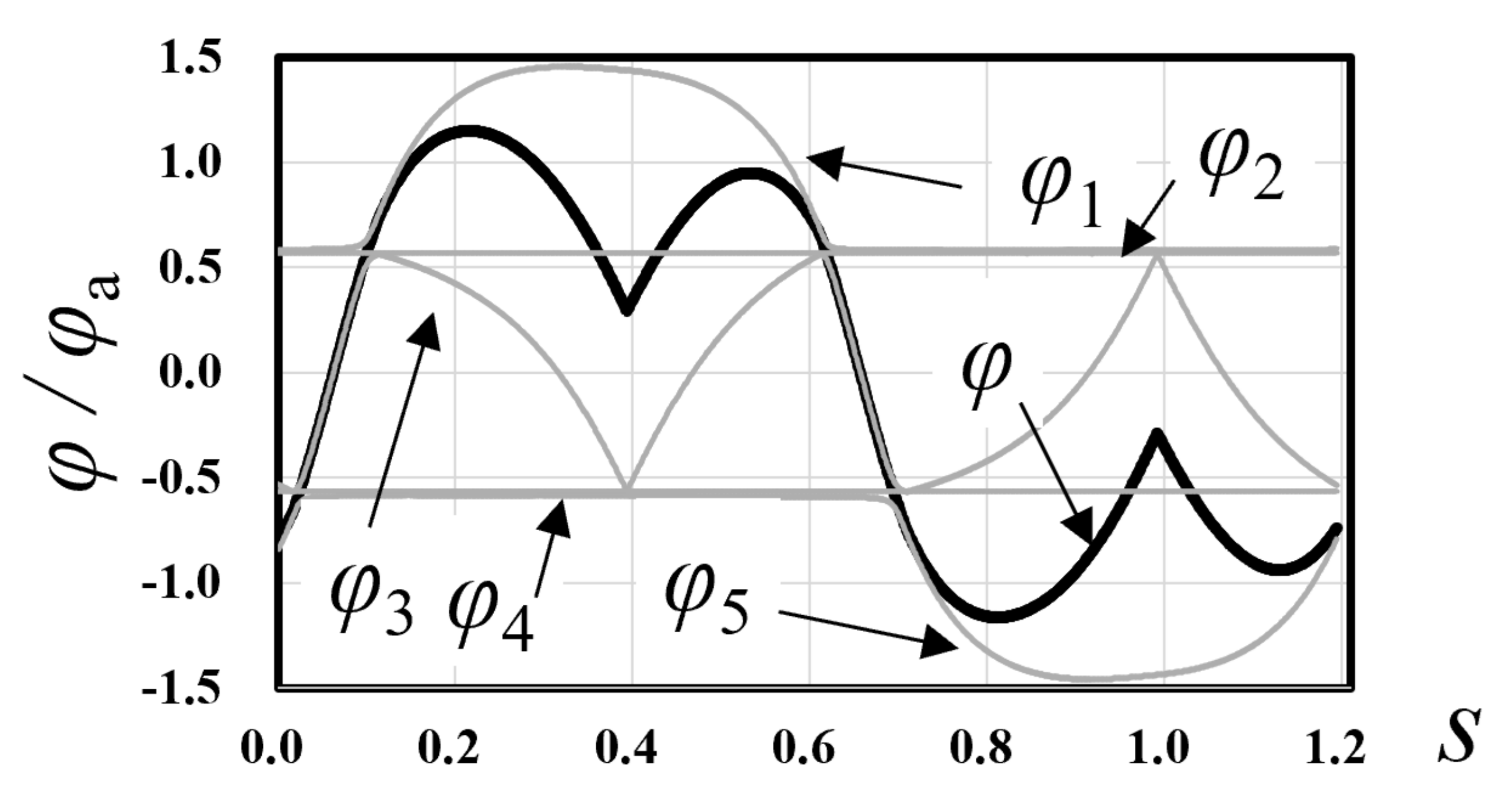}

(c) \hskip 0.4\hsize
(d)
\end{center}

\caption{
The solutions of (\ref{eq:Xipnt}) for $t=0$ of model C:\ 
(a): $C_Z$, (b): $\psi_\rr$ and $\psi_\ri^\circ$
and (c): $\partial_s\psi_\rr$, $\partial_s\psi_\ri$, and $\varphi=\psi_\rr/2-\varphi_0$ with $(\varphi_a)_{a=1,\ldots,5}$.
}\label{fg:shapeC}
\end{figure}

\begin{figure}
\begin{center}

\includegraphics[height=0.25\hsize]{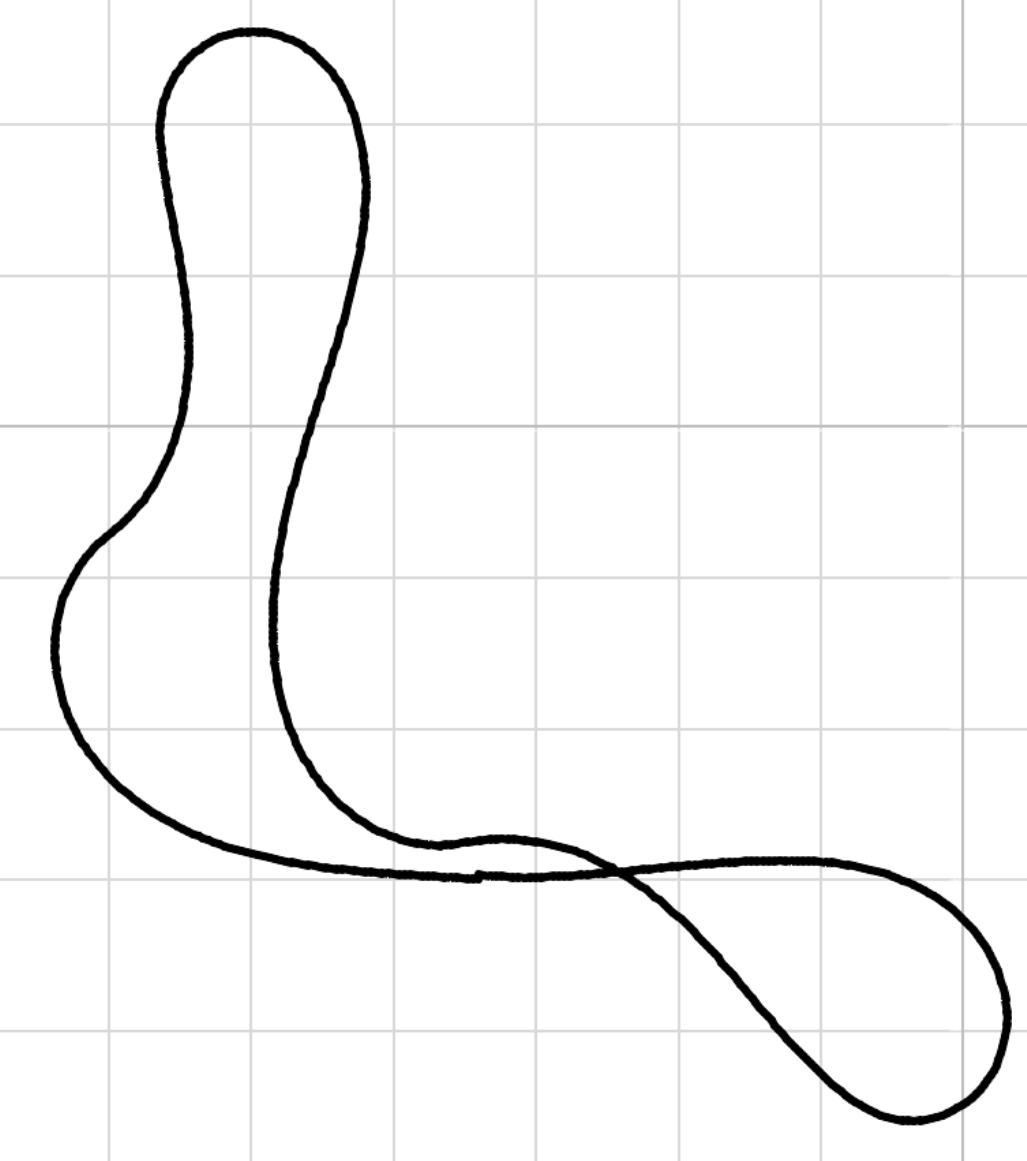}
\hskip 0.05\hsize
\includegraphics[height=0.18\hsize,]{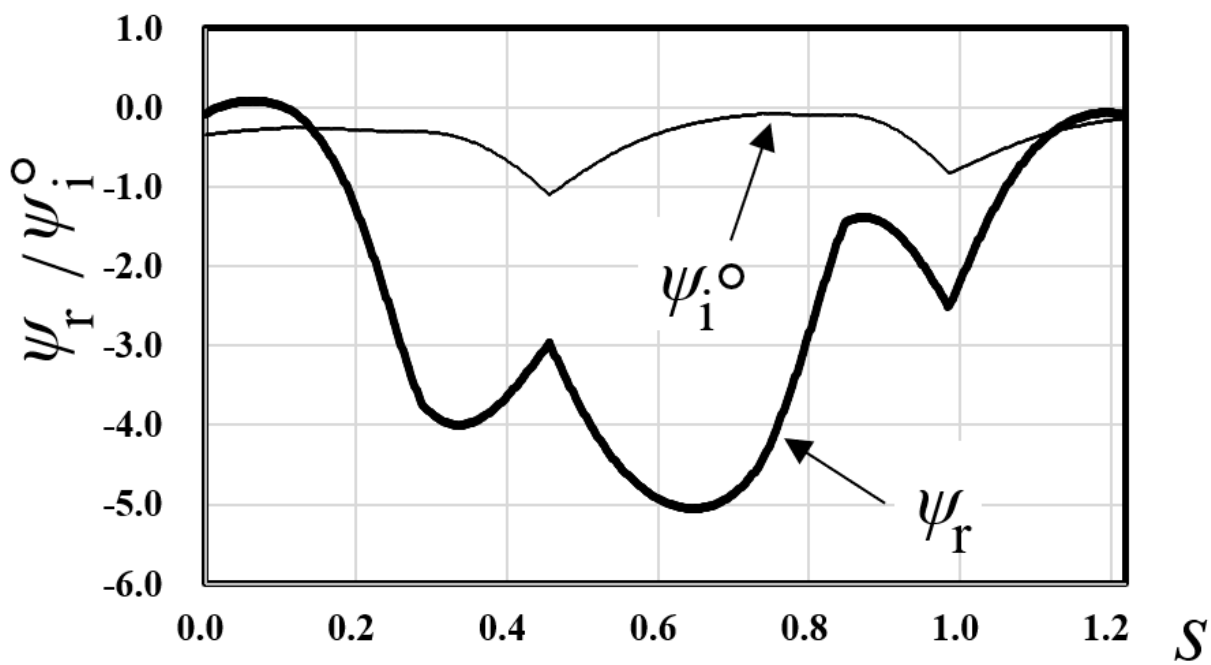}

(a)\hskip 0.4\hsize
(b) 
\smallskip

\includegraphics[height=0.18\hsize,]{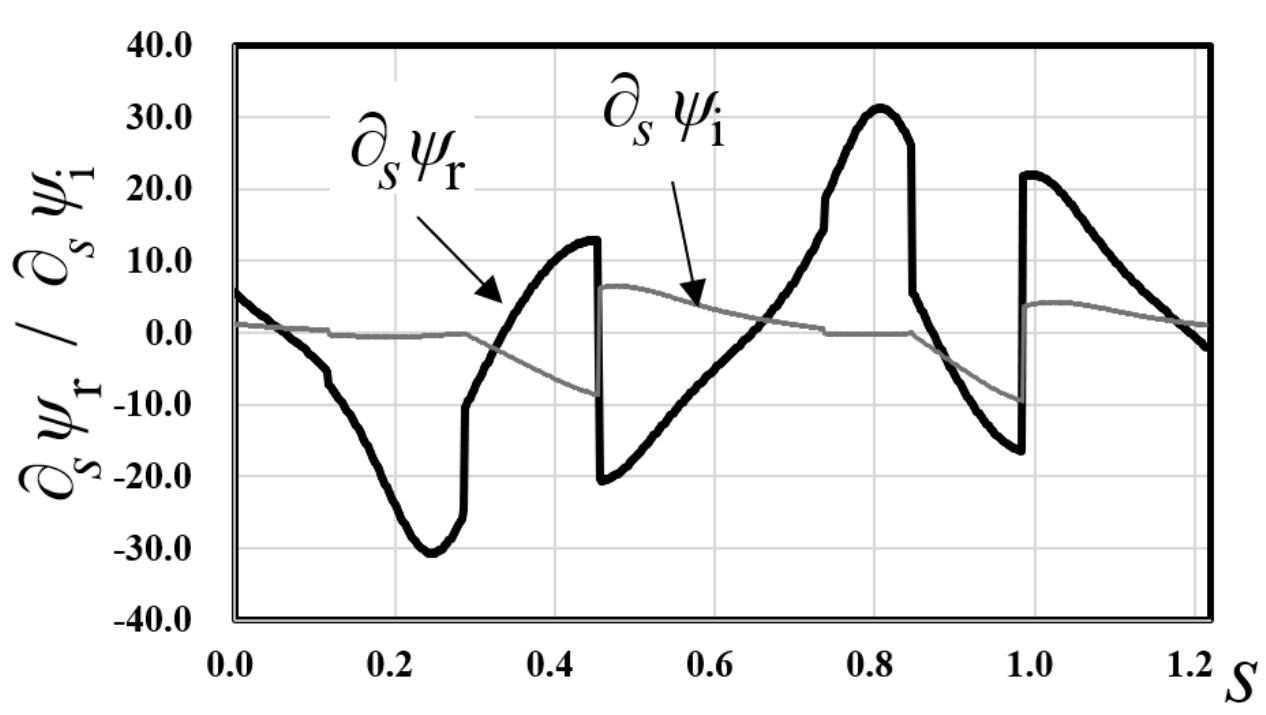}
\hskip 0.05\hsize
\includegraphics[height=0.18\hsize,]{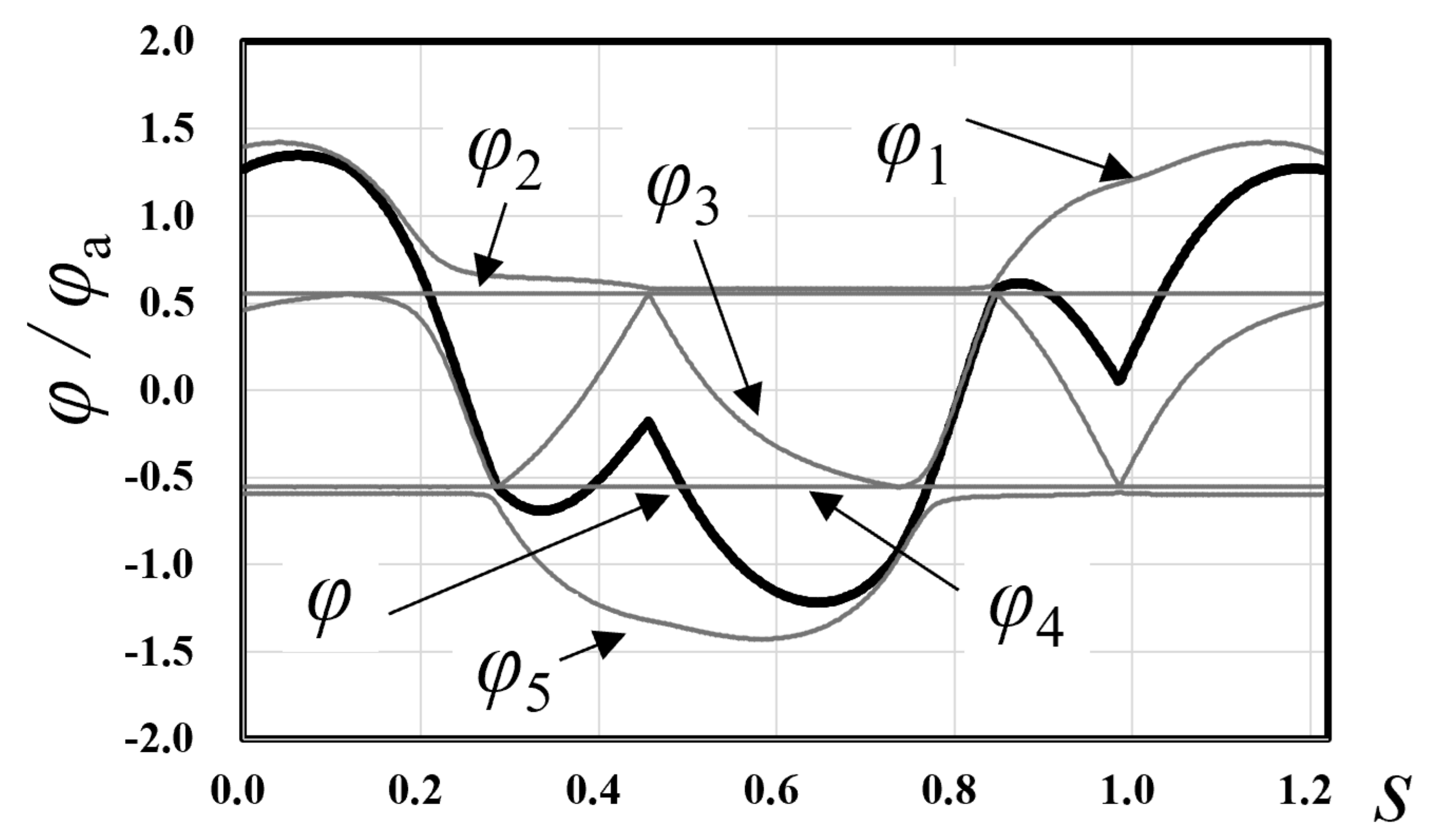}

(c) \hskip 0.4\hsize
(d)
\end{center}

\caption{
The solutions of (\ref{eq:Xipnt}) for $t=0$ of model D:\ 
(a): $C_Z$, (b): $\psi_\rr$ and $\psi_\ri^\circ$
and (c): $\partial_s\psi_\rr$, $\partial_s\psi_\ri$, and $\varphi=\psi_\rr/2-\varphi_0$ with $(\varphi_a)_{a=1,\ldots,5}$.
}\label{fg:shapeD}
\end{figure}

Figures \ref{fg:shapeC} and \ref{fg:shapeD} (d) also show $\varphi =\psi_\rr/2 +\varphi_0$ and $(\varphi_a)_{a=1,2,\ldots,5}$, which display the contributions of $\varphi_a$ to $\varphi$.
We can also regard that Figure~\ref{fg:shapeC} and \ref{fg:shapeD} (a) are also a generalization of Euler's elastica in Figure \ref{fg:shape1744_24} (a) and its generalization of genus three case Figure \ref{fg:shape1744_24} (b).

Figure \ref{fg:shapeC} (a), (b) and (c) show that the curve $C_Z$ in Figure \ref{fg:shapeC} is numerically continuous by the second order differential with respect to $s$; $\partial_s^a Z(s+\ell) = \partial_s^a Z(s)$, $(a=0, 1, 2)$ or 
$Z(s+\ell) = Z(s)$ and $\partial_s^a\psi_\rr(s+\ell) = \partial_s^a\psi_\rr(s)$  $(a=0, 1)$.
This contrasts with the genus three case in Figure \ref{fg:shape1744_24} (c) and (d), which is not continuous by the second order derivative with respect to $s$ as in \cite{Ma24d}; $\partial_s \psi_\rr(s+\ell) \neq \partial_s \psi_\rr(s)$.
Thus although Figures \ref{fg:shape1744_24} (b) and Figure \ref{fg:shapeC} (a) are similar, they are different.
The augmented number of parameters inherent to the results of the genus five in enhanced representations of shapes and an elevated degree of continuity when compared with the genus three.

However, Figure \ref{fg:shapeC} (d) shows that each $\varphi_a$ is not periodic  slightly and thus it does not corresponds to the periodic solution of the FGMKdV equation, though they are periodic $\partial_s^a \psi_\rr(s+\ell) = \partial_s^a \psi_\rr(s)$, $(a=0, 1)$.

Figure \ref{fg:shapeD} numerically shows that $Z(s+\ell) = Z(s)$, $\partial_s Z(s+\ell) = \partial_s Z(s)$, and $\psi_\rr(s+\ell) = \psi_\rr(s)$ but 
$\partial_s \psi_\rr(s+\ell) \neq \partial_s \psi_\rr(s)$;
Thus $C_Z$ in Figure \ref{fg:shapeC} is numerically continuous by the first order differential with respect to $s$;  $\partial_s^a Z(s+\ell) = \partial_s^a Z(s)$, $(a=0, 1)$.

However, as we have the parameters ($(k_a)_{a=1,2,\ldots,g}$ and the initial condition $(\varphi_a|_{s=0})_{a=1,2,\ldots.g}$),  we have several more complicated closed loops of the genus five as the generalized elastica than the previous paper of the genus three \cite{Ma24d}.
Higher genera are expected to produce more interesting curves.

Furthermore, we note that $\mathrm{index}(Z)=0$ for both Models C and D.

\begin{remark}
{\rm{
The elastica problem can be interpreted as geodesics in the sub-Riemannian geometry on the group of rigid motions of the plane, which is known as {\lq\lq}bicycling mathematics{\rq\rq}\cite{ABLMS}.
Even in the generalized elastica problem, the computation of $(\varphi_a)_{a=1, \ldots, g}$ may also be related to this new interpretation.
}}
\end{remark}

\section{Discussion and Conclusion}
By extending the constructions in \cite{Ma24b,Ma24d}, in this paper, we showed a novel real algebro-geometric method to obtain the hyperelliptic solutions of general genera $g$ of the FGMKdV equation (\ref{4eq:gaugedMKdV2}) as in Theorems \ref{th:4.2} and \ref{4th:reality_gga}.
As we introduced the real parameters $t$ in the Jacobian $J_X$ in (\ref{eq:dus_dts}), we showed that these new parameters $t_g$ and $t_{g-2}$ provide the correspondence between the real vector space $\LL_{\tv(\gamma_0)}$ in $J_X$ and real $\varphi$'s in $S^g \tH_\RR \subset S^g \hX$.
Since the FGMKdV equation (\ref{4eq:gaugedMKdV2}) is a differential identity on $S^g \hX$, the correspondence means the construction of the real hyperelliptic solutions of the FGMKdV equation.
In the correspondence, the shifted elementary symmetric polynomials $\varepsilon$ in Definition \ref{def:varepsilon} play the essential role, since even on the angle expression, the correspondence basically comes from the properties of the Vandermonde matrices as shown in Lemma \ref{4lm:KdV1}.
(We describe the properties of the shifted elementary symmetric polynomials in Appendix.)

In the construction we have used the data of the hyperelliptic curves $X$ directly instead of the Jacobian $J_X$. 
We note that our algebraic study of the algebraic curves on two decades \cite{BEL97b, MP22, M25} based on studies by  Weierstrass \cite{Wei54} and Baker \cite{Baker03} allows the such treatment.

\bigskip

Section 5 demonstrated the closed plane curves as the hyperelliptic solutions of the FGMKdV equation of genus five as an extension of \cite{Ma24d}.
They are regarded as a generalization of Euler's eight-figure and Figure \ref{fg:shape1744_24} (a) and its generalization of genus three case Figure \ref{fg:shape1744_24} (b).
There we have more complicated and continuous in the higher degree than the genus three case of \cite{Ma24d}.
It is expected that due to the dimension of parameter space, the higher the genus, the greater the ability of the solution to express complicated shapes.
As we obtain the hyperelliptic solutions of the FGMKdV equation in Theorems \ref{th:4.2} and \ref{4th:reality_gga}, it is not difficult to provide the numerical evaluations of the generalization of elastica of even higher genus $g> 5$ following Theorem \ref{pr:solgMKdV}.
Section 5 shows that the extension of \cite{Ma24d} and Euler's figure-eight can be obtained by the algorithm in this paper.

As the work with Previato \cite{MP16} showed that the FMKdV flows provide the isometric deformation with isoenergy of the Euler-Bernoulli energy $E[Z]$ of plane curves $C_Z$, and such flows are so important in differential geometry as in \cite{CKPP}, the ultimate purpose of this study is to find the real solution of the FMKdV equation of higher genus explicitly.
It is also closely related to the fascinating relation between shapes of supercoiled DNA and the integrable system as mentioned in Introduction and in \cite{M24a};
The FMKdV flows must reproduce the class of excited states of elastica whose energy is the same value, which was also numerically demonstrated as the discrete segment model of the supercoiled DNA in the Monte-Carlo simulation by Vologodskii and Cozzarelli \cite{VC}.

We only show that the closed plane curves coming from the FGMKdV equation rather than the FMKdV equation.
Since the condition CIII that the gauge field $\partial_s \psi_{\ri}$ is constant seems to require more higher genus, our results may step to reveal the properties of the conditions as mentioned in Remark \ref{rmk:4.11}.

\begin{figure}
\begin{center}

\includegraphics[width=0.6\hsize]{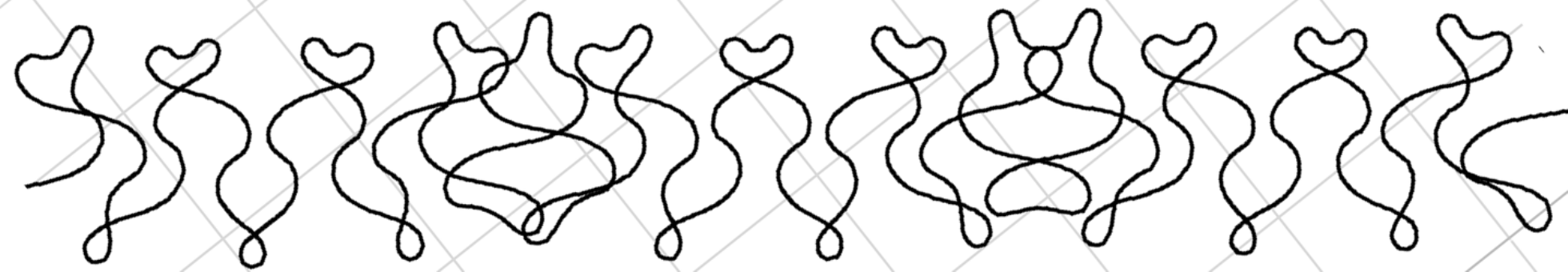}

(a)

\includegraphics[width=0.6\hsize]{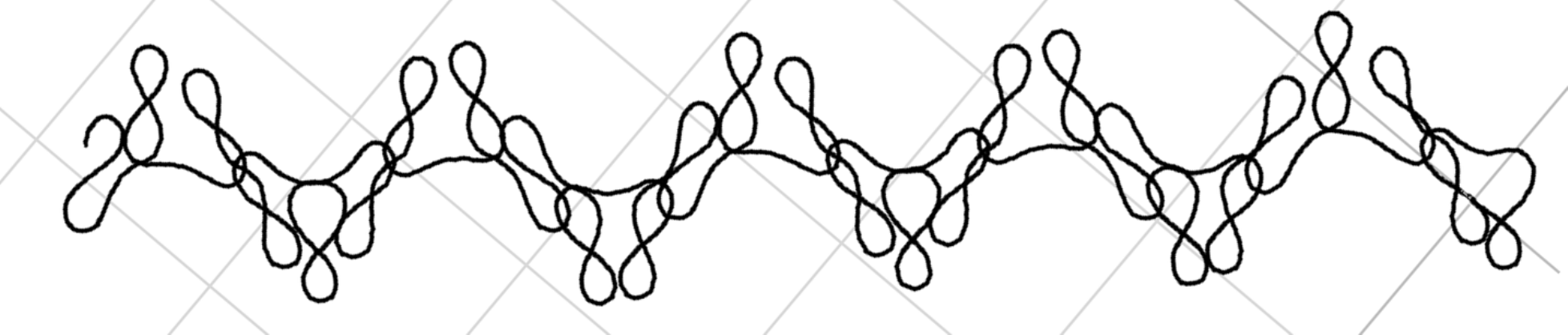}

(b)

\smallskip

\end{center}

\caption{
The generalized elasticae of Models E and F.
}\label{fg:shape_etc}
\end{figure}

\bigskip

\begin{figure}
\begin{center}

\includegraphics[height=0.2\hsize]{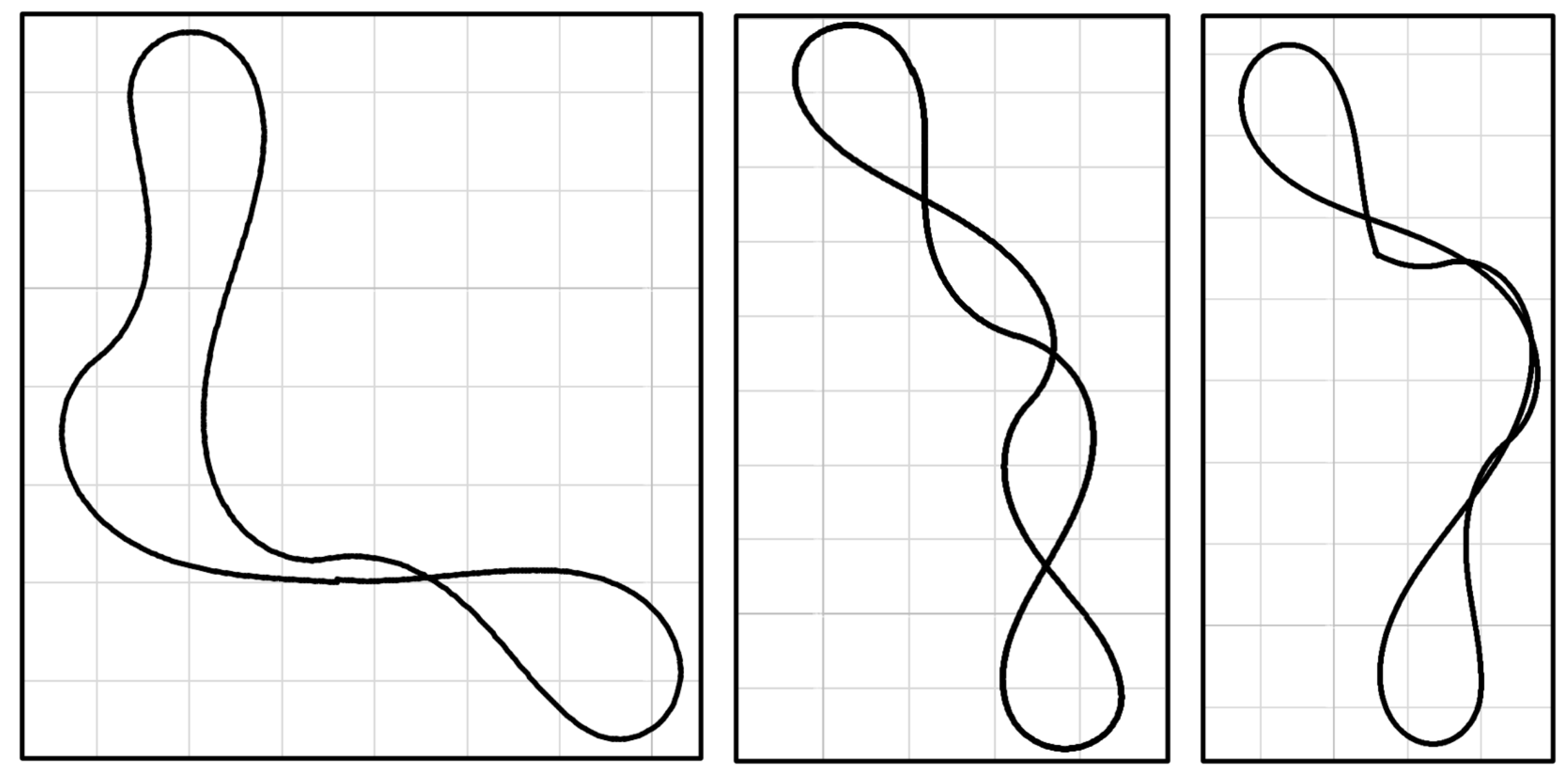}
\hskip 0.05\hsize
\includegraphics[height=0.19\hsize]{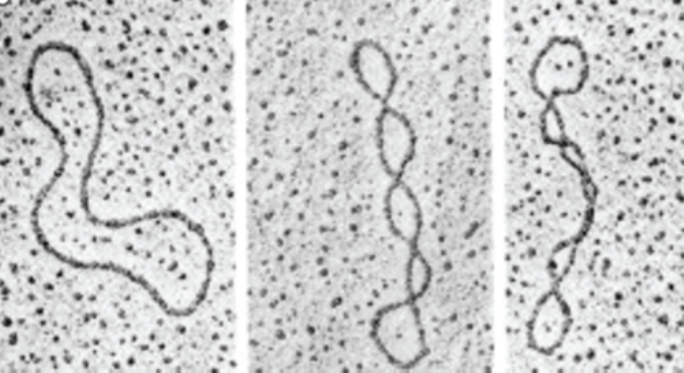}

(a)
\hskip 0.4\hsize
(b)
\hskip 0.4\hsize

\smallskip

\end{center}

\caption{
The shapes of Models C, D and G, and electron micrographs of DNA (mini ColE1 plasmid dimer, 5kb) by Laurien Polder  \cite[Fig.~1-24 p.~36]{KornbergBaker}36], Reproduced with permission from Roger Kornberg.
}\label{fg:shape_vsDNA}
\end{figure}

\bigskip

However, although our computations in Section 5 do not satisfy condition CIII, our results resemble the shapes of supercoiled DNAs.
Further, if we give up the closed condition, we obtain interesting shapes, as shown in Figure \ref{fg:shape_etc} for Models E and F.
Moreover, if we allow the curve with a not differentiable point, we have the right-hand side in Figure \ref{fg:shape_vsDNA} (a) as a closed plane curve of the FGMKdV equation, as shown in Model G.

Even though there are many differences between them, Figures \ref{fg:shape_vsDNA} (a) and (b) ((b) is an electron micrograph of DNA in  \cite[Fig.~1-24 p.~36]{KornbergBaker}) show that they have several geometrical properties in common.
Although mathematical investigations of supercoiled DNA shapes have only been conducted from a topological viewpoint, weak elastic forces stemming from the energy $E[Z]$ play an important role, as demonstrated by Monte Carlo computations in the works of Vologodskii and Cozzarelli \cite{VC}.

The shape of the elastica is not solely determined by its topological characteristics, such as knot theory, even if it is in an excited state.
Euler's list of the ground states of the elastica contains fruitful geometry, not only from a topological standpoint, but also from a differential geometric standpoint \cite{PG}.
The degree of bending, as a local property, determines the shapes.
After integrating these local properties, we obtain the topological indices of elasticae.
In the case of Euler's elastica, its topological properties also play a role that should be regarded as supplementary.
For example the difference between the figures on the left in Figure \ref{fg:shape_vsDNA} (a) and (b) comes from the topological indices since $\mathrm{index}(Z) = 0$ in (a) and $\mathrm{index}(Z) = 1$ in (b).

Due to the isometric condition, the geometry of its excited states is even more fruitful, even from a differential geometric perspective.
The excited states of elastica are related to the MKdV flows and are expected to have fascinating shapes. 
However, in stark contrast to the extensive research on their ground states, it is a serious problem that only topological discussions on the shapes of DNAs except numerical computations \cite{VC} have been formulated thus far.

We believe that we are approaching our ultimate goal of providing a mathematical foundation for the excited states of elastica within the framework of affine differential geometry and algebraic function theory.

As mentioned above, comparing cases genera three and five is expected to improve the curve's ability to express complicated shapes by using much higher genus solutions.
To show the mathematical fundamentals of supercoiled DNA, it is crucial to proceed with the study of plane curves governed by the FMKdV and the FGMKdV equations, as Euler developed elliptic function theory, curve geometry, and the variational method as pure mathematics by demonstrating the mathematical foundation of elastic rods \cite{Mat10, M25E}.

\bigskip

Although we studied the case of $[S^g \tH_\RR]^0$ of $S^g \tH_\RR$ and the branch points $(b_i, 0)$ in $S^1$, there are many other configurations of $\gamma_i$ and branch points that generate real $K$ as in \cite{Ma24b,Ma24d} as mentioned in Remark \ref{rmk:4.10}.
In the future, we should classify the moduli of the hyperelliptic solutions of the FGMKdV equation of genus $g$.

Furthermore, when we extend the statistical mechanics of plane curves to that of space curves, we require the similar construction of the hyperelliptic solutions of the nonlinear Schr\"odinger equation of genus $g$, although we partially obtained them in \cite{Ma24NSE}.
The results in this paper will have strong implications for such a generalization.

The results of this paper indicate that further advancement of this study is expected to yield positive results in theory of the integrable systems, the algebraic curves, the affine differential geometry, and the algebraic functions, and even biological science itself.

\setcounter{section}{0}
\renewcommand{\thesection}{\Alph{section}}

\section{Appendix}
Lemmas \ref{4lm:dudphig}, \ref{lm4.1} and \ref{lm:4.7} are key lemmas in this paper but their proofs are slightly complicated.
In this appendix, we show their background and proofs.
Here, we consider the shifted elementary symmetry polynomials.

Let us consider the polynomial ring $R:=\CC[\hfs_1, \ldots, \hfs_g, \hfc_1, \ldots, \hfc_g]$ and its permutation $\tau\in \fS_g$ on their indices, $\CC[\hfs_{\tau(1)},$ $ \ldots, \hfs_{\tau(n)}, \hfc_{\tau(1)}, \ldots, \hfc_{\tau(n)}]$.
$\fS_g$ is the symmetric group of degree $g$.
We have a symmetric ring $SR:=R/\fS_g$, and its homogeneous part $SR_{\ell}$ of degree $\ell$.
Further, for its subring $R^{(j)}:=\CC[\hfs_1, \ldots, $ $ \check{\hfs}_j,\ldots, \hfs_g, \hfc_1, \ldots, \check{\hfc}_j,\ldots, \hfc_g]$ and its permutation $\tau^{(j)}\in \fS_{g-1}$ on their indices $\CC[\hfs_{\tau^{(j)}(1)},$ $\ldots, \check{\hfs}_j,\ldots, \hfs_{\tau^{(j)}(n)}, \hfc_{\tau^{(j)}(1)}, \ldots, \check{\hfc}_j,\ldots, \hfc_{\tau^{(j)}(n)}]$, we consider the symmetric ring $SR^{(j)}:=R^{(j)}/\fS_{g-1}$ whose elements are invariant by $\tau^{(j)}\in \fS_{g-1}$, and its homogeneous part $SR_{\ell}^{(j)}$ of degree $\ell$.
Here check on top of a letter signifies deletion.

We define the shifted elementary symmetric functions $\varepsilon_{ij}$ by the generating polynomial $\displaystyle{\prod_{i=1,\neq j}^{g} ((\hfc_i - \hfs_i) x - 2 \hfs_i)}$
$=\hvarepsilon_{j,g-1} x^{g-1} + \hvarepsilon_{j,g-2} x^{g-2} + \cdots + \hvarepsilon_{j,1} x + \hvarepsilon_{j,0}$.
Then we obviously have the relations, $\hvarepsilon_{j,k}\in SR^{(j)}_{g-1}$, $\hvarepsilon_{j,0}=\displaystyle{\prod_{i=1,\neq j}^{g} 2\hfs_i}$, and $\hvarepsilon_{j,g-1}=\displaystyle{\prod_{i=1,\neq j}^{g} (\hfc_i - \hfs_i)}$. 

The following proposition is essentially the same as the inverse matrix of the Vandermonde matrix in Definition \ref{4df:KdV_def2} and Lemma \ref{4lm:KdV1} (2).
We show the following first proposition.
Here let $\Mat_{R}(\ell)$ denote the set of $\ell \times \ell$ matrices over $R$.
\begin{proposition} For a matrix $\hcM \in \Mat_{R}(n)$ given by
{\small{
$$
\left[\begin{matrix}
(\hfc_1+\hfs_1)^{g-1} &    \cdots &   (\hfc_\ell+\hfs_\ell)^{g-1}  &  \cdots   & (\hfc_g - \hfs_g)^{g-1} \\
2\hfs_1(\hfc_1+\hfs_1)^{g-2} &   \cdots &   2\hfs_\ell(\hfc_\ell+\hfs_\ell)^{g-2} & \cdots & 
2 \hfs_{g}(\hfc_\ell+\hfs_\ell)^{g-2} \\
\vdots   & \ddots &  \vdots & \ddots  &\vdots \\
(2\hfs_1)^\ell(\hfc_1+\hfs_1)^{g-\ell-1}   & \cdots &   (2\hfs_\ell)^\ell(\hfc_\ell+\hfs_\ell)^{g-\ell-1} & \cdots & 
 (2\hfs_{g})^\ell(\hfc_\ell+\hfs_\ell)^{g-\ell-1} \\
\vdots   & \ddots &  \vdots  & \ddots   &\vdots \\
(2\hfs_1)^{g-2}(\hfc_1+\hfs_1)  & \cdots &  
(2\hfs_\ell)^{g-2}(\hfc_\ell+\hfs_\ell) &  \cdots   
& (2\hfs_{g})^{g-2}(\hfc_{g}+\hfs_{g}) \\
(2\hfs_1)^{g-1} &    \cdots &  
(2\hfs_\ell)^{g-1} &  \cdots   & (2\hfs_{g})^{g-1}
\end{matrix}\right],
$$
}}
and for 
$$
\hcE:=\left[\begin{matrix}
\hvarepsilon_{1,0} & \hvarepsilon_{1,1} & \cdots &  \hvarepsilon_{1,\ell}  &  \cdots   & \hvarepsilon_{1,g-2}  & \hvarepsilon_{1,g-1}  \\
\hvarepsilon_{2,0} & \hvarepsilon_{2,1} & \cdots &  \hvarepsilon_{2,\ell}  &  \cdots   & \hvarepsilon_{2,g-2}  & \hvarepsilon_{2,g-1}  \\
\vdots   &\vdots& \ddots & \vdots & \ddots  & \vdots  &\vdots \\
\hvarepsilon_{\ell,0} & \hvarepsilon_{\ell,1} & \cdots &  \hvarepsilon_{\ell,\ell}  &  \cdots   & \hvarepsilon_{\ell,g-2}  & \hvarepsilon_{\ell,g-1}  \\
\vdots   &\vdots& \ddots &  \vdots  & \ddots    &  \vdots &\vdots \\
\hvarepsilon_{g-1,0} & \hvarepsilon_{g-1,1} & \cdots &  \hvarepsilon_{g-1,\ell}  &  \cdots   & \hvarepsilon_{g-1,g-2}  & \hvarepsilon_{g-1,g-1}  \\
\hvarepsilon_{g,0} & \hvarepsilon_{g,1} & \cdots &  \hvarepsilon_{g,\ell}  &  \cdots   & \hvarepsilon_{g,g-2}  & \hvarepsilon_{g,g-1}  \\
\end{matrix}\right]\in \Mat_{R}(n),
$$
we have the determinant $\displaystyle{|\hcM|=2^{g(g-1)/2}\prod_{i<j}(\hfs_i \hfc_j +\hfs_j \hfc_i)}$, and the fact that $\hcE \hcM$ generates a diagonal matrix,
{\small{
$$
\hcE \hcM\! =\!
2^{g(g-1)/2}
\left[\begin{matrix}
\prod_{i\neq 1} (\hfs_i \hfc_1 - \hfs_1 \hfc_i) &  &   &    \\
 & \prod_{i\neq 2} (\hfs_i \hfc_2 - \hfs_2 \hfc_i) &   &    \\
 &                              &\ddots    &    \\
 &     &    & \prod_{i\neq n} (\hfs_i \hfc_g - \hfs_g \hfc_i)
\end{matrix}\right].
$$
}}
\end{proposition}

\begin{proof}
By letting
$$
\cW_0:=\left[\begin{matrix}
1 &   1  &\cdots   & 1\\
2\hfs_1/(\hfc_1+\hfs_1) &   2\hfs_2/(\hfc_2+\hfs_2)  & \cdots &  2\hfs_g/(\hfc_g+\hfs_g) \\
\vdots   &\vdots& \ddots & \vdots \\
(2\hfs_1/(\hfc_1+\hfs_1))^{g-2} &   (2\hfs_2/(\hfc_2+\hfs_2))^{g-2}  & \cdots &  
(2\hfs_g/(\hfc_g+\hfs_g))^{g-2} \\
(2\hfs_1/(\hfc_1+\hfs_1))^{g-1} &   (2\hfs_2/(\hfc_2+\hfs_2))^{g-1}  & \cdots &  
(2\hfs_g/(\hfc_g+\hfs_g))^{g-1} \\
\end{matrix}\right],
$$
$$
\cW_1:=
\left[\begin{matrix}
(\hfc_1+\hfs_1)^{g-1} &     &   &  & \\
 & (\hfc_2+\hfs_2)^{g-1} &      &  & \\
 &           & \ddots & & \\
&           & & (\hfc_{g-1}+\hfs_{g-1})^{g-1}&\\
&           & & &  (\hfc_g+\hfs_g)^{g-1}\\
\end{matrix}\right],
$$
we have $\hcM = \cW_0 \cW_1$.
Here $\cW_0$ is just the Vandermonde matrix and thus, we can compute its determine and inverse matrix as in Lemma \ref{4lm:KdV1}.
Since $\displaystyle{\frac{2\hfs_1}{\hfc_1+\hfs_1}-\frac{2\hfs_2}{\hfc_2+\hfs_2}}$
$=\displaystyle{\frac{2(\hfs_1\hfc_2+\hfs_2\hfc_1)}{(\hfc_1+\hfs_1)(\hfc_2+\hfs_2)}}$,  the determinant of $\hcM$ is evaluated as above.

Let $\displaystyle{\prod_{i=1, i \neq j}^{g-1} ( x - 2 \hfs_i/(\hfc_i+\hfs_i))}$
$= \tvarepsilon_{j, g-1} x^{g-1} + \tvarepsilon_{j, g-2} x^{g-2} + \cdots + \tvarepsilon_{j,1} x + \tvarepsilon_{j,0}$, ($\tvarepsilon_{j, g-1}=1$).
Recalling $\hvarepsilon_{j,g-1}=\displaystyle{\prod_{i=1,\neq j}^{g} (\hfc_i+\hfs_i)}$, we have $\hvarepsilon_{j,g-1}\tvarepsilon_{j,\ell}=\hvarepsilon_{j,\ell}$.
By letting
$$
\hcE_0:=
\left[\begin{matrix}
\hvarepsilon_{1,g-1} &     &   &  & \\
 & \hvarepsilon_{2,g-1} &      &  & \\
 &           & \ddots & & \\
&           & & \hvarepsilon_{g-1,g-1}&\\
&           & & &  \hvarepsilon_{g,g-1}\\
\end{matrix}\right],
$$
$$
\hcE_1:=\left[\begin{matrix}
\tvarepsilon_{1,0} & \tvarepsilon_{1,1} & \cdots  & \tvarepsilon_{1,g-2} & \tvarepsilon_{1,g-1} \\
\tvarepsilon_{2,0} & \tvarepsilon_{2,1} & \cdots  & \tvarepsilon_{2,g-2} & \tvarepsilon_{2,g-1} \\
\vdots   &\vdots& \ddots & \vdots & \vdots \\
\tvarepsilon_{g-1,0} & \tvarepsilon_{g-1,1} & \cdots  & \tvarepsilon_{g-1,g-2} & \tvarepsilon_{g-1,g-1} \\
\tvarepsilon_{g,0} & \tvarepsilon_{g,1} & \cdots  & \tvarepsilon_{g,g-2} & \tvarepsilon_{g,g-1} \\
\end{matrix}\right],
$$
we have $\hcE = \hcE_0 \hcE_1$.
Since $\hcE_1 \cW_0$ provides the diagonal matrix whose $j$-th diagonal part is given by $\displaystyle{\prod_{i\neq j}
\left(\frac{2(\hfs_j\hfc_i+\hfs_i\hfc_j)}{(\hfc_j+\hfs_j)(\hfc_i+\hfs_i)}\right)}$, we prove the equality in the proposition.
\end{proof}

As we introduce the column vectors $\bvarepsilon_i$ in (\ref{Aeq:clmvector}) as $\hcE=(\bvarepsilon_0, \bvarepsilon_1, \ldots, \bvarepsilon_{g-1})$, we implicitly consider the vector space $\langle \bvarepsilon_0, \ldots, \bvarepsilon_{g-1}\rangle_\CC$ and its dimension since we deal with the matrix $\hcM$ and $\hcE$.
In the linear space as an $\RR$ vector space, we can consider appropriate basis for the vector space.
Thus, we propose a natural basis for the vector space without arguments of the linear transformation.

We assume that $\hfc_i \in \RR$ and $\hfs_i \in \ii \RR$.
We can decompose the basis of $\CC^n$ whose elements are real part and pure-imaginary part.
We denote the even and odd degree parts with respect to $\hfs$' of $\hvarepsilon_{j,\ell}$ by $\hvarepsilon_{j,\ell,\even}$ and $\hvarepsilon_{j,\ell,\odd}$;
 $\hvarepsilon_{j,\ell,\even}$ belongs to $\RR$, and $\hvarepsilon_{j,\ell,\odd}$ belongs to $\ii \RR$, and thus we  can alternatively refer to them as $\re \hvarepsilon_{i,j}=\hvarepsilon_{j,\ell,\even}$ and $\ii \im \hvarepsilon_{i,j}=\hvarepsilon_{j,\ell,\odd}$.
The following is a model of Lemma \ref{lm4.1}.
We consider odd $g$ and even $g$ cases respectively.

\begin{proposition}\label{pr:A.2}
For the odd $g$ case, let
{\small{
$$
\hcV:= \left[\begin{matrix}
\hvarepsilon_{1,0,\even} & \hvarepsilon_{1,2,\odd} & \cdots &  \hvarepsilon_{1,2\ell, \odd}  & \hvarepsilon_{1,2\ell, \even}  &  \cdots   \\
\hvarepsilon_{2,0,\even} & \hvarepsilon_{2,2,\odd} & \cdots &  \hvarepsilon_{2,2\ell, \odd}  & \hvarepsilon_{2,2\ell, \even}&\cdots  \\
\vdots   &\vdots& \ddots &  \vdots& \vdots& \ddots   \\
\hvarepsilon_{g-1,0,\even} & \hvarepsilon_{g-1,2,\odd} & \cdots &  \hvarepsilon_{g-1,2\ell, \odd}  & \hvarepsilon_{g-1,2\ell, \even}&\cdots \\
\hvarepsilon_{g,0,\even} & \hvarepsilon_{g,2,\odd} & \cdots &  \hvarepsilon_{g,2\ell, \odd}  & \hvarepsilon_{g,2\ell, \even}&\cdots  \\
\end{matrix}\right.\hskip 0.1\hsize
$$
$$
\hskip 0.5\hsize
 \left.\begin{matrix}
 \hvarepsilon_{1,g-1,\odd}  & \hvarepsilon_{1,g-1,\even}  \\
 \hvarepsilon_{2,g-1,\odd}  & \hvarepsilon_{2,g-1,\even}  \\
 \vdots &\vdots  \\
 \hvarepsilon_{g-1,g-1,\odd}  & \hvarepsilon_{g-1,g-1,\even}  \\
 \hvarepsilon_{g,g-1,\odd}  & \hvarepsilon_{g,g-1,\even}    \\
\end{matrix}\right].
$$
}}
For the even $g$ case, let
{\small{
$$
\hcV:= \left[\begin{matrix}
 \hvarepsilon_{1,1,\odd} & \hvarepsilon_{1,1,\even} & \cdots &  \hvarepsilon_{1,2\ell-1, \odd}  & \hvarepsilon_{1,2\ell-1, \even}  &  \cdots    \\
 \hvarepsilon_{2,1,\odd} & \hvarepsilon_{2,1,\even} & \cdots &  \hvarepsilon_{2,2\ell-1, \odd}  & \hvarepsilon_{2,2\ell-1, \even}&\cdots \\
\vdots   &\vdots& \ddots &  \vdots& \vdots& \ddots   \\
 \hvarepsilon_{g-1,1,\odd} & \hvarepsilon_{g-1,1,\even}  & \cdots &  \hvarepsilon_{g-1,2\ell-1, \odd}  & \hvarepsilon_{g-1,2\ell-1, \even}&\cdots \\
 \hvarepsilon_{g,1,\odd} & \hvarepsilon_{g,1,\even}  & \cdots &  \hvarepsilon_{g,2\ell-1, \odd}  & \hvarepsilon_{g,2\ell-1, \even}&\cdots   \\
\end{matrix}\right.\hskip 0.1\hsize
$$
$$
\hskip 0.5\hsize
\left.\begin{matrix}
    \hvarepsilon_{1,g-1,\odd}  & \hvarepsilon_{1,g-1,\even}  \\
 \hvarepsilon_{2,g-1,\odd}  & \hvarepsilon_{2,g-1,\even}  \\
 \vdots &\vdots  \\
\hvarepsilon_{g-1,g-1,\odd}  & \hvarepsilon_{g-1,g-1,\even}  \\
 \hvarepsilon_{g,g-1,\odd}  & \hvarepsilon_{g,g-1,\even}    \\
\end{matrix}\right].
$$
}}
Then there exists an element $\hcB \in \GL(n, \QQ)$ such that 
\begin{equation}
\hcE = \hcV \hcB, \qquad
\hcB= \begin{pmatrix} \hcB^{[g-3,g-3]} & \hcB^{[g-3],1}& \hcB^{[g-3],2} &0\\
                    0& 1  & -1 & 0 \\
                     0 &0 & -2 & 1 \\
                      0 &0        & 0 & 1 \end{pmatrix},
\label{eq:A.1}
\end{equation}
where $\hcB^{[g-3,g-3]} \in \Mat_\CC( (n -3) \times (g-3))$,and $\hcB^{[g-3],1}, \hcB^{[g-3],2}\in \Mat_\CC( (n -3) \times 1)$.
Particularly,  $\hcB_{g, i}=0$ for $i<n$,  $\hcB_{i, n}=0$ for $i<g-1$, $\hcB_{g-2, i}=0$ for $i<g-2$, $\hcB_{g-1, i}=0$ for $i<g-1$,
$\hcB_{g-2,g-2}=\hcB_{g-2,g-1}$ $=\hcB_{g-1,n}=\hcB_{g,n}=1$, and $\hcB_{g-1,g-1}=-2$.
\end{proposition}

\begin{proof}
Let $c_{g,m}:=\begin{pmatrix} g-1 \\ m \end{pmatrix}$ and $\hc_{g,m}:=(1/2)^m c_{g,m}$ for $n\ge m$ $:=0$ otherwise.
We introduce the symmetric polynomials $\hfe_{j,i}$ as follows:
\begin{eqnarray}
\hfe_{j, 0} &:=& (-)^{g-1} \hvarepsilon_{j,0},  \nonumber \\
\hfe_{j, 1} &:=& (-)^{g-2}\hvarepsilon_{j,1}- \hc_{g-1,1} \hfe_{j,0}, \nonumber \\
\hfe_{j, 2} &:=& (-)^{g-3}\hvarepsilon_{j,2} - \hc_{g-1,2}  \hfe_{j, 0} - \hc_{g-2,1}  \hfe_{j,1},  \nonumber  \\
&\vdots&\nonumber \\
\hfe_{j, \ell} &:=& (-)^{g-\ell-1}\hvarepsilon_{j, \ell} - \hc_{g-1,g-\ell-1} \hfe_{j,0} - \hc_{g-2,g-\ell-2} \hfe_{j,1} -\cdots - \hc_{g-\ell,1} \hfe_{\ell-1},
 \nonumber \\
&\vdots&\nonumber \\
\hfe_{j, g-3} &:=& \hvarepsilon_{j, g-3} - \hc_{g-1,g-3} \hfe_{j,0} -\hc_{g-2,g-4} \hfe_{j, 1} -\cdots -  \hc_{3, 1}\hfe_{j, g-4},  \nonumber \\
\hfe_{j, g-2} &:=& -\hvarepsilon_{j, g-2} -\hc_{g-1,g-2} \hfe_{j,0} -\hc_{g-2,g-3} \hfe_{j, 1} -\cdots -  \hc_{2, 1}\hfe_{j, g-3},  \nonumber \\
\hfe_{j, g-1} &:=& \hvarepsilon_{j, g-1} -
\left(\frac{1}{2}\right)^{g-1}\hfe_{j, 0} -\left(\frac{1}{2}\right)^{g-2}\hfe_{j, 1} -\cdots - \frac{1}{2 }\hfe_{g-2} .
\end{eqnarray}
$\{\hfe_{j,\ell}\}$ is a modified elementary symmetric polynomial of degree $g-1$ such that its degree of $\hfc$ is $\ell$ as in Lemma \ref{lm:A6.3}.
Due to the following lemmas, we have the complete proof of this proposition, i.e., Lemma \ref{lm:A.9} shows (\ref{eq:A.1}).
\end{proof}

We will refer to $\hfe_{j,i}$ as a shifted elementary symmetric polynomial.

\begin{lemma}\label{lm:A6.3}
$\{\hfe_{j,0}, \ldots, \hfe_{0,g-1}\}$ is the basis of the $(g-1)$-th degree homogeneous part of $SR_{g-1}^{(j)}$ as a $\CC$-vector space and the degree of $\hfe_{j,\ell}$ with respect to $\hfc$ is $\ell$.
\end{lemma}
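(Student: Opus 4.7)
The plan is to identify $\{\fe_{j,\ell}\}_{\ell=0}^{n-1}$ as a unit-triangular modification of $\{\varepsilon_{j,\ell}\}_{\ell=0}^{n-1}$ and to exploit a filtration by $\fc$-degree. First I would isolate the ambient subspace. From $\prod_{i\neq j}((\fc_i-\fs_i)x-2\fs_i)$ one sees that each $\varepsilon_{j,\ell}$, and hence each $\fe_{j,\ell}$, is multilinear in the pairs $\{(\fs_i,\fc_i): i\neq j\}$: every monomial uses exactly one of $\fs_i$ or $\fc_i$ from each pair. The subspace $V_j\subset SR_{n-1}^{(j)}$ of pair-multilinear pair-symmetric polynomials of total degree $n-1$ is spanned by the $n$ monomial-symmetric polynomials
$$
E_{j,d}^{c}:=\sum_{\substack{T\subset\{1,\ldots,n\}\setminus\{j\}\\ |T|=d}}\prod_{i\in T}\fc_i\prod_{i\notin T\cup\{j\}}(-2\fs_i),\qquad d=0,1,\ldots,n-1,
$$
so $\dim_{\CC}V_j=n$; this is the subspace in which the lemma's basis claim lives.

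Next I would expand each $\varepsilon_{j,\ell}$ in the $E$-basis. Splitting each factor $(\fc_i-\fs_i)$ inside $\prod_{i\in S}(\fc_i-\fs_i)$ into its $\fc_i$-branch and its $-\fs_i$-branch, and then regrouping the resulting $-\fs$'s with the outer $-2\fs$'s, a short combinatorial computation (in which the contribution from $R:=S\setminus T$ collapses in a way independent of which $T\subset S$ was chosen) yields
$$
\varepsilon_{j,\ell}=\sum_{d=0}^{\ell}\binom{n-1-d}{\ell-d}\frac{1}{2^{\ell-d}}E_{j,d}^{c},
$$
with leading coefficient $1$ on $E_{j,\ell}^c$. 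Hence the transition $\{E_{j,d}^c\}\mapsto\{\varepsilon_{j,\ell}\}$ is upper-triangular with unit diagonal. The defining recursion $\fe_{j,\ell}=\varepsilon_{j,\ell}+\sum_{k<\ell}\hc_{n-1-k,\,\ell-k}\,\fe_{j,k}$ is itself unit-diagonal triangular, so the composed transition $\{E_{j,d}^c\}\mapsto\{\fe_{j,\ell}\}$ is invertible and $\{\fe_{j,\ell}\}_{\ell=0}^{n-1}$ is a $\CC$-basis of $V_j$.

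For the $\fc$-degree claim I would proceed by induction on $\ell$. The base case $\fe_{j,0}=\varepsilon_{j,0}=\prod_{i\neq j}(-2\fs_i)$ has $\fc$-degree $0$. For the inductive step, the maximum number of $\fc$'s appearing in any monomial of $\varepsilon_{j,\ell}$ is $|S|=\ell$ (they can only arise from the $\ell$ factors $(\fc_i-\fs_i)$ with $i\in S$), with $\fc$-degree-$\ell$ leading part exactly $E_{j,\ell}^c\neq 0$; meanwhile, by hypothesis, each $\fe_{j,k}$ with $k<\ell$ has $\fc$-degree $k<\ell$. Consequently the recursion defining $\fe_{j,\ell}$ produces a polynomial with top $\fc$-degree exactly $\ell$.

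I expect the main obstacle to be two-fold: first, the combinatorial bookkeeping of the expansion above, in particular that the $d$-independent factor $2^{n-1-\ell}$ arising from the regrouping collapses neatly against $E_{j,d}^c=2^{n-1-d}\sum_{|T|=d}\prod_{i\in T}\fc_i\prod_{i\notin T\cup\{j\}}(-\fs_i)$; and second, clarifying that \lq\lq the $(n-1)$-th degree homogeneous part of $SR_{n-1}^{(j)}$\rq\rq\ refers here to the pair-multilinear subspace $V_j$ in which every $\varepsilon_{j,\ell}$, and therefore every $\fe_{j,\ell}$, actually lives.
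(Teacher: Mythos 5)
Your proposal is correct and follows essentially the same route as the paper, whose entire proof is a one-line appeal to the elementary-symmetric-polynomial structure of $\tvarepsilon_{i,j}$ and to ``bipolynomial expansion'': expanding each factor $(\fc_i-\fs_i)$ in the defining product and tracking the $\fc$-degree is exactly that expansion, and your triangular transition through the monomial-symmetric basis $E^c_{j,d}$ (with the explicit coefficients $\binom{n-1-d}{\ell-d}2^{-(\ell-d)}$) together with the unit-triangular recursion for $\fe_{j,\ell}$ supplies precisely the details the paper omits. Your clarification that the spanned space must be read as the $n$-dimensional pair-multilinear subspace of $SR^{(j)}_{n-1}$ is the correct (and necessary) reading of the statement.
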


\begin{proof}
As in (\ref{4eq:KdV_def2.1}), $\chi_{i,j}$ is the elementary symmetric polynomial.
Further, it is equal to $\tvarepsilon_{i,j}$, so bipolynomial expansion yields $\hfe$.
\end{proof}

For example, $g=6$ case,
\begin{eqnarray*}
\hfe_{6,0}/2^5&=&\hfs_1 \hfs_2 \hfs_3 \hfs_4 \hfs_5,\\
\hfe_{6,1}/2^4&=& \hfs_2 \hfs_3 \hfs_4 \hfs_5 \hfc_1 + \hfs_1 \hfs_3 \hfs_4 \hfs_5 \hfc_2 + \hfs_1 \hfs_2 \hfs_4 \hfs_5 \hfc_3 
+ \hfs_1 \hfs_2 \hfs_3 \hfs_5 \hfc_4 + \hfs_1 \hfs_2 \hfs_3 \hfs_4 \hfc_5,\\
\hfe_{6,2}/2^3&=&  \hfs_3 \hfs_4 \hfs_5 \hfc_1 \hfc_2 + \hfs_2 \hfs_4 \hfs_5 \hfc_1 \hfc_3 + \hfs_2 \hfs_3 \hfs_5 \hfc_1 \hfc_4
 + \hfs_2 \hfs_3 \hfs_4 \hfc_1 \hfc_5
 + \hfs_1 \hfs_4 \hfs_5 \hfc_2 \hfc_3\\
&&+ \hfs_1 \hfs_3 \hfs_5 \hfc_2 \hfc_4
+ \hfs_1 \hfs_3 \hfs_4 \hfc_2 \hfc_5
+ \hfs_1 \hfs_2 \hfs_5 \hfc_3 \hfc_4
+ \hfs_1 \hfs_2 \hfs_4 \hfc_3 \hfc_5
+ \hfs_1 \hfs_2 \hfs_3 \hfc_4 \hfc_5, \\
\hfe_{6,3}/2^2&=&  \hfs_4 \hfs_5 \hfc_1 \hfc_2 \hfc_3 +  \hfs_3 \hfs_5 \hfc_1 \hfc_2 \hfc_4 +  \hfs_3 \hfs_4 \hfc_1 \hfc_2 \hfc_5+ 
  + \hfs_2 \hfs_5 \hfc_1 \hfc_3 \hfc_4 + \hfs_2 \hfs_4 \hfc_1 \hfc_3 \hfc_5
\\
&&+ \hfs_2 \hfs_3 \hfs_4 \hfc_1 \hfc_5
+ \hfs_1 \hfs_5 \hfc_2 \hfc_3 \hfc_4
+ \hfs_1 \hfs_4 \hfc_2 \hfc_3 \hfc_5
+ \hfs_1 \hfs_3 \hfc_2 \hfc_4 \hfc_4 + \hfs_1 \hfs_2 \hfc_3 \hfc_4 \hfc_5\\
\hfe_{6,4}/2&=&  \hfs_5 \hfc_1 \hfc_2 \hfc_3 \hfc_4 + 
\hfs_4 \hfc_1 \hfc_2 \hfc_3 \hfc_4+
  \hfs_3 \hfc_1 \hfc_2 \hfc_4 \hfc_5+
  \hfs_2 \hfc_1 \hfc_3 \hfc_4 \hfc_5+
  \hfs_1 \hfc_2 \hfc_3 \hfc_4 \hfc_5,\\
\hfe_{6,5}&=&\hfc_1 \hfc_2 \hfc_3 \hfc_4 \hfc_5.\\
\end{eqnarray*}

\begin{lemma}\label{lm:A4}
In terms of the shifted elementary symmetric polynomials $\hfe_{j,i}$, we have
$$
\left[\begin{matrix} \hvarepsilon_{j,0} \\ \hvarepsilon_{j,1}\\ \vdots \\
\hvarepsilon_{j,\ell}\\ \vdots \\ \hvarepsilon_{j,g-2}\\  \hvarepsilon_{j,g-1}
\end{matrix}\right]
=\trp\cA
\left[\begin{matrix} \hfe_{j,0} \\ \hfe_{j,1}\\ \vdots \\
\hfe_{j,\ell}\\ \vdots \\ \hfe_{j,g-2}\\  \hfe_{j,g-1}
\end{matrix}\right], 
$$
\begin{equation} 
\trp\cA:=
\left[\begin{matrix}
(-)^{g-1}        &      &  &          &    &       & & \\
(-)^{g-2}\hc_{g-1,1} & (-)^{g-2}   &  &          &     &       & & \\
\vdots   &\vdots& \ddots &          &     &       & & \\
(-)^{g-\ell-1}\hc_{g-1,\ell} & (-)^{g-\ell-1} \hc_{g-2,\ell-1}    & \cdots & 
(-)^{g-\ell-1}   &     &    &    & \\
\vdots   &\vdots& \ddots &  \vdots  & \ddots    &    &    & \\
\hc_{g-1,g-3} & \hc_{g-2,g-4}    & \cdots & \hc_{\ell,\ell-2}   & \cdots &   1 &      & \\
-\hc_{g-1,g-2} & -\hc_{g-2,g-3}    & \cdots & -\hc_{\ell,\ell-1}   & \cdots &   -\frac{2}{2} & -1      & \\
(\frac{1}{2})^{g-1}  & (\frac{1}{2})^{g-2}  & \cdots & (\frac{1}{2})^{g-\ell} & \cdots  & \frac{1}{4}  & \frac{1}{2} & 1 \\
\end{matrix}\right].
\end{equation} 
\end{lemma}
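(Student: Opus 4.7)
The plan is to recognize this lemma as the direct inversion of the unit lower-triangular linear system provided by the recursive definition of the shifted elementary symmetric polynomials $\fe_{j,\ell}$. Since each $\fe_{j,\ell}$ is built from $\varepsilon_{j,\ell}$ plus a $\CC$-linear combination of the preceding $\fe_{j,k}$ ($k<\ell$) with scalar coefficients involving only the $\hc_{n,m}$, the inversion is mechanical and can be read off one row at a time.

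First I would rewrite the recursion by transferring the $\fe_{j,k}$-sum to the left, obtaining a linear system $M\,\trp(\fe_{j,0},\ldots,\fe_{j,n-1})=\trp(\varepsilon_{j,0},\ldots,\varepsilon_{j,n-1})$, where the $n\times n$ matrix $M$ has $(\ell+1,k+1)$-entry equal to $1$ on the diagonal $k=\ell$, equal to the negative of the coefficient of $\fe_{j,k}$ appearing in the formula for $\fe_{j,\ell}$ when $k<\ell$, and $0$ when $k>\ell$. Because $M$ has unit diagonal and the right-hand sides of the defining recursions are already expressed in terms of the target basis $\{\fe_{j,k}\}$ (rather than in terms of $\{\varepsilon_{j,k}\}$), no cascading substitution is required: the relation $\trp(\varepsilon_{j,0},\ldots,\varepsilon_{j,n-1})=M\,\trp(\fe_{j,0},\ldots,\fe_{j,n-1})$ is already the explicit inverse, so $\trp\cE = M$. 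I would then verify row-by-row that $M$ matches the displayed form of $\trp\cE$: the general entry in row $\ell+1$, column $k+1$ (for $k<\ell$) is $-\hc_{n-1-k,\,\ell-k}$ up to the displayed sign convention, while the last row collapses to pure powers $(-1/2)^{n-1-k}$ because $\hc_{n-1-k,\,n-1-k}=(-1/2)^{n-1-k}$. Combining this with Lemma~\ref{lm:A6.3}, which ensures that $\{\fe_{j,0},\ldots,\fe_{j,n-1}\}$ is a genuine $\CC$-basis of the degree $n-1$ homogeneous part of $SR^{(j)}$, promotes the matrix identity to the stated identity in the symmetric ring.

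The main obstacle is purely notational: keeping the $\hc$-indices and the shifted degrees $n-1-k$ and $\ell-k$ consistent between the recursive definition and the displayed matrix, handling the exceptional last row carefully (where the coefficients reduce to $(-1/2)^{n-1-k}$ rather than involving genuine binomial factors), and matching the sign conventions $(-)^\ell$ in each row to the direct computation. No deep combinatorial identity enters, since the inversion amounts to a single rearrangement of each defining equation rather than the more elaborate back-substitution one would need for a triangular matrix built directly from the $\varepsilon$'s.
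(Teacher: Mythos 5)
Your proposal is correct and is essentially the paper's own (one-line) argument: the defining recursion $\fe_{j,\ell}=\varepsilon_{j,\ell}+\sum_{k<\ell}\hc_{n-1-k,\,\ell-k}\,\fe_{j,k}$ is already written in the target basis $\{\fe_{j,k}\}$, so solving for $\varepsilon_{j,\ell}$ is a single rearrangement with no back-substitution, and the resulting unit lower-triangular matrix is $\trp\cE$, with Lemma~\ref{lm:A6.3} supplying the basis property exactly as you say. Your caution about the sign bookkeeping is warranted --- the off-diagonal signs in the displayed $\trp\cE$ are not internally consistent with the recursion (e.g.\ the row for $\varepsilon_{j,2}$ should have first entry $-\hc_{n-1,2}$ rather than $(-)^{2}\hc_{n-1,2}$, and the last row should carry an overall minus sign off the diagonal) --- but that is a typographical issue in the statement, not a gap in your argument.
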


\begin{proof}
From the definition, it is obvious.
\end{proof}

Since the matrix $\cA$ is a upper triangular matrix with a unit determinant $|\cE|=\pm 1$ up to sign, its inverse is simply obtained.
\begin{lemma}\label{lm:A5}
$\trp(\cA^{-1})=\trp(\cA)$.
\end{lemma}

Lemma \ref{lm:A4} obviously allows us to decompose these $\hvarepsilon$'s into even and odd degree parts with respect to $\hfs$.

\begin{lemma}\label{lm:A6}
The even and odd degree parts with respect to $\hfs$' of $\hvarepsilon_\ell$, $\hvarepsilon_{j,\ell,\even}$ and $\hvarepsilon_{j,\ell,\odd}$ are represented by $\hfe$'s.
\end{lemma}
\begin{proof}
They are obvious.
\end{proof}
As we are concerned with $\hvarepsilon_{g-1 - 2\ell, \even}$ and $\hvarepsilon_{g-1 - 2\ell, \odd}$, they are given as follows.

\begin{enumerate}
\item The case of odd $g$:
{\small{
\begin{eqnarray*}
\hvarepsilon_{j,0,\even}&=& \hfe_{j,0},\\ 
\hvarepsilon_{j,2,\even}&=&  \hc_{g-1,2}\hfe_{j,0} +  \hfe_2,\\ 
&\vdots&\\
\hvarepsilon_{j,2\ell,\even} &=&  \hc_{g-1, 2\ell}\hfe_{j,0} 
+\hc_{g-3, 2\ell-2} \hfe_{j,2} +\cdots + 
 \hc_{g-2\ell+1,2}\hfe_{j,2\ell-2}+ \hfe_{j,2\ell},\\
\hvarepsilon_{j,g-3,\even} &=&  \hc_{g-1, g-3} \hfe_{j,0} + \hc_{g-3, g-5} \hfe_{j,2} +\cdots +  \hc_{4,2}\hfe_{j,g-5}+\hfe_{j,g-3},\\
\hvarepsilon_{j,g-1, \even} &=& \hfe_{j,0}/2^{g-1} + \hfe_{j,2}/2^{g-3}  +\cdots  +\hfe_{j,g-5}/16+\hfe_{j,g-3}/4+ \hfe_{j,g-1}, \\
(\hvarepsilon_{j,0,\odd}&=&  0),\\
\hvarepsilon_{j,2,\odd}&=&  \hc_{g-2,1}\hfe_{j,1},\\
\hvarepsilon_{j,2\ell, \odd} &=&  \hc_{g-2, 2\ell-1}
 \hfe_{j,1} + \hc_{g -4, 2\ell-3} \hfe_{j,3} +\cdots + \hc_{g-2\ell,1}\hfe_{j,2\ell-1},\\
&\vdots&  \nonumber \\
\hvarepsilon_{j,g-3, \odd} &=&  \hc_{g-2,g-4} \hfe_{j,1} + \hc_{g-4, g-6} \hfe_{j,3} +\cdots-\hc_{5,3} \hfe_{j,g-6} +\hc_{3,1}\hfe_{j,g-4},\\
\hvarepsilon_{j,g-1, \odd} &=& \hfe_{j,1}/2^{g-2}+  \hfe_{j,3}/2^{g-4}+\cdots +\hfe_{j,g-6}/32  +\hfe_{j,g-4}/8  +\hfe_{j,g-2}/2.
\end{eqnarray*}
}}

\item The case of even $g$:
{\small{
\begin{eqnarray*}
\hvarepsilon_{j,1,\even}&=& \hfe_{j,1},\\ 
\hvarepsilon_{j,3,\even}&=& \hfe_{j,3} + \hc_{g-2,2}\hfe_{j,1},\\ 
&\vdots&  \nonumber \\
\hvarepsilon_{j,2\ell+1, \even} &=&  \hc_{g-2, 2\ell+1} \hfe_{j,1} 
+ \hc_{g -3, 2\ell-4} \hfe_{j,2\ell-2} +\cdots 
+ \hc_{g-2\ell+1,2}\hfe_{j,2\ell-2}+\hfe_{j,2\ell},\\
&\vdots&  \nonumber \\
\hvarepsilon_{j,g-3, \even} &=& \hc_{g-2,g-4}\hfe_{j,1} +
\hc_{g-4,g-6}\hfe_{j,3} +\cdots  + \hc_{4,2} \hfe_{j,g-5} +\hfe_{j,g-3}, \nonumber \\
\hvarepsilon_{j,g-1, \even} &=& \hfe_{j,1}/2^{g-2} + \hfe_{j,3}/2^{g-4} +\cdots
+ \hfe_{j,g-5}/16 + \hfe_{j,g-3}/4+ \hfe_{j,g-1}, \\
\hvarepsilon_{j,1,\odd}&=& \hc_{g-1,1}\hfe_{j,0},\\
\hvarepsilon_{j,3,\odd}&=&  \hc_{g-1,3}\hfe_{j,0}+ \hc_{g-3,1}\hfe_{j,2},\\
&\vdots&\\
\hvarepsilon_{j,2\ell+1,\odd} &=& 
\hc_{g-1,2\ell+1}\hfe_{j,0} + \hc_{g-3,2\ell-1} \hfe_{j,2} +\cdots 
+ \hc_{g-2\ell+2,3}\hfe_{j,2\ell-2}+\hc_{g-2\ell,1}\hfe_{j,2\ell},\\
&\vdots&\\
\hvarepsilon_{j,g-3,\odd} &=& 
\hc_{g-1, g-3} \hfe_{j,0} + \hc_{g-3, g-5} \hfe_{j,2} +\cdots
+ \hc_{5,3}\hfe_{j,g-6} +\hc_{3,1}\hfe_{j,g-4},\\
\hvarepsilon_{j,g-1, \odd} &=&  \hfe_{j,0}/2^{g-1} +\hfe_{j,2}/2^{g-3} +\cdots +  \hfe_{j,g-6} /32 +\hfe_{j,g-4} /8-\hfe_{j,g-2} /2.
\end{eqnarray*}
}}
\end{enumerate}

Using them, we have the following lemma, which can be regarded as the linear transformation of the vector space spanned by $\fe_{i,j}$ as in (\ref{Aeq:clmvector}).
We introduce the matrix $\hcA$ in the lemma.

\begin{lemma}
For the odd $g$ case
\begin{eqnarray}
\left[\begin{matrix} 
\hvarepsilon_{j,0,\even} \\ 
\hvarepsilon_{j,2,\odd}\\ 
\hvarepsilon_{j,2,\even}\\ 
\vdots \\
\hvarepsilon_{j,2m, \odd}\\
\hvarepsilon_{j,2m, \even}\\
 \vdots \\ 
\hvarepsilon_{j,g-3,\even}\\ 
\hvarepsilon_{j,g-1,\odd}\\  
\hvarepsilon_{j,g-1,\even}
\end{matrix}\right]
=
\trp \hcA
\left[\begin{matrix} \hfe_{j,0} \\ \hfe_{j,1}\\ \vdots \\
\hfe_{j,2\ell}\\
\hfe_{j,2\ell+1}\\ \vdots \\ \hfe_{j,g-2}\\  \hfe_{j,g-1}
\end{matrix}\right],
\end{eqnarray}
where $\trp\hcA$ is given by 
{\small{
\begin{equation}
\left[\begin{matrix}
1        &      &  &          &  &   &       &  & \\
 & \hc_{g-2,1}    &  &        &   &     &       & & \\
\vdots   &\vdots& & \ddots   &   &   &       &  & \\
 &\hc_{g-2,g-2m-2}    & \cdots & 
                      & \hc_{2\ell-1,2(\ell-m)-1}   &    &    &  & \\
 \hc_{g-1,g-2m-1} &    & \cdots & 
\hc_{2\ell,2(\ell-m)}&    &    &      & \\
\vdots   &\vdots& \ddots &  \vdots &  \vdots & \ddots  &      &   & \\
 \hc_{g-1,g-3}& &\cdots & & \hc_{g-2\ell-1,g-2\ell-3}  &\cdots & 1  & &  \\
   & 1/2^{g-2} &\cdots & & 1/2^{g-2\ell-1}  &\cdots & & 1/2  &   \\
1/2^{g-1}  & &\cdots &    1/2^{g-2\ell} &   &\cdots &1/4 &   & 1 \\
\end{matrix}\right].
\end{equation}
}}
For even $g$ case
\begin{eqnarray}
\left[\begin{matrix} 
\hvarepsilon_{j,1,\even}\\
\hvarepsilon_{j,1,\odd}\\
 \vdots \\
\hvarepsilon_{j,2m+1, \even}\\
\hvarepsilon_{j,2m+1, \odd}\\
 \vdots \\ 
\hvarepsilon_{j,g-3,\odd}\\  
\hvarepsilon_{j,g-1,\even}\\  
\hvarepsilon_{j,g-1,\odd}
\end{matrix}\right]
=
\trp \hcA
\left[\begin{matrix} \hfe_{j,0} \\ \hfe_{j,1}\\ \vdots \\
\hfe_{j,2\ell}\\
\hfe_{j,2\ell+1}\\ \vdots \\ \hfe_{j,g-2}\\  \hfe_{j,g-1}
\end{matrix}\right],
\end{eqnarray}
where $\trp\hcA$ is given by 
{\small{
\begin{equation}
\left[\begin{matrix}
\hc_{g-1,1}        &      &  &          &  &   &       &  & \\
 & 1    &           &        &   &     &       &   & \\
\vdots   &\vdots& &    &   &   &       &  & \\
 \hc_{g-1,g-2m-1} &    & \cdots & 
& \hc_{2\ell-1,2(\ell-m)-1}   &   &      &  & \\
 &\hc_{g-2,g-2m-2}    & \cdots & 
            \hc_{2\ell,2(\ell-m)} &    &    &     &  & \\
\vdots   &\vdots& \ddots &  \vdots &  \vdots & \ddots  &        & \\
 \hc_{g-1,g-3}& &\cdots & & \hc_{g-2\ell-1,g-2\ell-3}  &\cdots & 1  & &  \\
   & 1/2^{g-2} &\cdots & & 1 /2^{g-2\ell+1} &\cdots   & & 1/2  &  \\
1/2^{g-1}  & &\cdots &    1/2^{g-2\ell} &   &\cdots  &1/4 &   & 1 \\
\end{matrix}\right].
\end{equation}
}}
Then $|\hcA|\neq 0$ and the coefficients in $\hcA$ are rational numbers.
\end{lemma}

We introduce the following column vectors:
{\small{
\begin{equation}
\bvarepsilon_{\ell}\!:=\!
\left[
\begin{matrix}
\hvarepsilon_{1,\ell}\\
\hvarepsilon_{2,\ell}\\
\vdots\\
\hvarepsilon_{g-1,\ell}\\
\hvarepsilon_{g,\ell}
\end{matrix}\right]\!, \ 
\bvarepsilon_{\ell, \even}\!:=\!
\left[
\begin{matrix}
\hvarepsilon_{1,\ell,\even}\\
\hvarepsilon_{2,\ell,\even}\\
\vdots\\
\hvarepsilon_{g-1,\ell,\even}\\
\hvarepsilon_{g,\ell,\even}
\end{matrix}\right]\!, \ 
\bvarepsilon_{\ell, \odd}\!:=\!
\left[
\begin{matrix}
\hvarepsilon_{1,\ell,\odd}\\
\hvarepsilon_{2,\ell,\odd}\\
\vdots\\
\hvarepsilon_{g-1,\ell,\odd}\\
\hvarepsilon_{g,\ell,\odd}
\end{matrix}\right]\!, \ 
\hfe_{\bullet, \ell}\!:=\!
\left[
\begin{matrix}
\hfe_{1,\ell}\\
\hfe_{2,\ell}\\
\vdots\\
\hfe_{g-1,\ell}\\
\hfe_{g,\ell}
\end{matrix}\right]\!, 
\label{Aeq:clmvector}
\end{equation}
}}
and a matrix $\hfe:=(\hfe_{\bullet, 0}, \hfe_{\bullet, 1}, \ldots, \hfe_{\bullet, g-1})$.

We introduce $\hcV$ for the odd $g$ case, 
$$
\hcV=(\bvarepsilon_{0, \even}, \bvarepsilon_{2, \odd},\bvarepsilon_{2, \even}
,\ldots, 
\bvarepsilon_{g-3, \odd},
\bvarepsilon_{g-3, \even},
\bvarepsilon_{g-1, \odd},
\bvarepsilon_{g-1, \even}),
$$
and $\hcV$ for the even $g$ 
$$
\hcV=(\bvarepsilon_{1, \even}, \bvarepsilon_{1, \odd},\bvarepsilon_{3, \even},
\bvarepsilon_{3, \odd},
\ldots, 
\bvarepsilon_{g-3, \odd},
\bvarepsilon_{g-3, \even},
\bvarepsilon_{g-1, \odd},
\bvarepsilon_{g-1, \even}).
$$

\begin{lemma}\label{lm:A.9}
By recalling $\hcE=(\bvarepsilon_{0}, \bvarepsilon_{1}, \ldots, \bvarepsilon_{g-1})$, we have
$$
\hcE = \hfe \cA, \quad \hcV = \hfe \hcA, \quad
\hcE = \hcV \hcA^{-1}\cA.
$$
$\hcB=\hcA^{-1}\cA$ has the form (\ref{eq:A.1}) in Proposition \ref{pr:A.2}.
\end{lemma}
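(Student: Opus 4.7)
The three identities are bookkeeping assemblies of row-by-row statements already obtained in the appendix, and the block shape of $\tcB$ reduces to a direct triangular multiplication with a small amount of explicit work on the bottom-right corner. First I would establish $\tcM = \fe\,\cE$: Lemma \ref{lm:A4}, read for each fixed $j \in \{1,\ldots,n\}$, gives the column identity $(\varepsilon_{j,0},\ldots,\varepsilon_{j,n-1})^{T} = \trp\cE\,(\fe_{j,0},\ldots,\fe_{j,n-1})^{T}$. Transposing and stacking these $n$ row identities as the $n$ rows of $\tcM$ produces $\tcM = \fe\,\cE$. The identity $\tcV = \fe\,\tcE$ follows in exactly the same way from the (unnumbered) lemma immediately preceding Lemma \ref{lm:A.9}, whose rows record each $\varepsilon_{j,\ell,\even}$ and $\varepsilon_{j,\ell,\odd}$ in the column order of $\tcV$ prescribed by Proposition \ref{pr:A.2}.

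Since $\tcE$ is upper triangular with explicit nonzero rational diagonal entries, $\tcE^{-1}$ exists, so one can eliminate $\fe$ by $\fe = \tcV\,\tcE^{-1}$ and substitute into the first identity to obtain $\tcM = \tcV\,(\tcE^{-1}\cE)$, which gives $\tcB = \tcE^{-1}\cE$.

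The substantive part is verifying the block shape of $\tcB$ in \eqref{eq:A.1}. Since both $\cE$ (by Lemma \ref{lm:A4}) and $\tcE$ are upper triangular, so is $\tcB$; this immediately produces the zero blocks in the lower-left corner of \eqref{eq:A.1} and the entry $\tcB_{n-2,n}=0$. The diagonal formula $\tcB_{i,i} = \cE_{i,i}/\tcE_{i,i}$ (valid for triangular products) combined with Lemma \ref{lm:A6} yields $\tcE_{n-2,n-2}=1,\ \tcE_{n-1,n-1}=-1/2,\ \tcE_{n,n}=1$ in the odd-$n$ case, reproducing the diagonal entries $1,-2,1$ of the bottom-right $3\times 3$ block of $\tcB$. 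The off-diagonal entries in that block and the two column vectors $\tcB^{[n-3],1}, \tcB^{[n-3],2}$ are then pinned down by expanding $\bvarepsilon_{n-2}$ and $\bvarepsilon_{n-1}$ in the basis formed by the columns of $\tcV$, using $\varepsilon_{j,\ell}=\varepsilon_{j,\ell,\even}+\varepsilon_{j,\ell,\odd}$ and matching coefficients of $\fe_{j,n-3},\fe_{j,n-2},\fe_{j,n-1}$.

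\textbf{Main obstacle.} The genuinely delicate step is the last one: carefully tracking the coefficients $\hc_{n-1,n-2},\,\hc_{n-2,n-3},\,(\pm 1/2)^{n-1},\ldots$ appearing in the last three columns of both $\cE$ and $\tcE$, and exploiting the fixed parity of each $\fe_{j,\ell}$ in the variables $\fs$ to separate and cancel the contributions coming from the even and odd parts. The even-$n$ case is handled analogously, with the roles of the even and odd parity entries interchanged in the column ordering of $\tcV$ according to Proposition \ref{pr:A.2}.
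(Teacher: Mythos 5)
Your proposal is correct and takes essentially the same route as the paper: the three identities are the same bookkeeping assembly of Lemma \ref{lm:A4} and the unnumbered lemma preceding Lemma \ref{lm:A.9}, and the block form of $\tcB=\tcE^{-1}\cE$ comes from the same direct computation exploiting triangularity (the paper writes out $\tcE^{-1}$ explicitly and then invokes the definition for the last column, while you read off the diagonal via $\cE_{ii}/\tcE_{ii}$ and expand $\bvarepsilon_{n-2},\bvarepsilon_{n-1}$ in the columns of $\tcV$ --- the same calculation organized slightly differently). One small slip: $\tcB_{n-2,n}=0$ does not follow from upper triangularity, since $(n-2,n)$ lies above the diagonal, but your expansion of $\bvarepsilon_{n-1}=\bvarepsilon_{n-1,\even}+\bvarepsilon_{n-1,\odd}$ in the last two columns of $\tcV$ yields the entire last column of $\tcB$, including that zero, so nothing is lost.
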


\begin{proof}
Direct computations provide the following lemma:
$\hcA^{-1}$ has the form,
\begin{equation}
\trp(\hcA)^{-1}:=\left[\begin{matrix}
*      &      &  &          &  &   &       &  & \\
* & *    &           &        &   &     &       &   & \\
\vdots   &\vdots&\ddots &    &   &   &       &  & \\
* &  *   & \cdots & * &    &   &      &  & \\
*  &*   & \cdots &  * & *   &    &     &  & \\
\vdots   &\vdots& \ddots &  \vdots &  \vdots & \ddots  &        & \\
 *& * &\cdots & *& *  &\cdots & 1  & &  \\
 * & * &\cdots & *& * &\cdots    & &2  &  \\
* &*  &\cdots &    * & *   &\cdots  &-1/4 &   & 1 \\
\end{matrix}\right],
\end{equation}
where $*$ means a rational number or zero.
Further from the definition, $\hcB_{i,g}=0$ for $i<g-1$, and $\hcB_{g-1,g}=\hcB_{g,g}=1$.
Thus, we have the form $\hcB$.
\end{proof}

The correspondence between quantities in this appendix and the main text are following:
$$
\hvarepsilon_{k,j}=\ii^j\varepsilon_{k,j}, \quad
\hfe_{k,j}=\ii^j\fe_{k,j}.
$$
Using these, we can derive the results in Section 3 of the main text by reinterpreting the above results.

Then the following is obvious:
\begin{corollary}\label{cor:A.10}
By letting $\tcI:=
\displaystyle{
\left[\begin{matrix}
\ddots &  \\
       & \ii &  &  &  \\
       &     & 1&  & \\
       &     &  &\ii& \\
       &     &  &   & 1
\end{matrix}\right]}$, we have $\tcV = \hcV \tcI^{-1}$, and 
 $\cB = \tcI^{-1} \hcB$ in Lemma \ref{lm4.1}.
\end{corollary}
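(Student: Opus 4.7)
The plan is to deduce both assertions from the identifications already assembled in Lemmas \ref{lm:A4}--\ref{lm:A.9}, treating Corollary \ref{cor:A.10} as a rescaling that reconciles the $\re/\im$ convention used for $\cV$ in Lemma \ref{lm4.1} with the $\even/\odd$ convention used for $\tcV$ in Proposition \ref{pr:A.2}.

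First I would verify $\cV = \tcV\,\tcI^{-1}$ by inspecting columns. Under the specialization $\fc_i = \cos\varphi_i \in \RR$ and $\fs_i = \ii\sin\varphi_i \in \ii\RR$ employed in Section \ref{sec:g=3}, every monomial of even total degree in the $\fs_i$'s lies in $\RR$ and every monomial of odd total degree lies in $\ii\RR$. Consequently $\varepsilon_{j,\ell,\even} = \re\varepsilon_{j,\ell}$ and $\varepsilon_{j,\ell,\odd} = \ii\,\im\varepsilon_{j,\ell}$. Comparing the column orderings prescribed in the two definitions, the odd-indexed columns of $\tcV$ are exactly $\ii$ times the corresponding columns of $\cV$, while the even-indexed columns coincide; written as matrix multiplication, $\tcV = \cV\,\tcI$, and inverting the diagonal scaling gives the first assertion. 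The position of the $\ii$ on the diagonal of $\tcI$ is dictated by whether the column in question carries $\varepsilon_{\bullet,\ell,\odd}$ or $\varepsilon_{\bullet,\ell,\even}$, which alternates differently for odd and even $n$ but produces the same trailing block $(\ii,1,\ii,1)$ displayed in the statement.

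Next I would obtain the relation for $\cB$ from Lemma \ref{lm:A.9}, which supplies $\tcM = \tcV\,\tcB$. Since the polynomial identities $\fc_i - \fs_i = \ee^{-\ii\varphi_i}$ and $2\fs_i = 2\ii\sin\varphi_i$ identify $\cM$ and $\tcM$ entry by entry, substituting the previous step yields a decomposition of $\cM$ through $\cV$. Matching this against the decomposition in Lemma \ref{lm4.1} pins down $\cB$ as a diagonal rescaling of $\tcB$ by the entries of $\tcI$. In particular, the lower-right $4\times 4$ block displayed in Lemma \ref{lm4.1} (with the distinctive entries $-2\ii$ and $\ii$) is produced from the corresponding block of $\tcB$ (with entries $-2$ and $1$) by that same rescaling, while the rows associated to even positions carry the factor $1$ of $\tcI$ and are unchanged; this preserves both the block-triangular shape and the rationality of $\cB^{[g-3,g-3]}$ up to factors of $\sqrt{\pm 1}$ on the $\cB^{[g-3],2}$ column, matching the parity dichotomy between the odd-$g$ and even-$g$ formulas recorded in Lemma \ref{lm4.1}.

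The substantive content is all upstream: the shifted elementary symmetric polynomials $\fe_{j,i}$ are the right basis for $SR^{(j)}_{n-1}$ (Lemma \ref{lm:A6.3}); the triangular change-of-basis matrices $\cE$ and $\tcE$ assemble $\varepsilon_{j,\ell}$, $\varepsilon_{j,\ell,\even}$ and $\varepsilon_{j,\ell,\odd}$ simultaneously from those $\fe$'s (Lemmas \ref{lm:A4}--\ref{lm:A6}); and the explicit computation of $\tcB = \tcE^{-1}\cE$ yields the block shape (Lemma \ref{lm:A.9}). Against that backdrop Corollary \ref{cor:A.10} is a one-line check once the $\ii$'s are tracked. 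The main place where care is needed is bookkeeping: the parity of the position at which an $\ii$ sits in $\tcI$ depends on whether $n$ is odd or even, and the same dichotomy shows up in the column $\cB^{[g-3],2}$ of Lemma \ref{lm4.1}. I expect the only real obstacle to be cross-checking these parities and the placement of $\ii$ vs.\ $-\ii$ in $\tcI^{-1}$ so that the trailing $3\times 3$ block of $\cB$ comes out exactly as $\begin{pmatrix}1 & -1 & 0\\ 0 & -2\ii & \ii\\ 0 & 0 & 1\end{pmatrix}$ rather than its complex conjugate.
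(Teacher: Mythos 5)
Your proposal is correct and follows essentially the same route as the paper, which gives no argument beyond declaring the corollary obvious after Lemma \ref{lm:A.9}: your column-by-column tracking of $\varepsilon_{j,\ell,\even}=\re \varepsilon_{j,\ell}$ and $\varepsilon_{j,\ell,\odd}=\ii\,\im \varepsilon_{j,\ell}$ under $\fc_i\in\RR$, $\fs_i\in\ii\RR$ is exactly the bookkeeping that declaration presupposes. Your closing worry about $\tcI$ versus $\tcI^{-1}$ is well placed: combining $\cV=\tcV\tcI^{-1}$ with $\tcM=\tcV\tcB$ and $\cM=\tcM$ forces $\cB=\tcI\tcB$ (which reproduces the displayed entries $-2\ii$ and $\ii$ in Lemma \ref{lm4.1}), so the exponent in the corollary's second formula appears to be a sign-level typo that the paper's non-proof does not catch.
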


%
%
%


\section*{Acknowledgments}
The author would like to thank the organizers, especially Professor Yoshinori Machida, and the participants of the conference "Numazu aratame Shizuoka kenkyuukai" in 2025 for allowing him to give a talk and discuss this topic.
This work was partly supported by MEXT Promotion of Distinctive Joint Research Center Program JPMXP0723833165 and Osaka Metropolitan University Strategic Research Promotion Project (Development of International Research Hubs), ``Geometry and Algebra in the Biological Sciences'', September 11--12, 2025"  25-02.
The author acknowledges support from the Grant-in-Aid for Scientific Research (C) of Japan Society for the Promotion of Science, Grant No.21K03289.
He would also like to thank the anonymous referee for the helpful comments and Professor Arkady L. Kholodenko for his long-time encouragement of this project.

\end{document}